\newtheorem{theorem}{Theorem}
\newtheorem{informal}{Informal Theorem}
\newtheorem{corollary}{Corollary}
\newtheorem{lemma}{Lemma}
\newtheorem{proposition}{Proposition}
\newtheorem{definition}{Definition}
\newtheorem{observation}{Observation}
\newtheorem{remark}{Remark}
\newenvironment{prevproof}[2]{\noindent {\em {Proof of {#1}~\ref{#2}:}}}{$\Box$\vskip \belowdisplayskip}
\newcommand{\junk}[1]{}
\newcommand{\poly}{\text{poly}}
\newcommand{\mattnote}[1]{{\color{black}{#1}}}
\newcommand{\notshow}[1]{{}}
\def \obj {{\cal O}}
\DeclareMathOperator{\argmax}{argmax}
\definecolor{MyGray}{rgb}{0.8,0.8,0.8}
\begin{document}
\title{{Understanding Incentives: Mechanism Design becomes Algorithm Design}}
\author {Yang Cai\thanks{Supported by NSF Award CCF-0953960 (CAREER) and CCF-1101491.}\\
EECS, MIT \\
\tt{ycai@csail.mit.edu}
\and
Constantinos Daskalakis\thanks{Supported by a Sloan Foundation Fellowship, a Microsoft Research Faculty Fellowship and NSF Award CCF-0953960 (CAREER) and CCF-1101491.}\\
EECS, MIT \\
\tt{costis@mit.edu}
\and
S. Matthew Weinberg\thanks{Supported by a NSF Graduate Research Fellowship and NSF award CCF-1101491.}\\
EECS, MIT\\
\tt{smw79@mit.edu}
}
\addtocounter{page}{-1}
\maketitle
\begin{abstract}
We provide a computationally efficient black-box reduction from mechanism design to algorithm design in very general settings. Specifically, we give an approximation-preserving reduction from truthfully maximizing \emph{any} objective under \emph{arbitrary} feasibility constraints with \emph{arbitrary} bidder types to (not necessarily truthfully) maximizing the same objective plus virtual welfare (under the same feasibility constraints). Our reduction is based on a fundamentally new approach: we describe a mechanism's behavior indirectly only in terms of the expected value it awards bidders for certain behavior, and never directly access the allocation rule at all. 

Applying our new approach to revenue, we exhibit settings where our reduction holds \emph{both ways}. That is, we also provide an {approximation-sensitive} reduction from (non-truthfully) maximizing virtual welfare to (truthfully) maximizing revenue, and therefore the two problems are computationally equivalent. With this equivalence in hand, we show that both problems are NP-hard to approximate within any polynomial factor, even for a single monotone submodular bidder.

We further demonstrate the applicability of our reduction by providing a truthful mechanism maximizing fractional max-min fairness. This is the first instance of a truthful mechanism that optimizes a non-linear objective.
\end{abstract}
\thispagestyle{empty}

\newpage

\section{Introduction} \label{sec:intro}

{\em Mechanism design} is the problem of optimizing an objective subject to ``rational inputs.'' The difference to {\em algorithm design} is that the inputs to the objective are not known, but are owned by rational agents who need to be provided incentives in order to share enough information about their inputs  such that the desired objective can be optimized. The question that arises is how much this added complexity degrades our ability to optimize objectives, namely

\bigskip ~~~~\begin{minipage}[h]{14.5cm}
{\em How much more computationally difficult is mechanism design for a certain objective compared to algorithm design for the same objective?}
\end{minipage}

\bigskip  This question has been at the forefront of algorithmic mechanism design, starting already with the seminal work of Nisan and Ronen~\cite{NisanR99}. In a non-Bayesian setting, i.e. when no prior distributional information is known about the inputs, we now have strong separation results between  algorithm and mechanism design. Indeed, a sequence of recent breakthroughs~\cite{PapadimitriouSS08,BuchfuhrerDFKMPSSU10,Dobzinski11,DobzinskiV12} has culminated in combinatorial auction settings where welfare can be  optimized computationally efficiently to within a constant factor for ``honest inputs,'' but it cannot be computationally efficiently optimized to within a polynomial factor for ``rational inputs,'' subject to well-believed complexity theoretic assumptions. Besides, the work of Nisan and Ronen studied the problem of minimizing makespan on unrelated machines, which can be well-approximated for honest machines, but whose approximability for rational machines still remains unknown.

 In a Bayesian world, where every input is drawn from some known distribution, algorithm and mechanism design appear more tightly connected. Indeed, a sequence of surprising works~\cite{HartlineL10,HartlineKM11,BeiH11} have established that mechanism design for welfare optimization in an arbitrary environment\footnote{An {\em environment} constrains the feasible outcomes of the mechanism as well as the allowable bidder {\em types}, or {\em valuations}. The latter map outcomes to value units.}  can be computationally efficiently reduced to algorithm design in the same environment, in an approximation-preserving way. A similar reduction has been recently discovered for the revenue objective~\cite{CaiDW12b,CaiDW13} in the case of {\em additive bidders}.\footnote{A bidder is {\em additive} if her value for a bundle of items is just the sum of her values for each item in the bundle.} Here, mechanism design for revenue optimization in an arbitrary additive environment (computationally efficiently) reduces to algorithm design for {\em virtual welfare optimization} in the same environment, in an approximation-preserving manner. The natural question is whether such  mechanism- to algorithm-design reduction is achievable for general bidder types (i.e. beyond additive) and  general objectives (i.e. beyond revenue and welfare). This is what we achieve in this paper.
 
 \begin{informal} \label{inf thm 1} There is a generic, computationally efficient, approximation-preserving reduction from mechanism design for an arbitrary {\em concave} objective $\cal O$, under arbitrary feasibility constraints and arbitrary allowable bidder types, to algorithm design, under the same feasibility constraints and allowable bidder types, and objective:
\begin{itemize}
 \item $\cal O$ plus virtual welfare, if $\cal O$ is an \mattnote{allocation-only objective} \mattnote{(i.e. $\cal O$ depends only on the allocation chosen and not on payments made).};
 \item $\cal O$ plus virtual welfare plus virtual revenue, if $\cal O$ is a general objective \mattnote{(i.e. $\cal O$ may depend on the allocation chosen as well as payments made)};
 \item virtual welfare, if $\cal O$ is the revenue objective.
 \end{itemize}
 \end{informal}

A formal statement of our result is provided as Theorem~\ref{thm:objective} in Section~\ref{sec:MDMDPtoSADPgeneral}. Specifically, we provide a Turing reduction from the {\em Multi-Dimensional Mechanism Design Problem} (MDMDP) to the {\em Solve Any-Differences Problem} (SADP). MDMDP and SADP are formally defined in Section~\ref{sec:generalprelim}. They are both parameterized by a set ${\cal F}$, specifying feasibility constraints on outcomes,\footnote{These could encode, e.g., matching constraints of a collection of items to the bidders, or possible locations to build a public project, etc.} a set ${\cal V}$ of functions, specifying allowable types of bidders,\footnote{A {\em type} $t$ of a bidder is a function mapping $\cal F$ to the reals, specifying how much the bidder values every outcome in ${\cal F}$. If a set ${\cal V}$ of functions parameterizes one of our problems, then all bidders are restricted to have types in ${\cal V}$. E.g., ${\cal F}$ may be $\mathbb{R}^\ell$ and ${\cal V}$ may contain all \mattnote{additive} functions over ${\cal F}$. } and an objective function ${\cal O}$, mapping a profile $\vec{t} \in {\cal V}^m$ of bidder types ($m$ is the number of bidders), a distribution $X \in \Delta({\cal F})$ over feasible outcomes, and a randomized price vector $P$, to the reals.\footnote{E.g. ${\cal O}$ could be revenue (in this case, ${\cal O}(\vec{t}, X, P)= \mathbb{E}[\sum_i P_i]$), or it could be welfare (in this case, ${\cal O}(\vec{t}, X, P)= \sum_i t_i(X)$, where $t_i(X)=\mathbb{E}_{x \leftarrow X}[t(x)]$ is the expected value of type $t_i$ for the distribution over outcomes $X$), or it could be some fairness objective such as ${\cal O}(\vec{t}, X, P)= \min_i t_i(X)$.} In terms of these parameters:
\begin{itemize}
\item MDMDP is the problem of designing a mechanism $M$ that maximizes ${\cal O}$ in expectation over the types $t_1,\ldots,t_m$ of the bidders, given a product distribution over ${\cal V}^m$ for $t_1,\ldots,t_m$, and assuming that the bidders play $M$ truthfully. $M$ is restricted to choose outcomes in ${\cal F}$ with probability $1$, it must be Bayesian Incentive Compatible, and Individually Rational.

\item SADP is input a type vector $t_1,\ldots,t_m \in {\cal V}$, a list of hyper-types $\mathfrak{t}_1,\ldots,\mathfrak{t}_k \in {\cal V}^*$, where ${\cal V}^*$ is the closure of ${\cal V}$ under addition and positive scalar multiplication, and weights $c_0 \in \mathbb{R}_{\ge 0}$ and $c_1,\ldots,c_m \in \mathbb{R}$. The goal is to choose a distribution $X$ over outcomes and a randomized price vector $P$ so that 
$$c_0 {\cal O}(\vec{t}, X, P) + \sum_i c_i \mathbb{E}[P_i] + (\mathfrak{t}_j(X) - \mathfrak{t}_{j+1}(X)).$$
is maximized for at least one $j \in \{1,\ldots,k-1\}$. The first term in the above expression is a scaled version of ${\cal O}$, the second is a ``virtual revenue'' term (where $c_i$ is the virtual currency that bidder $i$  uses), and the last term is a ``virtual welfare'' term (of a pair of adjacent hyper-types the first of which is scaled by $1$ and the other by $-1$). The name ``Solve Any-Differences Problem'' alludes to the freedom of choosing any value of $j$ and then optimizing.
\end{itemize}

Notice that MDMDP is a mechanism design problem, where our task is to optimize objective ${\cal O}$ given distributional information about the bidder types. On the other hand, SADP is an algorithm design problem, where types and hyper-types are perfectly known and the task is to optimize the sum of the same objective ${\cal O}$ plus a virtual revenue and welfare term. In this terminology, Theorem~\ref{thm:objective} (stated above as Informal Theorem~\ref{inf thm 1}) establishes that there is a computationally efficient reduction from $\alpha$-approximating MDMDP to $\alpha$-approximating SADP, for any value $\alpha$ of the approximation, as long as $\cal O$ is concave in $X$ and $P$.\footnote{Formally ${\cal O}$ is concave in $X$ and $P$ if for any $(X_1, P_1)$ and $(X_2,P_2)$ and any $c \in [0,1]$ and $\vec{t}$, $\obj(\vec{t},cX_1 + (1-c)X_2,cP_1 + (1-c)P_2) \geq c\obj(\vec{t},X_1,P_1) + (1-c)\obj(\vec{t},X_2,P_2)$, where $cX_1 + (1-c)X_2$ is the mixture of distributions $X_1$ and $X_2$ over outcomes, with mixing weights $c$ and $1-c$ respectively.} 

It is worth stating a few caveats of our reduction:
\begin{enumerate}
\item First, it is known from~\cite{ChawlaIL12} that it is not possible to have a general reduction from mechanism design to algorithm design with the exact same objective. This motivates the need to include the extra terms of virtual revenue and virtual welfare in the objective of SADP.

\item If $\cal O$ is \mattnote{allocation-only}, i.e. it does not depend on the price vector $P$, then all coefficients $c_1,\ldots,c_m$ can be taken $0$ in the reduction from MDMDP to SADP. Hence,  to $\alpha$-approximate MDMDP it suffices to be able to $\alpha$-optimize $\cal O$ plus virtual welfare. In Sections~\ref{sec:maxmin intro} and~\ref{sec:maxmin}, we discuss {\em fractional max-min fairness} as an example of such an objective, providing optimal mechanisms for it through our reduction.

\item If $\cal O$ is \mattnote{price-only}, i.e. it does not depend on the outcome $X$, then the objective in SADP is separable into a price-dependent component ($\cal O$ plus virtual revenue) and an outcome-dependent component (virtual welfare). Hence, our reduction implies that to $\alpha$-approximate MDMPD it suffices to be able to $\alpha$-optimize each of these components separately.\label{item:price dependent}

\item If $\cal O$ is the revenue objective (this is a special case of~\ref{item:price dependent}), the price-dependent component in SADP is trivial to optimize. In this case, to $\alpha$-approximate MDMPD it suffices to be able to $\alpha$-optimize virtual welfare (i.e. we can take $c_0=c_1=\cdots=c_m=0$ in the SADP instance output by the reduction). See Theorem~\ref{thm:revenue}. Additionally we note the following. 
\begin{enumerate}
\item This special case of our reduction already generalizes the results of~\cite{CaiDW12b} to arbitrary types. Recall that the reduction of~\cite{CaiDW12b} from MDMDP to virtual-welfare optimization could only accommodate additive types.
\item For a special family $\cal V$ of functions, we provide a reduction in the other direction, i.e. from SADP to MDMDP. As a corollary of this reduction we obtain strong inapproximability results for optimal multi-dimensional  mechanism design with submodular bidders. We discuss this in more detail in Section~\ref{subsec: intro extra results for revenue}.
\end{enumerate}

\item Finally, our generic reduction from MDMDP to SADP can take the number $k$ of hyper-types input to SADP to be $2$. We define SADP for general $k$ for flexibility. In particular, general $k$ enables our inapproximability result for optimal mechanism design via a reduction from SADP (general $k$) to MDMDP.

\end{enumerate}

\subsection{Revenue}\label{subsec: intro extra results for revenue}

Our framework described above provides reductions from mechanism design for some arbitrary objective ${\cal O}$ to algorithm design for the same objective ${\cal O}$ plus a virtual revenue and a virtual welfare term. As pointed out earlier in this section, we can't avoid some modification of ${\cal O}$ in the algorithm design problem sitting at the output of a general reduction such as ours, due to the impossibility result of~\cite{ChawlaIL12}. Nevertheless, there could very well be other modified objectives that a general reduction could be reducing to, with better or worse algorithmic properties. The question that arises is this: {\em Could we be hurting ourselves focusing on SADP as an algorithmic vehicle to solve MDMDP?} Our previous work on revenue maximization for additive bidders~\cite{CaiDW12b} exhibits very general ${\cal F}$'s where the answer is ``no,'' motivating our generalization here to non-additive bidders and general objectives. Indeed, we illustrate the reach of our new framework in Section~\ref{sec:maxmin intro} by providing optimal mechanisms for non-linear objectives, an admittedly difficult and under-developed topic in Bayesian mechanism design~\cite{ChawlaIL12,ChawlaHMS13}. 

Here we provide a different type of evidence for the tightness of our approach via reductions going the other way, i.e. from SADP to MDMDP. Recall that MDMDP$({\cal F}, {\cal V}, {\rm Revenue})$ reduces to solving SADP instances, which satisfy $c_0=c_1=\cdots=c_m=0$ and therefore only have a virtual welfare component depending on some $\mathfrak{t}_1,\ldots,\mathfrak{t}_k \in {\cal V}^*$. In Section~\ref{sec:SADPtoMDMDP}, we identify conditions for a collection of functions $\mathfrak{t}_1,\ldots,\mathfrak{t}_k \in {\cal V}^*$ under which SADP reduces to MDMDP, showing that for such instances solving SADP is unavoidable for solving MDMDP. Indeed, our reduction is strong enough that we obtain very strong inapproximability results for revenue optimization, even when there is a single monotone submodular bidder. To the best of our knowledge our result is the first inapproximability result for optimal mechanism design.

\begin{informal} \label{inf thm inapprox} {\em MDMDP($2^{[n]}$, {monotone submodular functions}, {revenue})} cannot be approximated to within any polynomial factor in polynomial time, if we are given value or demand oracle access to the sub-modular functions in the support of the bidders distributions,\footnote{We explain the difference between value and demand oracle access in Section~\ref{sec:SADPtoMDMDP}.} even if there is only one bidder. The same is true if we are given explicit access to these functions (as Turing machines) unless $NP \subseteq RP$.
 \end{informal}

\subsection{Fractional Max-Min Fairness} \label{sec:maxmin intro}

Certainly revenue and welfare are the most widely studied objectives in mechanism design. Nevertheless, resource allocation often requires optimizing non-linear objectives such as the fairness of an allocation, or the makespan of some scheduling of jobs to machines. Already the seminal paper of Nisan and Ronen studies minimizing makespan when scheduling jobs to selfish machines, in a non-Bayesian setting. Following this work, a lot of algorithmic mechanism design research has focused on non-linear objectives in non-Bayesian settings (see, e.g.,~\cite{christodoulou2007lower,ashlagi2012optimal} and their references), but positive results have been scarce. More recently, research has studied non-linear objectives in Bayesian settings~\cite{ChawlaIL12,ChawlaHMS13}. While~\cite{ChawlaIL12} provide impossibility results, the results of~\cite{ChawlaHMS13} give hope that non-linear objectives might be better behaved in Bayesian settings. In part, this is our motivation for providing an algorithmic framework for general objectives in this work. 

As a concrete example of the reach of our techniques, we provide optimal mechanisms for a (non-linear) max-min fairness objective in Section~\ref{sec:maxmin}. The setting we solve is this: There are $n$ items that can be allocated to $m$ additive bidders, subject to some constraints ${\cal F}$. ${\cal F}$ could be matching constraints, matroid constraints, downwards-closed constraints, or more general constraints. Now, given a distribution $X$ over allocations in ${\cal F}$, how fair is it? E.g., if there is one item and two bidders with value $1$ for the item, what is the fairness of a randomized allocation that gives the item to each bidder with probability $1 \over 2$? Should it be $0$, because with probability $1$, exactly one bidder gets value $0$ from the allocation? Or, should it be $1/2$ because each bidder gets an expected value of $1/2$? Clearly, both are reasonable objectives, but we study the latter. Namely, we define the {\em fractional max-min fairness} objective as: $${\cal O}(\vec{t},X) = \min_i t_i (X).$$
We obtain the following result, which is stated formally as Corollary~\ref{cor:maxmin} in Section~\ref{sec:maxmin}.
 \begin{informal} \label{inf thm max-min} Let $G$ be a polynomial-time $\alpha$-approximation algorithm for
 $$\text{{\em \bf Max-Weight($\mathcal{F}$)}: Given weights $(w_{ij})_{ij}$, find $S \in {\cal F}$ maximizing $\sum_{(i,j) \in S}w_{ij}$.}$$
 With black-box access to $G$, we can $\alpha$-approximate MDMDP$({\cal F}, \text{additive functions}, {\cal O})$ in polynomial time. For instance, if ${\cal F}$ are matching constraints, matroid constraints, or the intersection of two matroids, we can optimally solve MDMDP$({\cal F}, \text{additive functions}, {\cal O})$ in polynomial-time.
 \end{informal}

\subsection{Related Work}
\paragraph{Revenue Maximization.} There has been much work in recent years on revenue maximization in multi-dimensional settings~\cite{Alaei11,AlaeiFHHM12,BriestCKW10,CaiD11,CaiDW12,CaiH13,ChawlaHK07,ChawlaHMS10, DobzinskiFK11, HartN12}. Our approach is most similar to that of~\cite{CaiDW12b,CaiDW13}, which was recently extended in~\cite{BGM13}. These works solved the revenue maximization problem for \emph{additive} bidders via a black-box reduction to welfare maximization. In~\cite{BGM13}, numerous extensions are shown that accommodate risk-averse buyers, ex-post budget constraints, and more. But both approaches are inherently limited to revenue maximization and additive bidders. Even just within the framework of revenue maximization, our work breaks through a major barrier, as every single previous result studies only additive bidders.

\paragraph{Hardness of Revenue Maximization.} Three different types of results regarding the computational hardness of revenue maximization are known. It is shown in~\cite{BriestK07} that (under standard complexity theoretic assumptions) no efficient algorithm can find a deterministic mechanism whose revenue is within any polynomial factor of the optimal (for deterministic mechanisms), even for very simple single bidder settings. However, the optimal randomized mechanism in those same settings can be found in polynomial time~\cite{BriestCKW10}. Hardness results for randomized mechanisms are comparatively scarce.

Very recently, a new type of hardness was shown in~\cite{DaskalakisDT13}. There, they show that it is \#P-hard to find (any description of) an optimal randomized mechanism  even in very simple single bidder settings. Specifically, the problem they study is of a single additive bidder whose value for each of $n$ items is drawn independently from a distribution of support 2. The natural description complexity of this problem is $O(n)$ (just list the values for each item and their probabilities), but they show that the optimal randomized mechanism cannot be found or even executed in time $\poly(n)$ (unless ZPP = \#P). This is a completely different type of hardness than what is shown in this paper. Specifically, we show that certain instances are hard to solve even when the support of the input distribution is small (whereas it is $2^n$ in the hard examples of~\cite{DaskalakisDT13}), but the instances are necessarily more involved (we use submodular bidders), as the optimal randomized mechanism can be found in time polynomial in the support of the input distribution for additive bidders~\cite{BriestCKW10}.

The existing result that is most similar to ours appears in~\cite{DobzinskiFK11}. There, they show that it is NP-hard to maximize revenue exactly when there is a single bidder whose value for subsets of $n$ items is an OXS function.\footnote{OXS functions are a subclass of submodular functions} Our approaches are even somewhat similar: we both aim to understand the necessary structure on a type space in order for the optimal revenue to satisfy a simple formula. The big difference between their result and ours is that their results are inherently limited to settings with a single bidder who has \emph{two} possible types. While this suffices to show hardness of exact maximization, there is no hope of extending this to get hardness of approximation.\footnote{The seller always has the option of completely ignoring one type and charging the other their maximum value for their favorite set. This mechanism achieves a $\frac{1}{2}$-approximation in every setting.} Our stronger results are enabled by a deeper understanding of the optimal revenue for single bidder settings with many possible types, which is significantly more involved than the special case of two types. 

\paragraph{General Objectives.} Following the seminal paper of Nisan and Ronen, much work in algorithmic mechanism design has been devoted to maximizing non-linear objectives in a truthful manner. Recently, more attention has been given to Bayesian settings, as there are numerous strong hardness results in non-Bayesian settings. Still, it is shown in~\cite{ChawlaIL12} that no polynomial-time black-box reduction from truthfully maximizing a non-linear objective to non-truthfully maximizing the same non-linear objective exists without losing a polynomial factor in the approximation ratio, even in Bayesian settings. Even more recently, a non-black box approach was developed in~\cite{ChawlaHMS13} to minimize makespan in certain Bayesian settings. Our black-box approach sidesteps the hardness result of~\cite{ChawlaIL12} by reducing the problem of truthfully maximizing an objective to non-truthfully maximizing a modified objective.

\subsection{Paper Structure} To make our framework easier to understand, we separate the paper as follows. In Sections~\ref{sec:prelim} through~\ref{sec:SADPtoMDMDP}, we   provide the necessary details of our framework to show how it applies to revenue maximization. Then, in Section~\ref{sec:general}, we display the full generality of our approach, exemplifying how it applies to the fractional max-min fairness objective in Section~\ref{sec:maxmin}. To ease notation we initially define restricted versions of the MDMDP and SADP problems in Section~\ref{sec:prelim} as they apply to revenue, using these restricted definitions through Section~\ref{sec:SADPtoMDMDP}. Then, in Section~\ref{sec:general}, we expand these definitions to accommodate general objectives.

\section{Preliminaries}\label{sec:prelim}
\paragraph{Mechanism Design Setting.} The mechanism designer has a set of feasible outcomes $\mathcal{F}$ to choose from, which depending on the application could be feasible allocations of items to bidders, locations to build a public project, etc. Each bidder participating in the mechanism may have several possible {\em types}. A bidder's {type} consists of a value for each possible outcome in $\mathcal{F}$. Specifically, a bidder's type is a function $t$ mapping $\mathcal{F}$ to $\mathbb{R}_+$. $T_i$ denotes the set of all possible types of bidder $i$, which we assume to be finite. The designer has a prior distribution $\mathcal{D}_i$ over $T_i$ for bidder $i$'s type. Bidders are {\em quasi-linear} and {\em risk-neutral}. That is, the utility of a bidder of type $t$ for a randomized outcome (distribution over outcomes) $X \in \Delta(\mathcal{F})$, when he is charged (a possibly random price with expectation) $p$, is $\mathbb{E}_{x \leftarrow X}[t(x)] - p$. Therefore, we may extend $t$ to take as input distributions over outcomes as well, with $t(X) = \mathbb{E}_{x \leftarrow X}[t(x)]$. A {\em type profile} $\vec{t}=(t_1,\ldots,t_m)$ is a collection of types for each bidder. We assume that the types of the bidders are independent so that $\mathcal{D} = \times_i \mathcal{D}_i$ is the designer's prior distribution over the complete type profile. 

\paragraph{Mechanisms.} A (direct) mechanism consists of two functions, a (possibly randomized) allocation rule and a (possibly randomized) price rule, and we allow these rules to be correlated. The allocation rule takes as input a type profile $\vec{t}$ and (possibly randomly) outputs an allocation $A(\vec{t}) \in \mathcal{F}$. The price rule takes as input a profile $\vec{t}$ and (possibly randomly) outputs a price vector $P(\vec{t})$. When the bid profile $\vec{t}$ is reported to the mechanism $M = (A,P)$, the (possibly random) allocation $A(\vec{t})$ is selected and bidder $i$ is charged the (possibly random) price $P_i(\vec{t})$. We will sometimes discuss the \emph{interim allocation rule} of a mechanism, which is a function that takes as input a bidder $i$ and a type $t_i \in T_i$ and outputs the distribution of allocations that bidder $i$ sees when reporting type $t_i$ over the randomness of the mechanism and the other bidders' types. Specifically, if the interim allocation rule of $M=(A,P)$ is $X$, then $X_i(t_i)$ is a distribution satisfying $$\Pr[x\leftarrow X_i(t_i) ] = \mathbb{E}_{\vec{t}_{-i} \leftarrow \mathcal{D}_{-i}}\left[\Pr[A(t_i;\vec{t}_{-i}) = x\ |\ \vec{t}_{-i}]\right],$$
where $t_{-i}$ is the vector of types of all bidders but bidder $i$ in $\vec{t}$, and ${\cal D}_{-i}$ is the distribution of $t_{-i}$. Sometimes we write $\vec{t}_{-i}$ instead of $t_{-i}$ to emphasize that it's a vector of types.

\smallskip A mechanism is said to be {\em Bayesian Incentive Compatible (BIC)} if it is in every bidder's best interest to report truthfully their type, conditioned on the fact that the other bidders report truthfully their type. A mechanism is said to be {\em Individually Rational (IR)} if it is in every bidder's best interest to participate in the mechanism, no matter their type. These definitions are given formally in Section~\ref{app:prelim}.

\paragraph{Goal of the designer.} In Section~\ref{sec:revenue} we present our mechanism- to algorithm-design reduction for the revenue objective. The problem we reduce from is designing a BIC, IR mechanism that maximizes expected revenue, when encountering a bidder profile sampled from some given distribution $\mathcal{D}$. Our reduction is described in terms of the problems MDMDP and SADP defined next. In Section~\ref{sec:general} we generalize our reduction to general objectives and accordingly generalize both problems to accommodate general objectives. But our approach is easier to understand for the revenue objective, so we give that first.

\paragraph{Formal Problem Statements.} We present black-box reductions between two problems: the Multi-Dimensional Mechanism Design Problem (MDMDP) and the Solve-Any Differences Problem (SADP). MDMDP is a well-studied mechanism design problem~\cite{CaiDW12,CaiDW12b,CaiDW13}. SADP is a new algorithmic problem that we show has strong connections to MDMDP. In order to discuss our reductions appropriately, we will parameterize the problems by two parameters ${\cal F}$ and ${\cal V}$. Parameter $\mathcal{F}$ denotes the feasibility constraints of the setting; e.g., $\mathcal{F}$ might be ``each item is awarded to at most one bidder'' or ``a bridge may be built in location $A$ or $B$'', etc.  Parameter $\mathcal{V}$ denotes the allowable valuation functions, mapping ${\cal F}$ to the reals; e.g., if ${\cal F}=\mathbb{R}^\ell$, then $\mathcal{V}$ may be ``all additive functions over ${\cal F}$'' or ``all submodular functions'', etc. 
Informally, MDMDP asks for a BIC, IR mechanism that maximizes expected revenue for certain feasibility constraints ${\cal F}$ and a restricted class of valuation functions ${\cal V}$. SADP asks for an element in ${\cal F}$ maximizing the difference of two functions in ${\cal V}$, but the algorithm is allowed to choose any two adjacent functions in an ordered list of size $k$. Throughout the paper will use $\mathcal{V}^*$ to denote the closure of $\mathcal{V}$ under addition and positive scalar multiplication, {and the term ``$\alpha$-approximation'' ($\alpha \leq 1$) to denote a (possibly randomized) algorithm whose expected value for the desired objective is an $\alpha$-fraction of the optimal.}

\smallskip \textbf{MDMDP($\mathcal{F}$, $\mathcal{V}$):} {\sc Input:} For each bidder $i \in [m]$, a finite set of types $T_i \subseteq {\cal V}$ and a distribution ${\cal D}_i$ over $T_i$. {\sc Goal:} Find a feasible (outputs an outcome in $\mathcal{F}$ with probability $1$) BIC, IR mechanism~$M$, that maximizes expected revenue, when $n$ bidders with types sampled from $\mathcal{D} = \times_i \mathcal{D}_i$ play $M$ truthfully (with respect to all feasible, BIC, IR mechanisms). $M$ is said to be an $\alpha$-approximation to MDMDP if its expected revenue is at least a $\alpha$-fraction of the optimal obtainable expected revenue.

\smallskip \textbf{SADP($\mathcal{F}$, $\mathcal{V}$):} Given as input functions $f_{j} \in \mathcal{V}^{*}$ ($1 \leq j \leq k$), find a feasible outcome $X \in \mathcal{F}$ such that there exists an index $j^* \in [k-1]$ such that:

$$f_{j^*}(X) - f_{j^*+1}(X) = \max_{X' \in \mathcal{F}} \{f_{j^*}(X') - f_{j^*+1}(X')\}.$$

$X$ is said to be an $\alpha$-approximation to SADP if there exists an index $j^* \in [k-1]$ such that:

$$f_{j^*}(X) - f_{j^*+1}(X) \geq \alpha \max_{X' \in \mathcal{F}} \{f_{j^*}(X') - f_{j^*+1}(X')\}.$$

%
%
%
%

\paragraph{Representation Questions.} Notice that both MDMDP and SADP are parameterized by ${\cal F}$ and ${\cal V}$. As we aim to leave these sets unrestricted, we assume that their elements are represented in a computationally meaningful way. That is, we assume that elements of $\mathcal{F}$ can be indexed using $O(\log |{\cal F}|)$ bits {and are input to functions that evaluate them via this representation. We assume elements $f \in \mathcal{V}$ are input either via a turing machine that evaluates $f$ (and the size of this turing machine counts towards the size of the input), or as a black box. Moreover, all of our reductions apply whether or not the input functions are given explicitly or as a black box \footnote{{When we claim that we can solve problem $\mathcal{P}_1$ given black-box access to a solution to problem $\mathcal{P}_2$, we mean that the functions input to problem $\mathcal{P}_1$ may be given either explicitly or as a black box, and that they are input in the same form to $\mathcal{P}_2$.}}.} Finally, whenever we evaluate the running time of an  algorithm for either MDMDP or SADP, or of a reduction from one problem to the other, we count the time spent in an oracle call to functions input to these problems as one. Similarly, whenever we show a computational hardness result for either MDMDP or SADP, the time spent in one oracle call is considered as one.

%


\paragraph{Linear Programming.} Our results require the ability to solve linear programs with separation oracles as well as ``weird'' separation oracles, a concept recently introduced in~\cite{CaiDW13}. Throughout the paper we will use the notation $\alpha P$ to denote the polytope $P$ shrunk by a factor of $\alpha \leq 1$. That is, $\alpha P = \{\alpha \vec{x}| \vec{x} \in P\}$. We make use of the following Theorem from~\cite{CaiDW13}, as well as other theorems regarding solving linear programs which are included in Section~\ref{app:linearprogramming}.

\begin{theorem}\label{thm:CaiDW1}(\cite{CaiDW13}) 
Let $P$ be a $d$-dimensional bounded convex polytope containing the origin, and let $\mathcal{A}$ be an algorithm that takes any direction $\vec{w}\in[-1,1]^{d}$ as input and outputs a point $\mathcal{A}(\vec{w})\in P$ such that $ \mathcal{A}(\vec{w})\cdot \vec{w}\geq \alpha \cdot \max_{\vec{x}\in P} \{ \vec{x}\cdot\vec{w}\}$ for some absolute constant $\alpha \le 1$. Then there is a weird separation oracle $WSO$ for $\alpha P$ such that,
\begin{enumerate}
\item Every halfspace output by the $WSO$ will contain  $\alpha P$.
\item Whenever $WSO(\vec{x}) =$ ``yes,'' the execution of $WSO$ explicitly finds directions $\vec{w}_1,\ldots,\vec{w}_l$ such that $\vec{x} \in \text{Conv}\{\mathcal{A}(\vec{w}_1),\ldots, \mathcal{A}(\vec{w}_l)\}$.
\item Let $b$ be the bit complexity of the input vector $\vec{x}$, and $\ell$ be an upper bound of the bit complexity of $\mathcal{A}(\vec{w})$ for all $\vec{w}\in[-1,1]^{d}$, $rt_{\mathcal{A}}(y)$ be the running time of algorithm $\mathcal{A}$ on some input with bit complexity $y$. Then on input $\vec{x}$, $WSO$ terminates in time $\poly\left(d,b,\ell,rt_{\mathcal{A}}(\poly(d,b,\ell))\right)$ and makes at most $\poly(d,b,\ell)$ many queries to $\mathcal{A}$.
\end{enumerate} \end{theorem}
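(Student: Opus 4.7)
The plan is to build the weird separation oracle as an ellipsoid-style, constraint-generation procedure that iteratively queries $\mathcal{A}$. The output will either be a convex combination of explored $\mathcal{A}(\vec{w}_i)$'s expressing $\vec{x}$, or one of a restricted family of halfspaces that are \emph{automatically} valid for $\alpha P$.

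First I would introduce the auxiliary convex body $T = \mathrm{Conv}\{\mathcal{A}(\vec{w}) : \vec{w}\in[-1,1]^d\}$ and record two facts that drive the proof. Fact one: $\alpha P \subseteq T$. This follows by comparing support functions, since for every $\vec{w}$ one has $h_T(\vec{w}) \geq \mathcal{A}(\vec{w})\cdot\vec{w} \geq \alpha\cdot h_P(\vec{w}) = h_{\alpha P}(\vec{w})$, and containment of bounded convex bodies is equivalent to domination of support functions. Fact two: for every $\vec{w}\in[-1,1]^d$ the halfspace $H_{\vec{w}} = \{\vec{y} : \vec{y}\cdot\vec{w} \leq \mathcal{A}(\vec{w})\cdot\vec{w}\}$ contains $\alpha P$, again because $\mathcal{A}(\vec{w})\cdot\vec{w} \geq \alpha\cdot h_P(\vec{w})$. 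The WSO will in effect be testing membership in $T$, and its separating halfspaces will always be of the form $H_{\vec{w}}$, so correctness for $\alpha P$ is baked in.

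The algorithm maintains a finite set $S$ of explored directions, initially empty. At each iteration I solve the LP feasibility problem ``find $\lambda_{\vec{w}}\geq 0$ with $\sum_{\vec{w}\in S}\lambda_{\vec{w}} = 1$ and $\sum_{\vec{w}\in S}\lambda_{\vec{w}}\mathcal{A}(\vec{w}) = \vec{x}$.'' If feasible, output ``yes'' with the witnesses in the support of the solution (Carath\'eodory lets me keep $l\leq d+1$). If infeasible, LP duality produces a direction $\vec{z}$ and a scalar $c$ with $\vec{z}\cdot \mathcal{A}(\vec{w}) \leq c < \vec{z}\cdot\vec{x}$ for every $\vec{w}\in S$; after rescaling I may assume $\vec{z}\in[-1,1]^d$. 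Now query $\mathcal{A}(\vec{z})$ and branch: if $\mathcal{A}(\vec{z})\cdot\vec{z} < \vec{z}\cdot\vec{x}$, output $H_{\vec{z}}$, which contains $\alpha P$ by Fact two and strictly separates $\vec{x}$; otherwise $\mathcal{A}(\vec{z})\cdot\vec{z} \geq \vec{z}\cdot\vec{x} > c \geq \max_{\vec{w}\in S}\vec{z}\cdot\mathcal{A}(\vec{w})$, so $\mathcal{A}(\vec{z})$ lies strictly outside the current convex hull and I add $\vec{z}$ to $S$.

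The hard part will be polynomial termination. Each augmentation of $S$ strictly extends $\mathrm{Conv}\{\mathcal{A}(\vec{w}) : \vec{w}\in S\}$ in a direction whose gap $\vec{z}\cdot\mathcal{A}(\vec{z}) - \max_{\vec{w}\in S}\vec{z}\cdot\mathcal{A}(\vec{w})$ is strictly positive, and I want to charge this against bit complexity. The cleanest route is to recast the inner loop as the shallow-cut ellipsoid method running on $T$, using $\mathcal{A}$ to implement an approximate separation oracle whose cuts coincide with the $H_{\vec{w}}$'s; then the Gr\"otschel--Lov\'asz--Schrijver analysis, combined with bit-complexity bounds $b$ on $\vec{x}$ and $\ell$ on every $\mathcal{A}(\vec{w})$, delivers $\mathrm{poly}(d,b,\ell)$ iterations. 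I must additionally verify that the dual direction $\vec{z}$ extracted from the LP has polynomial bit size in $d,b,\ell$ (so that each call $\mathcal{A}(\vec{z})$ is on a $\mathrm{poly}(d,b,\ell)$-size input), since this is what feeds into the stated $rt_{\mathcal{A}}(\mathrm{poly}(d,b,\ell))$ term. This bit-complexity bookkeeping---rather than any conceptual step---is where I expect to spend most of the proof.
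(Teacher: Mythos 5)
You are proposing a proof for a statement the paper does not actually prove: Theorem~\ref{thm:CaiDW1} is imported verbatim from~\cite{CaiDW13}, so the comparison point is the construction there. Your validity arguments do match the essence of that construction: the two facts you isolate (that $\mathcal{A}(\vec{w})\cdot\vec{w}\geq \alpha h_P(\vec{w})$ makes every halfspace $H_{\vec{w}}=\{\vec{y}:\vec{y}\cdot\vec{w}\leq \mathcal{A}(\vec{w})\cdot\vec{w}\}$ valid for $\alpha P$, and that the ``yes'' answer should come with an explicit convex-combination certificate over queried points, trimmed by Carath\'eodory) are exactly what delivers properties 1 and 2, and your bit-complexity concerns are the right ones for property 3.

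The genuine gap is the iteration bound, i.e.\ property 3, and your proposed fix does not close it. The loop you describe is Kelley-style column generation: re-solve the primal feasibility LP over the current hull, extract a separating direction $\vec{z}$ from LP duality, query $\mathcal{A}(\vec{z})$, and repeat. Strict progress per iteration (``$\mathcal{A}(\vec{z})$ lies strictly outside the current hull'') carries no quantitative potential: the added point can extend the hull by an arbitrarily small amount, the dual direction returned by the LP solver is not canonical, and there is no volume or bit-length argument attached to it, so nothing bounds the number of iterations by $\poly(d,b,\ell)$ --- and the number of queries to $\mathcal{A}$ is exactly the number of iterations. The construction in~\cite{CaiDW13} avoids this by never choosing directions via duals of the primal LP at all: on input $\vec{x}$ it runs the ellipsoid algorithm \emph{in the space of directions}, deciding whether some $\vec{w}\in[-1,1]^d$ satisfies $\vec{x}\cdot\vec{w} > \mathcal{A}(\vec{w})\cdot\vec{w}$, where each candidate $\vec{w}$ (an ellipsoid center) is fed to $\mathcal{A}$ and, if not violated, yields a cut that is linear in $\vec{w}$; the polynomial bound is then the standard ellipsoid volume argument in that (polar-type) space, and the ``yes'' certificate is recovered by LP duality applied only to the finitely many constraints generated during that run. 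Your closing sentence gestures at this (``recast the inner loop as the shallow-cut ellipsoid method''), but as stated it runs the ellipsoid ``on $T$'', which is the wrong object --- a separation oracle for $T$ is precisely what is being constructed --- and adopting the correct version is not a patch to your loop but a replacement of it: the query points must be dictated by the ellipsoid iteration in direction space, not by LP duals. Until that restructuring is carried out explicitly (including the bound on the bit complexity of the ellipsoid centers passed to $\mathcal{A}$, which is where your $rt_{\mathcal{A}}(\poly(d,b,\ell))$ term comes from), the proof of property 3 is missing.
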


\subsection{Implicit Forms}
Here, we give the necessary preliminaries to understand a mechanism's \emph{implicit form}. The implicit form is oblivious to what allocation rule the mechanism actually uses;  it just stores directly the necessary information to decide if a mechanism is BIC and IR. For a mechanism $M = (A,P)$ and bidder distribution $\mathcal{D}$, the implicit form of $M$ with respect to $\mathcal{D}$ consists of two parts. The first is a function that takes as input a bidder $i$ and a pair of types $t_i,t'_i$, and outputs the expected value of a bidder with type $t_i$ for reporting $t'_i$ instead. Formally, we may store this function as an $m k^2$-dimensional vector $\vec{\pi}(M)$ with:

$$\pi_i(t_i,t'_i) = \mathbb{E}_{\vec{t}_{-i} \leftarrow \mathcal{D}_{-i}}[t_i(A(t'_i;\vec{t}_{-i}))].$$

The second is just a function that takes as input a bidder $i$ and a type $t_i$ and outputs the expected price paid by bidder $i$ when reporting type $t_i$. Formally, we may store this function as a $mk$-dimensional vector $\vec{P}(M)$ with:

$$P_i(t_i) = \mathbb{E}_{\vec{t}_{-i} \leftarrow \mathcal{D}_{-i}}[P_i(t_i;\vec{t}_{-i})].$$

We will denote the implicit form of $M$ as $\vec{\pi}^I(M) = (\vec{\pi}(M), \vec{P}(M))$, and may drop the parameter $M$ where appropriate. We call $\vec{\pi}$ the allocation component of the implicit form and $\vec{P}$ the price component. Sometimes, we will just refer to $\vec{\pi}$ as the implicit form if the context is appropriate.

We say that (the allocation component of) an implicit form, $\vec{\pi}$, is \emph{feasible} with respect to $\mathcal{F},\mathcal{D}$ if there exists a (possibly randomized) mechanism $M$ that chooses an allocation in $\mathcal{F}$ with probability $1$ such that $\vec{\pi}(M) = \vec{\pi}$. We denote by $F(\mathcal{F},\mathcal{D})$ the set of all feasible (allocation components of) implicit forms. We say that an implicit form $\vec{\pi}^I$ is feasible if its allocation component $\vec{\pi}$ is feasible. We say that $\vec{\pi}^I$ is BIC if every mechanism with implicit form $\vec{\pi}^I$ is BIC. It is easy to see that $\vec{\pi}^I$ is BIC if and only if for all $i$, and $t_i,t'_i \in T_i$, we have:

$$\pi_i(t_i,t_i) - P_i(t_i) \geq \pi_i(t_i,t'_i) - P_i(t'_i).$$

Similarly, we say that $\vec{\pi}^I$ is IR if every mechanism with implicit form $\vec{\pi}^I$ is IR. It is also easy to see that $\vec{\pi}^I$ is IR if and only if for all $i$ and $t_i \in T_i$ we have:

$$\pi_i(t_i,t_i) - P_i(t_i) \geq 0.$$

\section{Revenue Maximization}\label{sec:revenue}
In this section, we describe and prove correctness of our reduction when the objective is revenue. {Every result in this section is a special case of our general reduction (that applies to any concave objective) from Section~\ref{sec:general}, and could be obtained as an immediate corollary. We present revenue separately as a special case with the hope that this will help the reader understand the general reduction.} Here is an outline of our approach: In Section~\ref{sec:LPrevenue}, we show that $F(\mathcal{F},\mathcal{D})$ is a convex polytope and write a poly-size linear program that finds the revenue-optimal implicit form provided that we have a separation oracle for $F(\mathcal{F},\mathcal{D})$. In Section~\ref{sec:MDMDPtoSADP} we show that any poly-time $\alpha$-approximation algorithm for {SADP($\mathcal{F}$,$\mathcal{V}$)} implies a poly-time weird separation oracle for {$\alpha F(\mathcal{F},\mathcal{D})$}, and therefore a poly-time $\alpha$-approximation algorithm for MDMDP($\mathcal{F}$, $\mathcal{V}$).

\subsection{Linear Programming Formulation}\label{sec:LPrevenue}
We now show how to write a poly-size linear program to find the implicit form of a mechanism that solves the MDMDP. The idea is that we will search over all feasible, BIC, IR implicit forms for the one that maximizes expected revenue. We first show that $F(\mathcal{F},\mathcal{D})$ is always a convex polytope, then state the linear program and prove that it solves MDMDP. For ease of exposition, most proofs can be found in Appendix~\ref{app:LPrevenue}.

\begin{lemma}\label{lem:convexpolytope} $F(\mathcal{F},\mathcal{D})$ is a convex polytope.
\end{lemma}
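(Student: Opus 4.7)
The plan is to establish the two required properties separately: that $F(\mathcal{F},\mathcal{D})$ is convex, and that it is the convex hull of finitely many points (hence a polytope).

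For convexity, I would take two feasible allocation-component implicit forms $\vec{\pi}_1, \vec{\pi}_2 \in F(\mathcal{F},\mathcal{D})$, with witnesses $M_1=(A_1,P_1)$ and $M_2=(A_2,P_2)$ that choose outcomes in $\mathcal{F}$ with probability $1$, and a scalar $c\in[0,1]$. Consider the mechanism $M_c$ that, on every input type profile, runs $M_1$ with probability $c$ and $M_2$ with probability $1-c$. Clearly $M_c$ still outputs an element of $\mathcal{F}$ with probability $1$, so it is feasible. By linearity of expectation over the fresh coin plus the randomness of $\vec{t}_{-i}$,
\[
\pi_i^{(M_c)}(t_i,t_i') = c\,\mathbb{E}_{\vec{t}_{-i}}[t_i(A_1(t_i';\vec{t}_{-i}))] + (1-c)\,\mathbb{E}_{\vec{t}_{-i}}[t_i(A_2(t_i';\vec{t}_{-i}))] = c\,\pi_{1,i}(t_i,t_i') + (1-c)\,\pi_{2,i}(t_i,t_i'),
\]
so $c\vec{\pi}_1+(1-c)\vec{\pi}_2\in F(\mathcal{F},\mathcal{D})$.

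For the polytope property, I would observe that each $T_i$ is finite and $\mathcal{F}$ is finite (its elements are indexed by $O(\log|\mathcal{F}|)$ bits), so the set of \emph{deterministic} allocation rules $A:\prod_i T_i\to\mathcal{F}$ is finite. Each such deterministic $A$ yields a single point of $F(\mathcal{F},\mathcal{D})$ via the formula $\pi_i(t_i,t_i')=\mathbb{E}_{\vec{t}_{-i}\leftarrow\mathcal{D}_{-i}}[t_i(A(t_i';\vec{t}_{-i}))]$. Call this finite set of points $S$. I would then argue that every feasible $\vec{\pi}\in F(\mathcal{F},\mathcal{D})$ lies in $\mathrm{Conv}(S)$: the realizing randomized allocation rule $A$ can be viewed as a distribution over deterministic allocation rules (sample, for each profile $\vec{t}$ independently, an outcome from $A(\vec{t})$; equivalently, the joint distribution of $A(\vec{t})$ over all $\vec{t}$ is a distribution over deterministic functions $\prod_i T_i\to\mathcal{F}$), and by linearity of expectation $\vec{\pi}(A)$ is the corresponding convex combination of the deterministic points in $S$. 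Combined with the (already established) reverse inclusion $\mathrm{Conv}(S)\subseteq F(\mathcal{F},\mathcal{D})$ from convexity, this gives $F(\mathcal{F},\mathcal{D})=\mathrm{Conv}(S)$, which is a polytope because $|S|<\infty$.

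The only subtle step is the claim that any randomized allocation rule can be written as a distribution over deterministic ones inducing the same implicit form; since the implicit form depends only on the marginal distributions of $A(\vec{t})$ for each profile $\vec{t}$, and these marginals are preserved by the obvious product construction over profiles, this causes no real difficulty. I do not expect a genuine obstacle here—the lemma is essentially a statement that averaging over mechanisms corresponds to averaging their implicit forms, plus finiteness of the underlying combinatorial data.
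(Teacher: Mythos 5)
Your proposal is correct and follows essentially the same route as the paper: view every randomized mechanism as a distribution over the finitely many deterministic allocation rules, note that implicit forms combine linearly under such mixing, and conclude that $F(\mathcal{F},\mathcal{D})$ is exactly the convex hull of the finite set of implicit forms of deterministic mechanisms. Your write-up just spells out the convexity direction and the finiteness of $S$ a bit more explicitly than the paper does.
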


\begin{figure}[ht]
\colorbox{MyGray}{
\begin{minipage}{\textwidth} {
\noindent\textbf{Variables:}
\begin{itemize}
\item $\pi_i(t_i,t'_i)$, for all bidders $i$ and types $t_i,t'_i \in T_i$, denoting the expected value obtained by bidder $i$ when their true type is $t_i$ but they report $t'_i$ instead.
\item $P_i(t_i)$, for all bidders $i$ and types $t_i \in T_i$, denoting the expected price paid by bidder $i$ when they report type $t_i$.
\end{itemize}
\textbf{Constraints:}
\begin{itemize}
\item $\pi_i(t_i,t_i) - P_i(t_i) \geq \pi_i(t_i,t'_i) - P_i(t'_i)$, for all bidders $i$, and types $t_i,t'_i \in T_i$, guaranteeing that the implicit form {$(\vec{\pi},\vec{P})$} is BIC.
\item $\pi_i(t_i,t_i) - P_i(t_i) \geq 0$, for all bidders $i$, and types $t_i \in T_i$, guaranteeing that the implicit form {$(\vec{\pi},\vec{P})$} is individually rational.
\item $\vec{\pi} \in F(\mathcal{F},\mathcal{D})$, guaranteeing that the implicit form $(\vec{\pi},\vec{P})$ is feasible.\end{itemize}
\textbf{Maximizing:}
\begin{itemize}
\item $\sum_i \sum_{t_i} \Pr[t_i \leftarrow \mathcal{D}_i] \cdot P_i(t_i)$, the expected revenue when played truthfully by bidders sampled from $\mathcal{D}$.\\
\end{itemize}}
\end{minipage}}
\caption{A linear programming formulation for MDMDP.}
\label{fig:LPMDMD}
\end{figure}
\begin{observation}\label{obs:LPsolvesMDMDP}
Any $\alpha$-approximate solution to the linear program of Figure~\ref{fig:LPMDMD} corresponds to a feasible, BIC, IR implicit form whose revenue is at least a $\alpha$-fraction of the optimal obtainable expected revenue by a feasible, BIC, IR mechanism. 
\end{observation}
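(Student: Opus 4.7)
The plan is to prove this as a near-tautology by exhibiting an objective-preserving bijection between feasible LP solutions and feasible, BIC, IR mechanisms (modulo irrelevant randomization), so that $\alpha$-approximate LP solutions translate directly to $\alpha$-approximate mechanisms.

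First, I would argue the easy direction: every feasible, BIC, IR mechanism $M$ that outputs an element of $\mathcal{F}$ with probability $1$ yields a feasible LP solution whose objective value equals the expected revenue of $M$. Setting $(\vec{\pi}, \vec{P}) = \vec{\pi}^I(M)$, the constraint $\vec{\pi} \in F(\mathcal{F},\mathcal{D})$ holds by the definition of $F(\mathcal{F},\mathcal{D})$; the BIC inequalities and the IR inequalities of the LP are, by the characterizations stated just before Section~\ref{sec:revenue}, literally the conditions under which $\vec{\pi}^I(M)$ is BIC and IR, which they are since $M$ is. Finally, by linearity of expectation and the definition of $P_i(t_i)$, the expected revenue of $M$ equals $\sum_i \sum_{t_i} \Pr[t_i \leftarrow \mathcal{D}_i]\,P_i(t_i)$, which is exactly the LP objective.

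Next, I would prove the reverse direction: every feasible LP solution $(\vec{\pi},\vec{P})$ is realizable as a feasible, BIC, IR mechanism $M$ with $\vec{\pi}^I(M) = (\vec{\pi},\vec{P})$, hence with the same expected revenue. Since $\vec{\pi} \in F(\mathcal{F},\mathcal{D})$, there exists some (possibly randomized) allocation rule $A$ selecting outcomes in $\mathcal{F}$ with probability $1$ such that $\vec{\pi}(A) = \vec{\pi}$. Define a price rule that, whenever bidder $i$ reports $t_i$, charges her deterministically the amount $P_i(t_i)$ (independent of $\vec{t}_{-i}$ and of $A$); this is well-defined because $P_i$ is a function of $t_i$ alone. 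The resulting mechanism $M=(A,P)$ is feasible by construction, has $\vec{P}(M) = \vec{P}$ and $\vec{\pi}(M) = \vec{\pi}$, so its implicit form is exactly $(\vec{\pi},\vec{P})$; the LP's BIC and IR inequalities then certify that $M$ is BIC and IR, and its expected revenue again equals the LP objective.

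Combining these two directions, the optimum of the LP equals the optimal expected revenue obtainable by any feasible, BIC, IR mechanism, and any $\alpha$-approximate LP solution can be converted, via the construction in the second step, into a feasible, BIC, IR mechanism achieving at least an $\alpha$-fraction of the optimum. The only step that requires a moment's thought is the realizability claim in the reverse direction, but it reduces to the definition of $F(\mathcal{F},\mathcal{D})$ together with the trivial fact that any target interim expected price $P_i(t_i)$ can be implemented by charging that value deterministically upon report $t_i$; there is no genuine obstacle.
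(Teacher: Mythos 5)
Your proposal is correct and follows essentially the same route as the paper: the paper's (terser) proof likewise observes that the LP's feasible region is exactly the set of feasible, BIC, IR implicit forms and that the objective equals expected revenue, which is precisely the two-directional correspondence you spell out. Your explicit realization of an LP solution as a mechanism (allocation rule from the definition of $F(\mathcal{F},\mathcal{D})$ plus deterministic interim prices) is a harmless elaboration of what the paper asserts implicitly, so there is no substantive difference.
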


\begin{corollary}\label{cor:LP} The program in Figure~\ref{fig:LPMDMD} is a linear program with $\sum_{i\in[m]} (|T_{i}|^2 + |T_{i}|)$ variables. \mattnote{If $b$ is an upper bound on the bit complexity of $Pr[t_i]$ and $t_i(X)$ for all $i,t_i$ and $X \in \mathcal{F}$}, then with black-box access to a weird separation oracle, $WSO$, for  $\alpha F(\mathcal{F},\mathcal{D})$, {the implicit form of an $\alpha$-approximate solution to} MDMDP can be found in time polynomial in $\sum_{i\in[m]} |T_{i}|$, $b$, and the runtime of $WSO$ on inputs with bit complexity polynomial in $\sum_{i\in[m]} |T_{i}|,b$.
\end{corollary}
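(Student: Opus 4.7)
The variable count is immediate from Figure~\ref{fig:LPMDMD}: bidder $i$ contributes $|T_i|^2$ allocation variables $\pi_i(t_i,t'_i)$ and $|T_i|$ price variables $P_i(t_i)$. The plan is to run ellipsoid on the LP of Figure~\ref{fig:LPMDMD} with the implicit membership constraint $\vec{\pi}\in F(\mathcal{F},\mathcal{D})$ replaced by the shrunken constraint $\vec{\pi}\in\alpha F(\mathcal{F},\mathcal{D})$, and to use $WSO$ as the separation oracle for this shrunken constraint. The BIC and IR inequalities number $O(\sum_i|T_i|^2)$ and have coefficients of bit complexity $O(b)$, so I separate them explicitly. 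By property~1 of Theorem~\ref{thm:CaiDW1}, whenever $WSO$ returns ``no,'' it outputs a halfspace containing $\alpha F(\mathcal{F},\mathcal{D})$, which is a valid cut for ellipsoid.

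The key lemma is that the shrunken LP has optimum at least $\alpha\cdot\mathrm{OPT}$, where $\mathrm{OPT}$ denotes the optimal expected revenue of any feasible, BIC, IR mechanism. If $(\vec{\pi}^*,\vec{P}^*)$ is the implicit form of an optimal mechanism, then $(\alpha\vec{\pi}^*,\alpha\vec{P}^*)$ satisfies $\alpha\vec{\pi}^*\in\alpha F(\mathcal{F},\mathcal{D})$ by definition, still satisfies every BIC and IR inequality (each such inequality is homogeneous in $(\vec{\pi},\vec{P})$, so scaling both sides by $\alpha>0$ preserves it), and attains revenue exactly $\alpha\cdot\mathrm{OPT}$ by linearity of the objective. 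Moreover $\alpha F(\mathcal{F},\mathcal{D})\subseteq F(\mathcal{F},\mathcal{D})$ since $F(\mathcal{F},\mathcal{D})$ is convex and contains the origin (the all-zero mechanism), so any feasible solution of the shrunken LP is also feasible for the original LP; hence an exact optimum of the shrunken LP yields an implicit form that is feasible, BIC, IR, and attains revenue $\ge\alpha\cdot\mathrm{OPT}$, which is what the corollary asks for.

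For the running time, the LP dimension is $d=\sum_i(|T_i|^2+|T_i|)$, and the bit complexity of each vertex of the shrunken feasible region is polynomial in $d$, $b$, and the bit-complexity bound $\ell$ on the outputs $\mathcal{A}(\vec{w})$ of the underlying approximation algorithm (which in turn bounds the vertex bit complexity of $\alpha F(\mathcal{F},\mathcal{D})$, via property~3 of Theorem~\ref{thm:CaiDW1}). Standard ellipsoid analysis then yields total running time polynomial in $d$, $b$, and the runtime of $WSO$ on polynomially-bit-bounded queries, matching the claim. The main subtle point, and the reason weird oracles are needed here at all, is that $WSO$ does not certify membership in $F(\mathcal{F},\mathcal{D})$ itself, only in $\alpha F(\mathcal{F},\mathcal{D})$; homogeneity of the BIC and IR constraints in $(\vec{\pi},\vec{P})$ is what lets this loss of an $\alpha$ factor in feasibility translate into only an $\alpha$ factor loss in the optimal revenue, rather than breaking BIC/IR feasibility of the returned implicit form.
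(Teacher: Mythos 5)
Your variable count and your key scaling lemma (the shrunken LP has value at least $\alpha\cdot OPT$, by homogeneity of the BIC/IR constraints and linearity of revenue) are exactly the paper's argument for $c_2\geq\alpha c_1$. The genuine gap is in how you use $WSO$: you treat it as a valid separation oracle for $\alpha F(\mathcal{F},\mathcal{D})$, conclude that ellipsoid run with it computes an exact optimum of the shrunken LP, and then argue feasibility of the output via $\alpha F(\mathcal{F},\mathcal{D})\subseteq F(\mathcal{F},\mathcal{D})$. But a weird separation oracle is \emph{not} a separation oracle for $\alpha F(\mathcal{F},\mathcal{D})$: Theorem~\ref{thm:CaiDW1} only guarantees that every halfspace it outputs contains $\alpha F(\mathcal{F},\mathcal{D})$ and that every point it accepts lies in $\text{Conv}\{\mathcal{A}(\vec{w}_1),\ldots,\mathcal{A}(\vec{w}_l)\}$; it may accept points far outside $\alpha F(\mathcal{F},\mathcal{D})$. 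Consequently the ellipsoid run does not solve your shrunken LP, the returned point need not lie in $\alpha F(\mathcal{F},\mathcal{D})$, and your containment argument does not establish feasibility. Your closing sentence, that $WSO$ certifies membership in $\alpha F(\mathcal{F},\mathcal{D})$ but not in $F(\mathcal{F},\mathcal{D})$, is exactly backwards: its ``yes'' answers certify membership in $F(\mathcal{F},\mathcal{D})$ (via the convex-hull certificate and Lemma~\ref{lem:convexpolytope}) but \emph{not} in $\alpha F(\mathcal{F},\mathcal{D})$.

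The paper closes this gap with Corollary~\ref{cor:CaiDW1}: running the \emph{same} ellipsoid algorithm with $WSO$ in place of a true separation oracle for $\alpha F(\mathcal{F},\mathcal{D})$ returns a point satisfying all explicit constraints, accepted by $WSO$, with objective value $c_3\geq c_2\geq \alpha c_1$. This is a nontrivial imported result from~\cite{CaiDW13} and cannot be replaced by ``standard ellipsoid analysis,'' because the oracle's answers are not consistent with any single convex body. Feasibility and the final conclusion then come from guarantee 2 of Theorem~\ref{thm:CaiDW1} together with Observation~\ref{obs:LPsolvesMDMDP}, not from membership in the shrunken polytope. A smaller issue: the bit-complexity bound the ellipsoid needs here is that the extreme points of $F(\mathcal{F},\mathcal{D})$ (and hence of the whole feasible region) have bit complexity $\poly(b,\sum_i|T_i|)$, which follows from the bound $b$ on $\Pr[t_i]$ and $t_i(X)$; your appeal to a bound $\ell$ on the outputs of $\mathcal{A}$ is not available in this corollary, where $WSO$ is given as a black box.
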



\subsection{A Reduction from MDMDP to SADP}\label{sec:MDMDPtoSADP}
Based on Corollary~\ref{cor:LP}, the only obstacle to solving the MDMDP is obtaining a separation oracle for $F(\mathcal{F},\mathcal{D})$ (or ``weird'' separation oracle for $\alpha F(\mathcal{F},\mathcal{D})$). In this section, we use Theorem~\ref{thm:CaiDW1} to obtain a weird separation oracle for $\alpha F(\mathcal{F},\mathcal{D})$ using only black box access to an $\alpha$-approximation algorithm for SADP. For ease of exposition, most proofs can be found in Appendix~\ref{app:MDMDPtoSADP}.

In order to apply Theorem~\ref{thm:CaiDW1}, we must first understand what it means to compute $\vec{x} \cdot \vec{w}$ in our setting. Proposition~\ref{prop:virtualwelfare} below accomplishes this. In reading the proposition, recall that $\vec{x}$ is some implicit form $\vec{\pi}$, so the direction $\vec{w}$ has components $w_i(t_i,t'_i)$ for all $i,t_i,t'_i$.  Also note that a type $t_i$ is a function that maps allocations to values. So $\sum_{t_i\in T_{i}} C_i t_i$ is also a function that maps allocations to values (and therefore could be interpreted as a type or virtual type). Namely, it maps $X$ to $\sum_{t_i\in T_{i}} C_i t_i(X)$

\begin{proposition}\label{prop:virtualwelfare}
Let $\vec{\pi} \in F(\mathcal{F},\mathcal{D})$ and let $\vec{w}$ be a direction in $[-1,1]^{\sum_i |T_i|^2}$. Then $\vec{\pi} \cdot \vec{w}$ is exactly the expected virtual welfare of a mechanism with implicit form $\vec{\pi}$ when the virtual type of bidder $i$ with real type $t'_i$ is $\sum_{t_i\in T_{i}} \frac{w_i(t_i,t'_i)}{\Pr[t'_i]}\cdot t_i$.
\end{proposition}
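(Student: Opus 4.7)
The proof is essentially a direct computation chasing definitions, so my plan is to write out both sides and match them term-by-term. The key observation is that the factor of $1/\Pr[t'_i]$ in the definition of the virtual type is designed precisely to cancel the $\Pr[t'_i]$ weight that appears when we take the expectation over a bidder's reported type.

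First, I would expand the expected virtual welfare of a mechanism $M$ with implicit form $\vec{\pi}$. Writing $\tilde{t}'_i := \sum_{t_i \in T_i} \tfrac{w_i(t_i,t'_i)}{\Pr[t'_i]} t_i$ for the virtual type associated with real type $t'_i$, and letting $X_i(\cdot)$ denote the interim allocation rule of $M$, the expected virtual welfare is
\[
\sum_{i} \mathbb{E}_{t'_i \leftarrow \mathcal{D}_i}\bigl[\tilde{t}'_i(X_i(t'_i))\bigr]
= \sum_{i}\sum_{t'_i \in T_i} \Pr[t'_i]\cdot \sum_{t_i \in T_i} \frac{w_i(t_i,t'_i)}{\Pr[t'_i]}\, t_i(X_i(t'_i)).
\]
The $\Pr[t'_i]$ factors cancel, reducing the expression to $\sum_{i,t_i,t'_i} w_i(t_i,t'_i)\, t_i(X_i(t'_i))$. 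Here I would invoke linearity of expectation together with the quasi-linear extension $t_i(X) = \mathbb{E}_{x\leftarrow X}[t_i(x)]$ introduced in the preliminaries to justify moving the sum inside the type functions.

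Next I would match this with $\vec{\pi}\cdot \vec{w} = \sum_{i,t_i,t'_i} \pi_i(t_i,t'_i)\, w_i(t_i,t'_i)$. By definition of the implicit form,
\[
\pi_i(t_i, t'_i) = \mathbb{E}_{\vec{t}_{-i}\leftarrow \mathcal{D}_{-i}}[t_i(A(t'_i;\vec{t}_{-i}))] = t_i\!\left(\mathbb{E}_{\vec{t}_{-i}\leftarrow \mathcal{D}_{-i}}[A(t'_i;\vec{t}_{-i})]\right) = t_i(X_i(t'_i)),
\]
where the middle equality again uses linearity of $t_i$ viewed as a function on distributions, and the last equality uses the definition of the interim allocation rule. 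Substituting yields exactly the expression derived above, establishing the equality.

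This is essentially routine once one is careful about the bookkeeping. The only subtle point—and the place I would be most careful—is the identification $\pi_i(t_i,t'_i) = t_i(X_i(t'_i))$; one must use that $t_i$ is extended linearly to distributions so that averaging $t_i$ over the random allocation $A(t'_i;\vec{t}_{-i})$ equals evaluating $t_i$ on the interim distribution $X_i(t'_i)$. I would note explicitly that the proposition holds for \emph{any} mechanism with implicit form $\vec{\pi}$: the computation only uses the components $\pi_i(t_i,t'_i)$ of the implicit form, so different underlying allocation rules that realize the same $\vec{\pi}$ produce the same dot product with $\vec{w}$, which is exactly what one needs to later invoke Theorem~\ref{thm:CaiDW1}.
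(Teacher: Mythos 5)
Your proposal is correct and matches the paper's own proof in all essentials: both expand $\vec{\pi}\cdot\vec{w}$ (you just run the computation in the reverse direction, starting from the virtual welfare), insert the $\Pr[t'_i]/\Pr[t'_i]$ factor, and rest on the identification $\pi_i(t_i,t'_i)=t_i(X_i(t'_i))$, which the paper states as an observation and you justify via the linear extension of $t_i$ to distributions. Your closing remark that the argument depends only on the implicit form, not on the particular mechanism realizing it, is likewise made in the paper's proof.
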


Now that we know how to interpret $\vec{w} \cdot \vec{\pi}$, recall that Theorem~\ref{thm:CaiDW1} requires an algorithm $\mathcal{A}$ that takes as input a direction $\vec{w}$ and outputs a $\vec{\pi}$ with $ \vec{w} \cdot \vec{\pi} \geq \alpha\cdot\max_{\vec{x} \in F(\mathcal{F},\mathcal{D})} \{\vec{w} \cdot \vec{x}\}$. With Proposition~\ref{prop:virtualwelfare}, we know that this is exactly asking for a feasible implicit form whose virtual welfare (computed with respect to $\vec{w}$) is at least an $\alpha$-fraction of the virtual welfare obtained by the optimal feasible implicit form. The optimal feasible implicit form corresponds to a mechanism that, on every profile, chooses the allocation in $\mathcal{F}$ that maximizes virtual welfare. One way to obtain an $\alpha$-approximate implicit form is to use a mechanism that, on every profile, chooses an $\alpha$-approximate outcome in $\mathcal{F}$. Corollary~\ref{cor:obv} below states this formally.

\begin{corollary}\label{cor:obv}
Let $M$ be a mechanism that on profile $(t'_1,\ldots,t'_m)$ chooses a (possibly randomized) allocation $X \in \mathcal{F}$ such that

$$\sum_{i\in[m]} \sum_{t_i\in T_{i}} \frac{w_i(t_i,t'_i)}{\Pr[t'_i]}\cdot t_i(X) \geq \alpha\cdot \max_{X' \in \mathcal{F}} \left\{\sum_{i\in[m]} \sum_{t_i\in T_{i}} \frac{w_i(t_i,t'_i)}{\Pr[t'_i]}\cdot t_i(X')\right\}.$$

Then the implicit form, $\vec{\pi}(M)$ satisfies:

$$\vec{\pi}(M) \cdot \vec{w} \geq \alpha\cdot\max_{\vec{x} \in F(\mathcal{F},\mathcal{D})} \{\vec{x} \cdot \vec{w}\}.$$
\end{corollary}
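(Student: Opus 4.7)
The plan is to use Proposition~\ref{prop:virtualwelfare} to translate the inner product $\vec{x}\cdot\vec{w}$ into an expected virtual welfare quantity, and then observe that a mechanism which pointwise $\alpha$-approximates the virtual-welfare-maximizing allocation on every profile must also $\alpha$-approximate it in expectation, which in turn dominates any feasible implicit form.

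More concretely, fix the ``virtual type'' $\tilde{t}'_i := \sum_{t_i \in T_i}\frac{w_i(t_i,t'_i)}{\Pr[t'_i]}\, t_i$ for each bidder $i$ reporting real type $t'_i$, and define the per-profile virtual welfare function
\[
V(\vec{t}',X) \;=\; \sum_{i\in[m]} \tilde{t}'_i(X) \;=\; \sum_{i\in[m]}\sum_{t_i \in T_i} \frac{w_i(t_i,t'_i)}{\Pr[t'_i]}\, t_i(X).
\]
By Proposition~\ref{prop:virtualwelfare}, for any mechanism $M'$ (with allocation rule $A'$) we have
\[
\vec{\pi}(M')\cdot \vec{w} \;=\; \mathbb{E}_{\vec{t}'\leftarrow \mathcal{D}}\bigl[V(\vec{t}',A'(\vec{t}'))\bigr].
\]

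With this identity in hand, the argument proceeds in two short steps. First, let $M^*$ denote any mechanism whose allocation rule picks, on each input profile $\vec{t}'$, an allocation $X^*(\vec{t}') \in \arg\max_{X' \in \mathcal{F}} V(\vec{t}',X')$. Since $M^*$ is itself feasible, its implicit form lies in $F(\mathcal{F},\mathcal{D})$, and since no feasible mechanism can beat $M^*$ pointwise, taking expectations gives
\[
\max_{\vec{x}\in F(\mathcal{F},\mathcal{D})} \vec{x}\cdot\vec{w} \;=\; \vec{\pi}(M^*)\cdot\vec{w} \;=\; \mathbb{E}_{\vec{t}'\leftarrow \mathcal{D}}\Bigl[\max_{X'\in\mathcal{F}} V(\vec{t}',X')\Bigr].
\]
Second, by hypothesis the mechanism $M$ selects on every profile $\vec{t}'$ an allocation $X(\vec{t}')\in\mathcal{F}$ satisfying $V(\vec{t}',X(\vec{t}')) \geq \alpha \cdot \max_{X'\in\mathcal{F}} V(\vec{t}',X')$. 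Taking expectations over $\vec{t}'\leftarrow \mathcal{D}$ and applying linearity of expectation (together with Proposition~\ref{prop:virtualwelfare} on the left-hand side) yields
\[
\vec{\pi}(M)\cdot\vec{w} \;=\; \mathbb{E}_{\vec{t}'}[V(\vec{t}',X(\vec{t}'))] \;\geq\; \alpha\, \mathbb{E}_{\vec{t}'}\Bigl[\max_{X'\in\mathcal{F}} V(\vec{t}',X')\Bigr] \;=\; \alpha\, \max_{\vec{x}\in F(\mathcal{F},\mathcal{D})} \vec{x}\cdot\vec{w},
\]
which is exactly the claimed inequality.

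There is no real obstacle here: the corollary is essentially a bookkeeping consequence of Proposition~\ref{prop:virtualwelfare}. The only point worth highlighting is that we use the pointwise $\alpha$-approximation property of $M$, which is strictly stronger than needing an $\alpha$-approximation in expectation, and we use compactness of $F(\mathcal{F},\mathcal{D})$ (guaranteed by Lemma~\ref{lem:convexpolytope}) together with the fact that pointwise maximization over $\mathcal{F}$ upper bounds the expected virtual welfare of every feasible mechanism; the rest follows by linearity of expectation.
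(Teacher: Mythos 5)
Your proposal is correct and takes essentially the same route as the paper's own proof: interpret $\vec{\pi}(M)\cdot\vec{w}$ as expected virtual welfare via Proposition~\ref{prop:virtualwelfare}, and observe that a per-profile $\alpha$-approximation to the virtual-welfare maximizer implies an $\alpha$-approximation to the optimal expected virtual welfare over $F(\mathcal{F},\mathcal{D})$. The only difference is that you spell out explicitly (via the mechanism $M^*$ that maximizes virtual welfare on every profile) why $\max_{\vec{x}\in F(\mathcal{F},\mathcal{D})}\vec{x}\cdot\vec{w}$ equals the expectation of the pointwise maximum, a step the paper leaves implicit; the appeal to compactness is unnecessary but harmless.
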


With Corollary~\ref{cor:obv}, we now want to study the problem of maximizing virtual welfare on a given profile. This turns out to be exactly an instance of SADP. 

\begin{proposition}\label{prop:SADPvirtualwelfare}
Let $t_i \in \mathcal{V}$ for all $i, t_i$. Let also $C_i(t_i)$ be any real numbers, and $\sum_{t_i\in T_{i}} C_i(t_i) t_i(\cdot)$ be the virtual type of bidder $i$. Then any $X \in \mathcal{F}$ that is an $\alpha$-approximation to SADP($\mathcal{F}$,$\mathcal{V}$) on input $(f_1 = \sum_i\sum_{t_i | C_i(t_i) > 0} C_i(t_i) t_i(\cdot), f_2 = \sum_i\sum_{t_i | C_i(t_i) < 0} -C_i(t_i) t_i(\cdot))$ is also an $\alpha$-approximation for maximizing virtual welfare. That is:

$$ \sum_i \sum_{t_i} C_i(t_i) t_i(X) \geq \alpha\cdot\max_{X' \in \mathcal{F}} \left\{\sum_i \sum_{t_i} C_i(t_i) t_i(X')\right\}$$
\end{proposition}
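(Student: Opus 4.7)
The plan is essentially a definition-chasing argument that shows SADP on two carefully constructed functions reduces to virtual welfare maximization with no loss.

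First I would split the virtual welfare sum by the sign of the coefficients. Define
\[
f_1 = \sum_i \sum_{t_i : C_i(t_i) > 0} C_i(t_i)\, t_i(\cdot), \qquad f_2 = \sum_i \sum_{t_i : C_i(t_i) < 0} \bigl(-C_i(t_i)\bigr)\, t_i(\cdot),
\]
exactly as in the statement. Then for every $X \in \mathcal{F}$, straightforward algebra gives
\[
\sum_i \sum_{t_i} C_i(t_i)\, t_i(X) \;=\; f_1(X) - f_2(X),
\]
since the terms with $C_i(t_i) = 0$ contribute nothing. Crucially, both $f_1$ and $f_2$ are nonnegative linear combinations of elements of $\mathcal{V}$ (the coefficients $C_i(t_i)$ and $-C_i(t_i)$ are strictly positive on the respective index sets), so they each lie in $\mathcal{V}^*$ by definition. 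Hence $(f_1, f_2)$ is a legal input instance for SADP$(\mathcal{F}, \mathcal{V})$ with $k = 2$.

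Next I would unpack what SADP demands when $k = 2$. The only admissible index is $j^* = 1$, so the $\alpha$-approximation condition collapses to
\[
f_1(X) - f_2(X) \;\geq\; \alpha \cdot \max_{X' \in \mathcal{F}}\bigl\{f_1(X') - f_2(X')\bigr\}.
\]
Substituting the identity from the previous step on both sides of this inequality yields
\[
\sum_i \sum_{t_i} C_i(t_i)\, t_i(X) \;\geq\; \alpha \cdot \max_{X' \in \mathcal{F}}\Bigl\{\sum_i \sum_{t_i} C_i(t_i)\, t_i(X')\Bigr\},
\]
which is exactly the stated virtual welfare guarantee.

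There is no real obstacle here: the argument is a direct translation via the positive/negative decomposition of the virtual coefficients, combined with the observation that $k=2$ makes the ``any-differences'' freedom in SADP trivial. The only point worth being careful about is confirming that $\mathcal{V}^*$ is closed under the positive scalings used, which is immediate from the definition of $\mathcal{V}^*$ as the closure of $\mathcal{V}$ under addition and positive scalar multiplication.
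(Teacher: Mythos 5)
Your proof is correct and follows essentially the same route as the paper's: verify that $(f_1,f_2)$ is a legal SADP$(\mathcal{F},\mathcal{V})$ instance because $f_1,f_2\in\mathcal{V}^*$, note that with $k=2$ the approximation guarantee is the single inequality $f_1(X)-f_2(X)\geq\alpha\max_{X'}\{f_1(X')-f_2(X')\}$, and rearrange using $\sum_i\sum_{t_i}C_i(t_i)t_i(\cdot)=f_1(\cdot)-f_2(\cdot)$. No meaningful differences from the paper's argument.
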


{Combining Corollary~\ref{cor:obv} and Proposition~\ref{prop:SADPvirtualwelfare} yields Corollary~\ref{cor:important} below.}

\begin{corollary}\label{cor:important}
Let $G$ be any $\alpha$-approximation algorithm for SADP($\mathcal{F}$,$\mathcal{V}$). Let also $M$ be the mechanism that, on profile $(t'_1,\ldots,t'_m)$ chooses the allocation $$G\left(\left\{\sum_i \sum_{t_i| w_i(t_i,t'_i) > 0}  (w_i(t_i,t'_i)/\Pr[t'_i])t_i(\cdot)\ ,\ \sum_i \sum_{t_i| w_i(t_i,t'_i) < 0}  -(w_i(t_i,t'_i)/\Pr[t'_i])t_i(\cdot)\right\}\right).$$ Then the interim form $\vec{\pi}(M)$ satisfies:
$$\vec{\pi}(M) \cdot \vec{w} \geq \alpha\cdot\max_{\vec{x} \in F(\mathcal{F},\mathcal{D})} \{\vec{x} \cdot \vec{w}\}.$$
\end{corollary}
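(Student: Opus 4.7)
The plan is to combine Proposition~\ref{prop:SADPvirtualwelfare} and Corollary~\ref{cor:obv} directly, with only a bit of bookkeeping in between. Fix any reported profile $(t'_1,\ldots,t'_m)$ and define the virtual-type coefficients $C_i(t_i) := w_i(t_i,t'_i)/\Pr[t'_i]$ for every bidder $i$ and every $t_i \in T_i$. Split these coefficients into their positive and negative parts; the resulting two functions are exactly the $f_1$ and $f_2$ appearing in the statement, and they are the inputs that $M$ hands to $G$ on this profile. Because each coefficient appearing in $f_1$ or $f_2$ is non-negative and $\mathcal{V}^*$ is closed under addition and positive scalar multiplication, we have $f_1,f_2 \in \mathcal{V}^*$, so invoking SADP($\mathcal{F},\mathcal{V}$) on $(f_1,f_2)$ is legal.

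Next, I would invoke Proposition~\ref{prop:SADPvirtualwelfare} with exactly these $C_i(t_i)$ and this pair $(f_1,f_2)$. Since $G$ is an $\alpha$-approximation algorithm for SADP, the allocation $X = G(\{f_1,f_2\})$ that $M$ chooses on profile $(t'_1,\ldots,t'_m)$ satisfies
$$\sum_i \sum_{t_i} C_i(t_i)\, t_i(X) \;\geq\; \alpha \cdot \max_{X' \in \mathcal{F}}\left\{\sum_i \sum_{t_i} C_i(t_i)\, t_i(X')\right\}.$$
In other words, $M$ chooses an $\alpha$-approximately virtual-welfare-maximizing allocation on \emph{every} reported profile, with the virtual types induced by $\vec{w}$ in the way Proposition~\ref{prop:virtualwelfare} prescribes.

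Finally, I would apply Corollary~\ref{cor:obv} verbatim: its hypothesis is precisely the per-profile $\alpha$-approximation guarantee established in the previous paragraph, and its conclusion is the desired inequality $\vec{\pi}(M) \cdot \vec{w} \geq \alpha \cdot \max_{\vec{x} \in F(\mathcal{F},\mathcal{D})} \{\vec{x} \cdot \vec{w}\}$. The only step that requires care, and arguably the main (mild) obstacle, is the bookkeeping step verifying that $f_1(X) - f_2(X)$ really coincides with $\sum_i \sum_{t_i} C_i(t_i) t_i(X)$ on every $X$, so that an $\alpha$-approximate SADP solution on input $(f_1,f_2)$ (which, with $k=2$, forces $j^*=1$ and hence maximizes $f_1 - f_2$) is automatically an $\alpha$-approximate virtual-welfare maximizer. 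Once this identification is made, the corollary follows with no further calculation.
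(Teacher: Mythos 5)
Your proposal is correct and follows essentially the same route as the paper, which obtains Corollary~\ref{cor:important} precisely by combining Proposition~\ref{prop:SADPvirtualwelfare} (applied profile-by-profile with $C_i(t_i)=w_i(t_i,t'_i)/\Pr[t'_i]$) with Corollary~\ref{cor:obv}. The bookkeeping you flag---that the $k=2$ SADP objective $f_1-f_2$ coincides with the virtual welfare $\sum_i\sum_{t_i}C_i(t_i)t_i(\cdot)$---is exactly what Proposition~\ref{prop:SADPvirtualwelfare} already establishes, so nothing further is needed.
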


At this point, we would like to just let $\mathcal{A}$ be the algorithm that takes as input a direction $\vec{w}$ and computes the implicit form prescribed by Corollary~\ref{cor:important}. Corollary~\ref{cor:important} shows that this algorithm satisfies the hypotheses of Theorem~\ref{thm:CaiDW1}, so we would get a weird separation oracle for $\alpha F(\mathcal{F},\mathcal{D})$. {Unfortunately, this requires some care, as computing the implicit form of a mechanism exactly would require enumerating every profile in the support of $\mathcal{D}$, and also enumerating the randomness used on each profile. Luckily, however, both of these issues arose in previous work and were solved~\cite{CaiDW12b,CaiDW13}. We overview the necessary approach in Section~\ref{app:MDMDPtoSADP}, and refer the reader to~\cite{CaiDW12b,CaiDW13} for complete details.} 

After these modifications, the only remaining step is to turn the implicit form output by the LP of Figure~\ref{fig:LPMDMD} into an actual mechanism. This process is simple and made possible by guarantee 2) of Theorem~\ref{thm:CaiDW1}. We overview the process in Section~\ref{app:MDMDPtoSADP}, as well as give a formal description of our algorithm to solve MDMDP as Algorithm~\ref{alg:solveMDMDP}. We conclude this section with a theorem describing the performance of this algorithm.\mattnote{ In the following theorem, $G$ denotes a (possibly randomized) $\alpha$-approximation algorithm for SADP($\mathcal{F}$,$\mathcal{V}$).}
\begin{theorem}\label{thm:revenue}
Let $b$ be an upper bound on the bit complexity of $t_i(X)$ and $\Pr[t_i]$ for any $i \in [m]$, $t_i \in T_i$, and $X \in \mathcal{F}$. Then Algorithm~\ref{alg:solveMDMDP} makes $\poly(\sum_i |T_i|,1/\epsilon,b)$ calls to $G$, and terminates in time $\poly(\sum_i |T_i|,1/\epsilon, b, rt_{G}(\poly(\sum_i |T_i|,1/\epsilon,b)))$, where $rt_{G}(x)$ is the running time of $G$ on input with bit complexity $x$. If \mattnote{the types are normalized so that $t_i(X) \in [0,1]$ for all $i$, $t_i \in T_i$, and $X \in \mathcal{F}$}, and $OPT$ is the optimal obtainable expected revenue for the given MDMDP instance, then the mechanism output by Algorithm~\ref{alg:solveMDMDP} obtains expected revenue at least $\alpha OPT - \epsilon$, and is $\epsilon$-BIC with probability at least $1-\exp(\poly(\sum_{i}|T_{i}|,1/\epsilon,b))$.\end{theorem}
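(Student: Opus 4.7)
The plan is to combine Corollary~\ref{cor:LP} with Theorem~\ref{thm:CaiDW1}, using the chain of reductions established in Corollaries~\ref{cor:obv} and~\ref{cor:important} and Proposition~\ref{prop:SADPvirtualwelfare}. By Corollary~\ref{cor:LP}, an $\alpha$-approximate implicit form to MDMDP can be computed in time $\poly(\sum_i |T_i|, b, rt_{WSO})$ provided a weird separation oracle $WSO$ for $\alpha F(\mathcal{F}, \mathcal{D})$. Theorem~\ref{thm:CaiDW1} yields such a $WSO$ from an algorithm $\mathcal{A}$ that, on input a direction $\vec{w}$, outputs a point $\vec{\pi} \in F(\mathcal{F}, \mathcal{D})$ with $\vec{\pi} \cdot \vec{w} \geq \alpha \max_{\vec{x} \in F(\mathcal{F},\mathcal{D})} \{\vec{x}\cdot\vec{w}\}$. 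Corollary~\ref{cor:important} describes how to build $\mathcal{A}$ by invoking $G$ on virtual type data derived from $\vec{w}$: the mechanism that, on every profile $\vec{t}'$, runs $G$ to obtain an $\alpha$-approximate virtual-welfare maximizing allocation has implicit form satisfying the required inequality.

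The main obstacle is that $\mathcal{A}$, as described, cannot actually be implemented: computing the implicit form of the mechanism produced by $G$ exactly requires enumerating the entire support of $\mathcal{D}$ and also enumerating the randomness used on each profile. To handle this, I would follow the sampling approach of~\cite{CaiDW12b, CaiDW13}: estimate each coordinate of the implicit form by drawing $\poly(\sum_i |T_i|, 1/\epsilon, b)$ independent profile samples from $\mathcal{D}_{-i}$ and running $G$ on each. Standard Chernoff/Hoeffding bounds (using that $t_i(X) \in [0,1]$) ensure that with probability $1 - \exp(-\poly(\sum_i |T_i|, 1/\epsilon, b))$, the estimated implicit form is within additive $\epsilon'$ (for appropriately chosen $\epsilon' = \epsilon/\poly$) of the true one in every coordinate. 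Combined with a union bound over all the $\poly(\sum_i |T_i|, 1/\epsilon, b)$ queries that $WSO$ makes per iteration of the ellipsoid method, and all iterations, this yields an approximate $WSO$ that, when plugged into the LP of Figure~\ref{fig:LPMDMD}, produces a feasible implicit form whose revenue is at least $\alpha \cdot OPT - \epsilon/2$, and whose BIC/IR constraints are violated by at most $\epsilon/2$.

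Once the LP has output a good implicit form $\vec{\pi}^I$, I still need to turn this into an executable mechanism. This uses guarantee (2) of Theorem~\ref{thm:CaiDW1}: when $WSO(\vec{\pi}^I)$ answers ``yes,'' it explicitly exhibits directions $\vec{w}_1, \ldots, \vec{w}_\ell$ such that $\vec{\pi}^I$ lies in the convex hull of $\mathcal{A}(\vec{w}_1), \ldots, \mathcal{A}(\vec{w}_\ell)$. Writing $\vec{\pi}^I = \sum_j \lambda_j \mathcal{A}(\vec{w}_j)$ with $\lambda_j \geq 0, \sum_j \lambda_j = 1$ (solvable as a small LP), the final mechanism draws $j$ with probability $\lambda_j$ and then runs the mechanism corresponding to $\mathcal{A}(\vec{w}_j)$, which in turn calls $G$ on the profile with virtual types derived from $\vec{w}_j$. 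This mechanism's true implicit form equals $\vec{\pi}^I$ in expectation over the coin flips of the mechanism, so it inherits the revenue guarantee and approximate BIC/IR property of $\vec{\pi}^I$.

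Finally, the running time and query count follow by accounting: the ellipsoid method runs for $\poly(\sum_i |T_i|, 1/\epsilon, b)$ iterations, each calling $WSO$, which in turn makes $\poly$-many queries to $\mathcal{A}$, each of which makes $\poly$-many samples-worth of calls to $G$ on inputs of bit complexity $\poly(\sum_i |T_i|, 1/\epsilon, b)$. Constructing the final mechanism adds one more $\poly$ factor of $G$-calls to realize the convex combination. Absorbing the $\epsilon/2$ slack in both revenue and incentive compatibility into the sampling parameter yields the stated $\alpha OPT - \epsilon$ revenue bound and $\epsilon$-BIC property with the claimed failure probability.
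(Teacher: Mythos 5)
Your overall skeleton matches the paper's: solve the LP of Figure~\ref{fig:LPMDMD} with a weird separation oracle obtained from Theorem~\ref{thm:CaiDW1}, where $\mathcal{A}$ runs $G$ per profile on the virtual types of Corollary~\ref{cor:important} (via Proposition~\ref{prop:SADPvirtualwelfare}), and then extract a mechanism from the explicit convex-hull certificate. The gap is in how you handle the fact that $\mathcal{A}$ cannot compute implicit forms exactly. You propose to estimate each coordinate of $\mathcal{A}(\vec{w})$ by fresh samples, apply Chernoff/Hoeffding plus a union bound over all oracle queries, and run the ellipsoid with the resulting ``approximate WSO.'' This is precisely the route the paper explicitly rejects: the inconsistent, per-query sampling error breaks the geometry on which Theorem~\ref{thm:CaiDW1} relies. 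The weird separation oracle's construction needs $\mathcal{A}$ to be a fixed deterministic map whose outputs are genuine points of $F(\mathcal{F},\mathcal{D})$, both so that every halfspace it outputs really contains $\alpha F(\mathcal{F},\mathcal{D})$ and so that guarantee (2) --- the certificate $\vec{x}\in\text{Conv}\{\mathcal{A}(\vec{w}_1),\ldots,\mathcal{A}(\vec{w}_l)\}$ --- is exact. With noisy estimates, your decomposition $\vec{\pi}^I=\sum_j\lambda_j\mathcal{A}(\vec{w}_j)$ is over estimated vectors rather than the true implicit forms of the underlying mechanisms, so your closing claim that the output mechanism's implicit form equals $\vec{\pi}^I$ (and hence inherits its guarantees) is not justified, nor is feasibility of the LP output with respect to $F(\mathcal{F},\mathcal{D})$.

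What the paper actually does is different: it replaces $\mathcal{D}$ by $\mathcal{D}'$, the uniform distribution over a polynomial-size multiset of profiles sampled once up front (the approach of Section~5.2 of~\cite{CaiDW12b}), and it derandomizes $G$ into a deterministic $G'$ by fixing its randomness in advance and taking a union bound (Appendix~G.2 of~\cite{CaiDW13}). Everything downstream --- the values $\mathcal{A}(\vec{w})$, the weird separation oracle for $\alpha F(\mathcal{F},\mathcal{D}')$, the LP solution, and the convex decomposition --- is then computed \emph{exactly} with respect to $\mathcal{D}'$, so the output mechanism is exactly BIC and IR with respect to $\mathcal{D}'$ and its implicit form under $\mathcal{D}'$ equals the LP output. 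The losses in the theorem statement ($\alpha OPT-\epsilon$ revenue, $\epsilon$-BIC, failure probability exponentially small in the sample size) come solely from transferring guarantees from $\mathcal{D}'$ back to the true $\mathcal{D}$, not from slack in the LP constraints as in your write-up. To repair your argument you would need to adopt this empirical-distribution replacement (or else prove from scratch that the ellipsoid method tolerates an inconsistent, noisy weird separation oracle, which is exactly the difficulty the $\mathcal{D}'$ construction is designed to sidestep).
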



\section{Reduction from SADP to MDMDP}\label{sec:SADPtoMDMDP}
In this section we overview our reduction from SADP to MDMDP that holds for a certain subclass of SADP instances (a much longer exposition of the complete approach can be found in Section~\ref{app:SADPtoMDMDP}). The subclass is general enough for us to conclude that revenue maximization, even for a single submodular bidder, {is impossible to approximate within any polynomial factor unless $NP = RP$}. For this section, we will restrict ourselves to single-bidder settings, as our reduction will always output a single-bidder instance of MDMDP. 

Here is an outline of our approach: In Section~\ref{sec:compatibility}, we start by defining two properties of {allocation rules. The first of these properties is the well-known \emph{cyclic monotonicity}. The second is a new property we define called \emph{compatibility}. Compatibility is a slightly (and strictly) stronger condition than cyclic monotonicity. The main result of this section is a simple formula (of the form of the objective function that appears in SADP) that upper bounds the maximum obtainable revenue using a given allocation rule, as well as a proof that this bound is attainable when the allocation rule is compatible. Both definitions and results can be found in Section~\ref{sec:compatibility}.
}

{Next, we relate SADP to MDMDP using the results of Section~\ref{sec:compatibility}, showing how to view any (possibly suboptimal) solution to a SADP instance as one for a corresponding MDMDP instance and vice versa. We show that for compatible SADP instances, any optimal solution is also optimal in the corresponding MDMDP instance.  Furthermore, we show (using the work of Section~\ref{sec:compatibility}) that, for any $\alpha$-approximate MDMDP solution $X$, the corresponding SADP solution $Y$  is necessarily an approximate solution to SADP as well, and a lower bound on its approximation ratio as a function of $\alpha$. Therefore, this constitutes a black-box reduction from approximating compatible instances of SADP to approximating MDMDP.} This is presented in Section~\ref{app:SADPtoMDMDP2}.

Finally, in Section~\ref{sec:NPhardSubmodular}, we give a class of compatible SADP instances, where $\mathcal{V}$ is the class of submodular functions and $\mathcal{F}$ is trivial, for which SADP is impossible to approximate within any polynomial factor unless $NP = RP$. Using the reduction of Section~\ref{app:SADPtoMDMDP2} we may immediately conclude that unless $NP = RP$, revenue maximization for a single {monotone} submodular bidder under trivial feasibility constraints (the seller has one copy of each of $n$ goods and can award any subset to the bidder) is impossible to approximate within any polynomial factor. Note that, on the other hand, welfare is trivial to maximize in this setting: simply give the bidder every item. This section is concluded with the proof of the following theorem. Formal definitions of submodularity, value oracle, demand oracle, and explicit access can be found at the start of Section~\ref{sec:NPhardSubmodular}

\begin{theorem}\label{thm:SADPandMDMDPhard}
The problems SADP($2^{[n]}$,monotone submodular functions) (for $k = \poly(n)$) and MDMDP($2^{[n]}$,monotone submodular functions) (for $k = |T_1| = \poly(n)$) are:
\begin{enumerate}
\item Impossible to approximate within any $1/\poly(n)$-factor with only $\poly(k,n)$ value oracle queries.
\item Impossible to approximate within any $1/\poly(n)$-factor with only $\poly(k,n)$ demand oracle queries.
\item Impossible to approximate within any $1/\poly(n)$-factor given explicit access to the input functions in time $\poly(k,n)$, unless $NP = RP$.
\end{enumerate}
\end{theorem}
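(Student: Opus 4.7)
The plan is to (i) exhibit a family of SADP instances with monotone submodular functions over $\mathcal{F}=2^{[n]}$ that are provably hard to approximate within any polynomial factor in each of the three access models, and then (ii) apply the SADP-to-MDMDP reduction for \emph{compatible} instances (Section~\ref{app:SADPtoMDMDP2}) to immediately transfer the hardness to the single-bidder MDMDP problem. Since the reduction is approximation-sensitive and preserves the form of oracle access (value queries map to value queries, demand queries to demand queries, explicit descriptions to explicit descriptions), all three parts of the theorem follow at once from a single construction of hard compatible SADP instances.

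For (i), I would design hyper-types $\mathfrak{t}_1,\ldots,\mathfrak{t}_k \in \mathcal{V}^*$ (positive combinations of monotone submodular functions, which remain monotone submodular) so that approximating $\max_{j,X} \{\mathfrak{t}_j(X) - \mathfrak{t}_{j+1}(X)\}$ encodes a known hard problem. The key leverage is that while each $\mathfrak{t}_j$ is submodular, their pairwise \emph{differences} can encode essentially arbitrary set functions, allowing SADP to capture very hard optimization problems. For the oracle lower bounds (parts 1 and 2), I would build two families of SADP instances that agree on a ``symmetric'' core but differ on a planted subset---in the spirit of the symmetry-gap constructions of Mirrokni--Schwartz--Vondr\'ak---and argue that polynomially many value (respectively demand) queries cannot detect the planting with non-negligible probability while the SADP optima differ by a $\poly(n)$ factor. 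For the computational lower bound (part 3), I would reduce from a standard $1/\poly(n)$-inapproximable NP-hard problem (for instance Label Cover, or a gap version of Maximum Independent Set obtained via PCP) by encoding its gadgets as short Turing machines that evaluate the $\mathfrak{t}_j$'s.

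The main technical obstacle is that the Section~\ref{app:SADPtoMDMDP2} reduction applies only to \emph{compatible} SADP instances, a property strictly stronger than cyclic monotonicity, and generic hard instances will not automatically be compatible. My approach is to build the instances ``inside out'': first fix a target menu of allocations $X_1,\ldots,X_k$ together with a price schedule that exactly attains the SADP-style revenue upper bound from Section~\ref{sec:compatibility}, and then read off hyper-types $\mathfrak{t}_j$ from the menu so that compatibility holds by construction. The delicate step is ensuring that this compatibility-preserving construction simultaneously embeds the hardness gadget faithfully---i.e.\ that the planted gap of the underlying hard problem survives the translation into hyper-types, so that no polynomial-factor approximation to SADP can sidestep it.

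Once such compatible hard SADP instances are in hand, the Section~\ref{app:SADPtoMDMDP2} reduction produces a corresponding single-bidder MDMDP instance with a monotone submodular type distribution whose optimal revenue tracks the optimal SADP value to within constants, and whose approximation ratio is preserved up to constants. Any hypothetical $1/\poly(n)$-approximation for MDMDP would therefore yield a $1/\poly(n)$-approximation for the hard SADP instance, contradicting the lower bound established in (i). This simultaneously establishes all three clauses of Theorem~\ref{thm:SADPandMDMDPhard} for both SADP and MDMDP.
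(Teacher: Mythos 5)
Your high-level plan (build a family of hard, \emph{compatible} SADP instances over monotone submodular functions, then push them through the Section~\ref{app:SADPtoMDMDP2} reduction to get MDMDP hardness) is exactly the paper's plan, and you correctly identify both the central leverage---that differences $f_j - f_{j+1}$ of submodular functions can encode essentially arbitrary hard problems---and the central obstacle---that the reduction only applies to compatible instances. However, the specific machinery you propose is mismatched to the problem and leaves the hard step unresolved.

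First, symmetry-gap constructions \`a la Mirrokni--Schwartz--Vondr\'ak are designed to show constant-factor inapproximability of submodular maximization; they do not naturally give $1/\poly(n)$-factor hardness. The paper does something simpler and sharper: it picks a base function $h(S)=2n|S|-|S|^2$, marks a small hidden family $S_1,\ldots,S_k$ of random subsets, and defines $f_i$ to agree with $h$ everywhere except for tiny perturbations at the marked sets (chosen so each $f_i$ stays monotone submodular). Then $f_i-f_{i+1}$ is identically zero except at $S_1,\ldots,S_{i+1}$, so \emph{any} nontrivial approximation to SADP must output a marked set. This reduces the oracle lower bound to an elementary ``find the needle'' argument: polynomially many value or demand queries reveal nothing about where the planted sets are, so the success probability is exponentially small. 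There is no symmetry gap here; the gap is infinite (zero vs.\ $1/2$), which is what buys the polynomial-factor statement. Similarly, for part~3 you do not need Label Cover or a gap-preserving PCP reduction. The paper just hardwires an NP verifier $V$ into the $f_i$'s so that $f_i(S)-f_{i+1}(S)>0$ only when a slice of $g(S)$ is a valid witness; any $\alpha$-approximation must output a witness with probability $\alpha$, giving RP-hardness directly, with the gap again for free.

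Second---and this is the genuine gap---your ``inside out'' plan for ensuring compatibility is too vague to carry the proof. Reading hyper-types off a prescribed menu $(X_1,\ldots,X_k)$ gives no control over whether the resulting functions are monotone submodular, nor whether the planted hardness gadget survives the translation. In the paper the compatibility verification (Lemma~\ref{lem:compatible}) is a real argument that leans on the specific structure of the construction: one chooses geometrically growing multipliers $Q_i=(2kn)^{2i-2}$ and shows that any nonzero change in $Q_{i+1}f_{i+1}$ dominates the total welfare of all of $Q_1f_1,\ldots,Q_if_i$, which forces the max-weight matchings (both the aligned one and the shifted one) to be greedy and hence compatible, while also giving the $D$-balanced bound $D=2n^2$ needed by Proposition~\ref{prop:almost}. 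Without exhibiting concrete $Q_i$'s and proving these matching statements, ``compatibility holds by construction'' is an assertion, not a proof; and with symmetry-gap or Label-Cover style gadgets it is not clear any such $Q_i$'s exist. You would need to replace this step with an actual verification tailored to whatever instances you build.
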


\section{General Objectives}\label{sec:general}
In this section, we display the full generality of our new approach. In Section~\ref{sec:generalprelim} we update the necessary preliminaries completely. After, we give a very brief overview of how to extend the approach of Section~\ref{sec:revenue} to the general setting. Because the new approach is very similar but technically more involved, we postpone a complete discussion to Appendix~\ref{app:generalappendix}. We include the complete preliminaries to clarify exactly what problem we are solving.

\subsection{Updated Preliminaries}\label{sec:generalprelim}

Here, we update our preliminaries to accommodate the general setting. To ease the transition from the settings described in Section~\ref{sec:prelim}, we list all categories discussed but note only the changes.

\paragraph{Mechanism Design Setting.} No modifications.

\paragraph{Mechanisms.} No modifications.

\paragraph{Goal of the designer.} The designer's goal is to design a feasible, BIC, IR mechanism that maximizes the expected value of some objective function, $\obj$, when encountering a bidder profile sampled from $\mathcal{D}$ (and the bidders report truthfully). $\obj$ takes as input a type profile $\vec{t}$, a randomized outcome $X \in \Delta(\mathcal{F})$, and a randomized payment profile $P$, and it outputs a quality $\obj(\vec{t},X,P)$. We assume that $\obj$ is non-negative and concave with respect to $X$ and $P$. That is, $\obj(\vec{t},cX_1 + (1-c)X_2,cP_1 + (1-c)P_2) \geq c\obj(\vec{t},X_1,P_1) + (1-c)\obj(\vec{t},X_2,P_2)$. As the runtime of our algorithms necessarily depends on the bit complexity of the input, the complexity of $\obj$ must somehow enter the picture as well. One natural notion of complexity is as follows: we know that every concave function can be written as the minimum of a (possibly infinite) set of linear functions. \mattnote{So we shall define the bit complexity of a linear function to be the maximum bit complexity among all its coefficients, and} define the bit complexity of a concave function to be the sum of the bit complexities of each of these linear functions.\footnote{This is the natural complexity of an explicit description of a concave function that lists each such linear function. All objectives considered in this paper have finite bit complexity under this definition, but note that this value could well be infinite for arbitrary concave functions. Such objectives can be accommodated using techniques from convex optimization, but doing so carefully would cause the exposition to become overly cumbersome.}

To help put this in context, here are some examples: the social welfare objective can be described as $\obj(\vec{t},X,P) = \sum_i t_i(X)$. The revenue objective can be described as $\obj(\vec{t},X,P) = \sum_i \mathbb{E}[P_i]$. The fractional max-min objective can be described as $\obj(\vec{t},X,P) = \min_i \{t_i(X)\}$. All three examples are concave. We say that $\obj$ is \emph{allocation-only} if $\obj$ does not depend on $P$, and that $\obj$ is \emph{price-only} if $\obj$ does not depend on $\vec{t}$ or $X$. For instance, social welfare and fractional max-min are both allocation-only objectives, and revenue is a price-only objective. With the exception of revenue, virtually every objective function studied in the literature (in both mechanism and algorithm design) is allocation-only.\footnote{ A recent example of an objective function that considers both the allocation and pricing scheme can be found in~\cite{BGM13}.} For allocation-only or price-only objectives, the necessary problem statements and results have a cleaner form, which we state explicitly.

\paragraph{Formal Problem Statements.} In the problem statements below, MDMDP and SADP have been updated to contain an extra parameter $\obj$, which denotes the objective function. For SADP, we also specify simplifications for allocation-only and price-only objectives.

\notshow{\textbf{MDMDP($\mathcal{F}$, $\mathcal{V}$, $\obj$):} Given as input types $t_{ij} \in \mathcal{V}$ ($1 \leq i \leq m$, $1 \leq j \leq |T_{i}|$), distributions $\mathcal{D}_i$ over $T_i = \{t_{ij}| j\}$, find a feasible (outputs an allocation in $\mathcal{F}$ with probability $1$) BIC, IR mechanism, $M$, that maximizes $\obj$ in expectation, when $m$ bidders with types sampled from $\mathcal{D} = \times_i \mathcal{D}_i$ play $M$ truthfully (with respect to all feasible, BIC, IR mechanisms). $M$ is said to be an $\alpha$-approximation to MDMDP if the expected value of $\obj$ is at least an $\alpha$-fraction of the optimal obtainable expected value of $\obj$.}

\smallskip \textbf{MDMDP($\mathcal{F}$,$\mathcal{V}$,$\obj$):} {\sc Input:} For each bidder $i \in [m]$, a finite set of types $T_i \subseteq {\cal V}$ and a distribution ${\cal D}_i$ over $T_i$. {\sc Goal:} Find a feasible (outputs an outcome in $\mathcal{F}$ with probability $1$) BIC, IR mechanism~$M$, that maximizes $\obj$ in expectation, when $m$ bidders with types sampled from $\mathcal{D} = \times_i \mathcal{D}_i$ play $M$ truthfully (with respect to all feasible, BIC, IR mechanisms). $M$ is said to be an $\alpha$-approximation to MDMDP if the expected value of $\obj$ is at least an $\alpha$-fraction of the optimal obtainable expected value of $\obj$.

\textbf{SADP($\mathcal{F}$, $\mathcal{V}$, $\mathcal{O}$):} Given as input functions$f_{j} \in \mathcal{V}^{*}$ ($1 \leq j \leq k$), $g_i \in \mathcal{V}$ ($1\leq i \leq m$),  multipliers $c_i \in \mathbb{R}$ ($1 \leq i \leq m$), and a special multiplier $c_0 \geq 0$, find a feasible (possibly randomized) outcome $X \in \Delta(\mathcal{F})$ and (possibly randomized) pricing scheme $P$ such that there exists an index $j^* \in [k-1]$ with:
\begin{align*}
&\left(c_0 \cdot \obj((g_1,\ldots,g_m),X,P)\right) + \left(\sum_i c_i \mathbb{E}[P_i]\right) + \left(f_{j^*}(X) - f_{j^*+1}(X)\right) 
\\= &\max_{X' \in \mathcal{F},P} \left\{\left(c_0 \cdot \obj((g_1,\ldots,g_m),X',P)\right) + \left(\sum_i c_i \mathbb{E}[P_i]\right) + \left(f_{j^*}(X') - f_{j^*+1}(X')\right)\right\}.
\end{align*}
$X$ is said to be an $\alpha$-approximation to SADP if there exists an index $j^* \in [k-1]$ such that:

\begin{align*}
&\left(c_0 \cdot \obj((g_1,\ldots,g_m),X,P)\right) + \left(\sum_i c_i \mathbb{E}[P_i]\right) + \left(f_{j^*}(X) - f_{j^*+1}(X)\right) 
\\\geq &\alpha \left( \max_{X' \in \mathcal{F},P} \left\{\left(c_0 \cdot \obj((g_1,\ldots,g_m),X',P)\right) + \left(\sum_i c_i \mathbb{E}[P_i]\right) + \left(f_{j^*}(X') - f_{j^*+1}(X')\right)\right\}\right).
\end{align*}

One should interpret this objective function as maximizing $\obj$ plus virtual revenue plus virtual welfare. For allocation-only objectives, we may remove the price entirely (from the input as well as objective). That is, the input is just $f_j \in \mathcal{V}^*$ ($1\leq j \leq k$) and $g_i \in \mathcal{V}$ ($1 \leq i \leq m$) and the objective is $c_0 \obj((g_1,\ldots,g_m),X) + (f_{j^*}(X) - f_{j^*+1}(X))$. This should be interpreted as maximizing $\obj$ plus virtual welfare.

For price-only objectives, we may remove the $g_i$'s entirely (from the input as well as objective). That is, the input is just $f_j \in \mathcal{V}^*$ ($1 \leq j \leq k$) and multipliers $c_i \in \mathbb{R}$ ($1 \leq i \leq m$) and the objective is $c_0 \obj(P) + \left(\sum_i c_i\mathbb{E}[P_i]\right) + \left(f_{j^*}(X) - f_{j^*+1}(X)\right)$. It is easy to see that this separates into two completely independent problems, as the allocation $X$ and pricing scheme $P$ don't interact. The first problem is just finding a pricing scheme that maximizes $\obj(P) + \sum_i c_i\mathbb{E}[P_i]$, which should be interpreted as maximizing $\obj$ plus virtual revenue. The second is solving the simple version of SADP given in Section~\ref{sec:prelim}, which should be interpreted as maximizing virtual welfare. When the objective is revenue, maximizing $\obj(P) + \sum_i c_i\mathbb{E}[P_i]$ is trivial, {simply charge the maximum allowed price to each bidder with $c_i > -1$ and $0$ to everyone else}. This is why solving the general version of SADP for revenue reduces to the simpler version given in Section~\ref{sec:prelim}. 

\subsection{Implicit Forms}
We update the implicit form only by adding one extra component, which stores the expected value of $\obj$ when bidders sampled from $\mathcal{D}$ play $M$ truthfully. That is, we define:

$$\pi_\obj(M) = \mathbb{E}_{\vec{t} \leftarrow \mathcal{D}}[\obj(\vec{t},A(\vec{t}),P(\vec{t}))].$$

We still denote the implicit form of $M$ as $\vec{\pi}^I(M) = (\pi_\obj(M), \vec{\pi}(M), \vec{P}(M))$. We call $\pi_\obj(M)$ the objective component of $\vec{\pi}^I$. Again, we will sometimes just refer to $(\pi_\obj(M), \vec{\pi}(M))$ as the implicit form if the context is appropriate.

We modify slightly our definition of feasibility for implicit forms (but this is important). We say that an implicit form $\vec{\pi}^I=(\pi_{\obj},\vec{\pi},\vec{P})$ is \emph{feasible} with respect to $\mathcal{F},\mathcal{D},\obj$ if there exists a (possibly randomized) mechanism $M$ that chooses an allocation in $\mathcal{F}$ with probability $1$ such that $\vec{\pi}(M) = \vec{\pi}$, $\vec{P}(M) = \vec{P}$, and $\pi_\obj(M) \geq \pi_\obj \geq 0$.\footnote{For the feasible region to be convex, it is important to include all $\obj$ that is smaller than $\pi_\obj(M)$. Also, as eventually we need to maximize $\obj$, including these points will not affect the final solution. We bound it below by $0$ in order to maintain a bounded feasible region.} We denote by $F(\mathcal{F},\mathcal{D},\obj)$\footnote{\mattnote{Assuming that all types have been normalized so that $t_i(X) \in [0,1]$ for all $i,t_i \in T_i$, $X \in \mathcal{F}$,} we also may w.l.o.g. add the vacuous constraint that $0 \leq P_i(t_i) \leq 1$ for all $i,t_i$ in order to keep the feasible region bounded. These constraints are vacuous because they will be enforced in the end anyway by individual rationality.} the set of all feasible implicit forms. For allocation-only objectives, we may ignore the price component, and discuss feasibility only of $(\pi_\obj,\vec{\pi})$ (still calling an implicit form $\vec{\pi}^I$ feasible if its objective and allocation component are feasible). \mattnote{For allocation-only objectives, we therefore denote by $F(\mathcal{F},\mathcal{D},\mathcal{O})$ the set of all feasible (objective and allocation components of) implicit forms}. For price-only objectives, we can separate the question of feasibility into two parts: one for $\vec{\pi}$, and the other for $(\pi_\obj,\vec{P})$. The implicit form $\vec{\pi}^I$ is feasible if and only if $\vec{\pi}$ is feasible and $(\pi_\obj,\vec{P})$ is feasible (this is not a definition, but an observation). \mattnote{Therefore, for price-only objectives, we denote by $F(\mathcal{F},\mathcal{D},\obj)$ the set of all feasible (objective and price components of) implicit forms, and by $F(\mathcal{F},\mathcal{D})$ the set of all feasible allocation components of implicit forms.}
\subsection{Overview of Approach}\label{sec:generaloverview}
Here we give a very brief overview of how to modify the approach of Section~\ref{sec:revenue} to accommodate the general setting just described in Section~\ref{sec:generalprelim}. In Section~\ref{sec:LPgeneral}, we modify the linear program of Section~\ref{sec:MDMDPtoSADP} to search over all feasible, BIC, IR implicit forms in the general setting for the one that maximizes $\obj$ in expectation. This linear program is still of polynomial size, but requires a separation oracle for $F(\mathcal{F},\mathcal{D},\obj)$ (or a weird separation oracle for $\alpha F(\mathcal{F},\mathcal{D},\obj)$). 

In Section~\ref{sec:MDMDPtoSADPgeneral}, we use Theorem~\ref{thm:CaiDW1} to obtain a weird separation oracle for $\alpha F(\mathcal{F},\mathcal{D},\obj)$. We show that $\vec{x} \cdot \vec{w}$ corresponds to the expected value of a virtual objective for a mechanism $M$ with interm form $\vec{x}$. For general objectives that consider both price and allocation rules, this virtual objective can be interpreted as maximizing the actual objective plus virtual welfare plus virtual revenue. For allocation-only objectives, it can be interpreted as maximizing the actual objective plus virtual welfare. For price-only objectives, it can be interpreted as separately maximizing the actual objective plus virtual revenue, and separately maximizing virtual welfare. The section concludes with a proof of the following theorem. The referenced Algorithm~\ref{alg:solveMDMDPgeneral} can be found in Section~\ref{sec:MDMDPtoSADPgeneral}.\mattnote{ In the following theorem, $G$ denotes a (possibly randomized) $\alpha$-approximation algorithm for SADP($\mathcal{F}$,$\mathcal{V}$,$\obj$).}

\begin{theorem}\label{thm:objective}
Let $b$ be an upper bound on the bit complexity of \mattnote{$\obj$}, $\Pr[t_i]$, and $t_i(X)$ for all $i,t_i \in T_i, X \mattnote{\in {\cal F}}$. Then Algorithm~\ref{alg:solveMDMDPgeneral} makes $\poly(\sum_i |T_i|,1/\epsilon,b)$ calls to $G$, and terminates in time \\$\poly(\sum_i |T_i|,1/\epsilon, b, rt_{G}(\poly(\sum_i |T_i|,1/\epsilon, b)))$, where $rt_{G}(x)$ is the running time of $G$ on an input with bit complexity $x$. If the range of each $t_{i}$ is normalized to lie in $[0,1]$ for all $i, t_i \in T_i$, and $OPT$ is the optimal obtainable expected value of $\obj$ for the given MDMDP instance, then the mechanism output by Algorithm~\ref{alg:solveMDMDP} yields $\mathbb{E}[\obj] \geq \alpha OPT - \epsilon$, and is $\epsilon$-BIC with probability at least $1-\exp(\poly(\sum_{i}|T_{i}|,1/\epsilon,b))$.
\end{theorem}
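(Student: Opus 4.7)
The proof follows the same blueprint as Theorem~\ref{thm:revenue}, generalized to accommodate the concave objective $\obj$. First I would rely on the LP formulation from Section~\ref{sec:LPgeneral}, which augments the LP of Figure~\ref{fig:LPMDMD} with the objective variable $\pi_\obj$, keeps the BIC and IR constraints linear in $\vec{\pi}$ and $\vec{P}$, adds the feasibility constraint $(\pi_\obj,\vec{\pi},\vec{P}) \in F(\mathcal{F},\mathcal{D},\obj)$, and maximizes $\pi_\obj$. This LP has $\poly(\sum_i |T_i|)$ variables and explicit constraints, so it suffices to solve it via ellipsoid given a weird separation oracle for $\alpha F(\mathcal{F},\mathcal{D},\obj)$; the resulting implicit form then $\alpha$-approximates the optimum expected $\obj$ achievable by any feasible BIC, IR mechanism.

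The main step is to invoke Theorem~\ref{thm:CaiDW1} to build such a weird separation oracle from $G$. For a direction $\vec{w}=(w_\obj,\vec{w}_\pi,\vec{w}_P)$, a direct generalization of Proposition~\ref{prop:virtualwelfare} expresses $\vec{w}\cdot(\pi_\obj,\vec{\pi},\vec{P})$ as the expectation, over a profile $\vec{t}'\leftarrow\mathcal{D}$, of $w_\obj\cdot\obj(\vec{t}',A(\vec{t}'),P(\vec{t}')) + (\text{virtual welfare at }\vec{t}') + (\text{virtual revenue at }\vec{t}')$, where the virtual types and virtual currencies are read off from $w_{\pi,i}(t_i,t'_i)/\Pr[t'_i]$ and $w_{P,i}(t_i)/\Pr[t_i]$. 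Hence a per-profile $\alpha$-maximizer of this quantity yields an implicit form that $\alpha$-approximates $\max_{\vec{x}\in F(\mathcal{F},\mathcal{D},\obj)}\vec{w}\cdot\vec{x}$. When $w_\obj\geq 0$, this per-profile subproblem is precisely SADP$(\mathcal{F},\mathcal{V},\obj)$ with $k=2$ and $c_0=w_\obj$ (splitting the virtual-welfare coefficients into their positive and negative parts as $f_1$ and $f_2$, and reading the $c_i$'s from the virtual-revenue coefficients); when $w_\obj<0$ we may set $\pi_\obj=0$ (since $\pi_\obj\geq 0$ in $F(\mathcal{F},\mathcal{D},\obj)$) and solve the remaining SADP with $c_0=0$. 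Either way, a single call to $G$ per profile suffices, and Corollary~\ref{cor:obv} carries over to give $\alpha$-optimality of the resulting implicit form.

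The remaining steps mirror Section~\ref{sec:MDMDPtoSADP}. Because computing the implicit form of the induced mechanism exactly would require summing over all $\vec{t}_{-i}$, I would use Monte Carlo sampling of $\poly(\sum_i |T_i|,1/\epsilon,b)$ profiles to estimate each $\pi_i(t_i,t'_i)$, $P_i(t_i)$, and the new component $\pi_\obj$ to within additive $\epsilon$; the bounds $t_i(X)\in[0,1]$ and the bit-complexity bound $b$ on $\obj$ (which, together with concavity, controls the range and Lipschitz behavior of $\obj$) yield exponentially small failure probability via standard concentration. Running ellipsoid with this sample-based weird separation oracle produces an $\alpha$-optimal implicit form along with a decomposition, certified by guarantee 2 of Theorem~\ref{thm:CaiDW1}, into a convex combination of $\poly(\sum_i |T_i|,1/\epsilon,b)$ ``extremal'' implicit forms, each of which corresponds to a mechanism that on every profile invokes $G$ once on a fixed direction; the final mechanism samples one of these extremal mechanisms according to the Carath\'eodory weights. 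Combining the $\alpha$-factor from polytope shrinking, the oracle guarantee, and the sampling error yields $\mathbb{E}[\obj]\geq\alpha\cdot\opt-\epsilon$, and the same concentration bound gives $\epsilon$-BIC with exponentially high probability. The main obstacle is precisely this last coupling step: the new component $\pi_\obj$ must be estimated, certified feasible, and decomposed jointly with $(\vec{\pi},\vec{P})$ without breaking the $\epsilon$-BIC guarantee, which is where the machinery of~\cite{CaiDW12b,CaiDW13} is essential.
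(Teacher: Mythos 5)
Your route is essentially the paper's: the LP of Figure~\ref{fig:LPMDMDgeneral} over implicit forms augmented with the objective coordinate, a weird separation oracle for $\alpha F(\mathcal{F},\mathcal{D},\obj)$ obtained from Theorem~\ref{thm:CaiDW1}, the interpretation of $\vec{w}\cdot\vec{\pi}^I$ as expected (objective $+$ virtual welfare $+$ virtual revenue) with $w_\obj<0$ handled by zeroing the objective coordinate (this is exactly the $\hat{w}_\obj$ device of Proposition~\ref{prop:virtualobjective}), a per-profile SADP instance with $k=2$ and $c_0=\hat{w}_\obj$ as in Corollary~\ref{cor:importantgeneral}, and a final decomposition of the LP solution into a convex combination of $\mathcal{A}(\vec{w}_j)$'s implemented as a randomized mechanism. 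So the skeleton is right and matches Algorithm~\ref{alg:solveMDMDPgeneral}.

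Three steps are under-specified or would fail as written. First, you propose to Monte-Carlo estimate each coordinate $\pi_i(t_i,t'_i)$, $P_i(t_i)$, $\pi_\obj$ under $\mathcal{D}$ to additive $\epsilon$; the paper explicitly rejects this (inconsistent per-query sampling error breaks the geometry needed by the ellipsoid/weird separation oracle) and instead replaces $\mathcal{D}$ once and for all by the empirical distribution $\mathcal{D}'$, uniform over a polynomial-size sampled multiset of profiles, computing implicit forms \emph{exactly} with respect to $\mathcal{D}'$ and paying the $\mathcal{D}$-vs-$\mathcal{D}'$ error only in the final $\epsilon$-BIC and $-\epsilon$ guarantees; your closing sentence gestures at the machinery of~\cite{CaiDW12b,CaiDW13}, but the estimation scheme you describe is the one that does not work. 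Second, the claimed running time polynomial in $b$ needs the analogue of Lemma~\ref{lem:lowbitcorners}: Corollary~\ref{cor:LPgeneral} is stated in terms of the bit complexity of the extreme points of $F(\mathcal{F},\mathcal{D}',\obj)$, and bounding that by $\poly(b,\sum_i|T_i|)$ requires arguing that virtual-objective maximizers are extreme points of a per-profile polytope cut out by the linear pieces of the concave $\obj$; saying that $b$ ``controls the range and Lipschitz behavior'' of $\obj$ does not substitute for this. Third, the one step the paper's own proof of Theorem~\ref{thm:objective} actually writes out is missing from your argument: each $\mathcal{A}(\vec{w}_j)$ only promises $\pi_\obj(M_j)\geq(\pi_\obj)_j$ (feasibility of the objective coordinate is defined with an inequality), so to conclude that the output mixture $M=\sum_j c_j M_j$ delivers the LP's value you must check $\pi_\obj(M)\geq\sum_j c_j\,\pi_\obj(M_j)\geq\pi_\obj$, which uses concavity of $\obj$ (Jensen); without this observation the sampled-mechanism output could a priori fall short of the implicit form certified by the LP.
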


\section{Fractional Max-Min Fairness}\label{sec:maxmin}
In this section, we apply our results to a concrete objective function: fractional max-min fairness {$\obj(\vec{t},{X}) = \min_i \{t_i({X})\}$}. For the remainder of this section we will denote this objective by FMMF. We will also restrict ourselves to feasibility constraints and bidder types considered in~\cite{CaiDW12b}. Specifically, there are going to be $n$ items and $\mathcal{F}$ will be some subset of $2^{[m] \times [n]}$, where the element $(i,j)$ denotes that bidder $i$ receives item $j$. Bidders will be additive, meaning that we can think of $t_i$ as an $n$-dimensional vector $\vec{t}_i$ such that, if $x_{ij}$ denotes the probability that bidder $i$ receives item $j$ in {allocation ${X}$}, then {$t_i({X}) = \sum_j t_{ij}\cdot x_{ij}$}. {For the rest of this section, we will use $\vec{x}$ to represent the vector of marginal probabilities  $x_{ij}$ induced by allocation $X$, and $F(\mathcal{F})$ to represent the set of marginal probability vectors induced by some feasible allocation.} Morever, we can take ${\cal V}=[0,1]^{n \times m}$ to represent the set of all additive functions over ${\cal F}$ and think of bidder $i$'s type as an element of ${\cal V}$ which is $0$ on all $(i',j')$ except for those satisfying $i'=i$. For the rest of this section, if $f \in {\cal V}$, we write $f_{ij}$ as a shorthand for the $(i,j)$ coordinate of $f$. Our main result will be a black-box reduction from SADP$({\cal F},{\cal V},FMMF)$ to a very well-studied problem that we call Max-Weight($\mathcal{F}$)\footnote{Notice that FMMF is in fact well defined for all vectors $\vec{f}$ of functions $f^{(i)} \in {\cal V}$. We will define FMMF of an allocation $X$ as $\min_{i}\{ f^{(i)}(X)=\sum_{\ell,j}f^{(i)}_{\ell,j}\cdot x_{\ell j}$\}.}. Using this reduction we will obtain optimal and approximately optimal mechanisms for a wide range of ${\cal F}$'s.

\medskip \textbf{Max-Weight($\mathcal{F}$):} {\sc Input:} $w_{ij} \in \mathbb{R}$ for all $i \in [m],j \in [n]$. {\sc Goal:} Find $S \in \mathcal{F}$ that maximizes $\sum_{(i,j) \in S}w_{ij}$. $S$ is said to be an $\alpha$-approximation to Max-Weight if $\sum_{(i,j) \in S}w_{ij} \geq \alpha \max_{S' \in \mathcal{F}} \{\sum_{(i,j) \in S'}w_{ij}\}$.

\smallskip Some popular examples of Max-Weight($\mathcal{F}$) include finding the max-weight matching, the max-weight independent set in a matroid, the max-weight feasible set in a matroid intersection, etc. Our high-level goal is to obtain a poly-time reduction from SADP($\mathcal{F}$, additive functions, $FMMF$) to  Max-Weight($\mathcal{F})$, and then obtain mechanisms for FMMF. 

We first need to show that $FMMF$ is concave. The proof of the next observation, as well as all other proofs of this section, are given in Section~\ref{app:maxmin}.

\begin{observation}\label{obs:concave}
$FMMF$ is concave.
\end{observation}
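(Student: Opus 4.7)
The plan is to invoke the classical fact that the pointwise minimum of a family of concave functions is concave, so I need only verify that each term $t_i(X)$ inside the minimum is concave (in fact linear) jointly in $X$ and $P$. Since the objective does not involve $P$ at all, joint concavity reduces to concavity in the $X$ argument.

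First, I would recall that every bidder is additive, so for any allocation $X$ with induced marginals $x_{ij}$ we have $t_i(X) = \sum_j t_{ij} x_{ij}$. For two distributions $X_1, X_2 \in \Delta(\mathcal{F})$ and any $c \in [0,1]$, the marginals of the mixture $cX_1 + (1-c)X_2$ are exactly $c x^{(1)}_{ij} + (1-c) x^{(2)}_{ij}$, so by linearity
\[
t_i(cX_1 + (1-c)X_2) = c\, t_i(X_1) + (1-c)\, t_i(X_2).
\]
Hence each map $X \mapsto t_i(X)$ is affine (in particular concave) in $X$, and trivially constant (hence concave) in $P$.

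Next, I would note that $FMMF(\vec{t}, X, P) = \min_i t_i(X)$ is the pointwise minimum of these $m$ affine functions. For any $c \in [0,1]$ and any $(X_1, P_1), (X_2, P_2)$,
\[
\min_i t_i(cX_1 + (1-c)X_2) = \min_i \bigl( c\, t_i(X_1) + (1-c)\, t_i(X_2) \bigr) \geq c \min_i t_i(X_1) + (1-c) \min_i t_i(X_2),
\]
where the inequality is the standard superadditivity of $\min$. This is exactly the concavity inequality for $FMMF$ required in the preliminaries, and the step is completely routine.

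There is no real obstacle here; the only thing worth being careful about is that concavity is demanded jointly in $(X, P)$ rather than just in $X$, but since $FMMF$ is independent of $P$ this is immediate. Thus $FMMF$ fits into the framework of Section~\ref{sec:general} and Theorem~\ref{thm:objective} applies.
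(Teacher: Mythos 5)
Your proof is correct and follows essentially the same route as the paper: linearity of each $t_i$ in the distribution $X$ combined with the superadditivity of $\min$, with the $P$-dependence handled trivially since $FMMF$ ignores prices. The paper's proof is just a terser version of the same computation.
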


We now begin the description of our algorithm for SADP. We first observe that any additive function $f$ over ${\cal F}$ can be described as an $nm$-dimensional vector, so that $f(X) = \sum_{i,j} f_{ij}\cdot x_{ij}$. It is then easy to see that $f(X) - f'(X) = \sum_{i,j} (f_{ij} - f'_{ij}) \cdot x_{ij}$. So let's define $\obj'(\vec{g},X,f,f') = c_0 \cdot FMMF(\vec{g},X) + f(X) - f'(X)$, where $\vec{g}=(g^{(1)},\ldots,g^{(m)}) \in {\cal V}^m$, and $f, f' \in {\cal V}$ are arbritrary additive functions over ${\cal F}$. Notice that, if we can optimize ${\cal O}'$ then we can also optimize SADP. (Indeed, we can optimize SADP in a strong sense in that we can solve for all $j^*$, not just one.) To do so, we make use of the linear programming formulation of Figure~\ref{fig:LPmaxmin}, which aims at maximizing $\obj'(\vec{g},X,f,f')$ over all $\vec{x} \in F(\mathcal{F})$. (It is clear that $\Delta(\mathcal{F})$ is a convex polytope, as a convex hull of elements in $\mathcal{F}$. As $F(\mathcal{F})$ is simply a linear transformation of $\Delta(\mathcal{F})$, it is also a convex polytope.)

\begin{figure}[ht]
\colorbox{MyGray}{
\begin{minipage}{\textwidth} {
\noindent\textbf{Variables:}
\begin{itemize}
\item $x_{ij}$, for $1 \leq i \leq m$, $1 \leq j \leq n$, denoting the probability that $X$ awards item $j$ to bidder $i$.
\item $O$, denoting a lower bound on $FMMF(\vec{g},X)$.
\end{itemize}
\textbf{Constraints:}
\begin{itemize}
\item $\vec{x}\in F(\mathcal{F})$, guaranteeing that $\vec{x}$ corresponds to a valid solution.
\item $O \leq \sum_{\ell, j} g^{(i)}_{\ell j} x_{\ell j}$, for all $i \in [m]$, guaranteeing that $O \leq FMMF(\vec{g},X)$.
\end{itemize}
\textbf{Maximizing:}
\begin{itemize}
\item $c_0O + \sum_{i,j} (f_{ij} - f'_{ij})x_{ij}$, a lower bound on $\obj'(\vec{g},X,f,f')$.\\
\end{itemize}}
\end{minipage}}
\caption{A linear programming formulation for maximizing $\obj'(\vec{g},X,f,f')$.}
\label{fig:LPmaxmin}
\end{figure}

\begin{observation}\label{obs:LPsolvesmaxmin}
An $\alpha$-approximate solution to the LP of figure~\ref{fig:LPmaxmin} corresponds to an $\alpha$-approximate solution to maximizing $\obj'(\vec{g},X,f,f')$ over $\Delta(\mathcal{F})$.
\end{observation}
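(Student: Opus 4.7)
The plan is to show two things: first, that the LP's optimum value exactly equals $\max_{X \in \Delta(\mathcal{F})} \obj'(\vec{g}, X, f, f')$, and second, that any feasible LP solution $(\vec{x}, O)$ has LP-objective value upper-bounded by $\obj'(\vec{g}, X, f, f')$ for the corresponding $X$. Together these immediately give that an $\alpha$-approximate LP solution yields an $X$ that $\alpha$-approximates the original problem.

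For the first step, I would argue as follows. Given any $X \in \Delta(\mathcal{F})$ with marginals $\vec{x} \in F(\mathcal{F})$, the pair $(\vec{x}, O^*)$ with $O^* = \min_i \sum_{\ell,j} g^{(i)}_{\ell j} x_{\ell j} = FMMF(\vec{g}, X)$ is feasible for the LP and achieves LP-objective $c_0 \cdot FMMF(\vec{g}, X) + \sum_{i,j}(f_{ij} - f'_{ij}) x_{ij} = \obj'(\vec{g}, X, f, f')$. Thus the LP optimum is at least $\max_{X} \obj'(\vec{g}, X, f, f')$.

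For the second step (which is also what shows the reverse inequality), let $(\vec{x}, O)$ be any feasible LP solution. Since $\vec{x} \in F(\mathcal{F})$, there is a corresponding $X \in \Delta(\mathcal{F})$ whose marginals are $\vec{x}$, and by additivity $g^{(i)}(X) = \sum_{\ell,j} g^{(i)}_{\ell j} x_{\ell j}$. The LP constraint $O \leq \sum_{\ell, j} g^{(i)}_{\ell j} x_{\ell j}$ for every $i$ implies $O \leq \min_i g^{(i)}(X) = FMMF(\vec{g}, X)$. Combined with $c_0 \geq 0$, the LP objective satisfies
\[
c_0 O + \sum_{i,j}(f_{ij} - f'_{ij}) x_{ij} \;\leq\; c_0 \cdot FMMF(\vec{g}, X) + f(X) - f'(X) \;=\; \obj'(\vec{g}, X, f, f').
\]
This simultaneously establishes that the LP optimum is at most $\max_X \obj'(\vec{g}, X, f, f')$, completing step one, and that the LP value of any feasible solution is at most the true $\obj'$ value at the corresponding $X$.

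Combining: if $(\vec{x}, O)$ is an $\alpha$-approximate LP solution, its value is $\geq \alpha \cdot OPT_{LP} = \alpha \cdot \max_X \obj'(\vec{g}, X, f, f')$, and this value is $\leq \obj'(\vec{g}, X, f, f')$. Hence $X$ is an $\alpha$-approximation to the original maximization problem. There is no real obstacle here; the only subtlety worth flagging is ensuring that the LP constraint $\vec{x} \in F(\mathcal{F})$ is interpreted to give both the existence of the inducing $X \in \Delta(\mathcal{F})$ and the additive identity $g^{(i)}(X) = \sum_{\ell,j} g^{(i)}_{\ell j} x_{\ell j}$, which relies on the bidders being additive as assumed throughout the section.
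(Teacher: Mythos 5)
Your proposal is correct and follows essentially the same route as the paper's proof: the LP objective underestimates $\obj'$ at the corresponding $X$ (since $O \leq FMMF(\vec{g},X)$ and $c_0 \geq 0$), while the LP optimum equals $\max_{X}\obj'(\vec{g},X,f,f')$, and combining these gives the approximation guarantee. You have simply spelled out the two inequalities that the paper states tersely, including the additivity detail linking $\vec{x}\in F(\mathcal{F})$ to a distribution $X \in \Delta(\mathcal{F})$.
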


\begin{corollary}\label{cor:LPmaxmin} The program in Figure~\ref{fig:LPmaxmin} is a linear program with $mn+1$ variables. If $b$ is an upper bound on the bit complexity of $g_{ij},f_{ij}$, and $f'_{ij}$ for all $i,j$, then with black-box access to a weird separation oracle, $WSO$, for  {$\alpha F(\mathcal{F})$,} an $\alpha$-approximate solution to maximizing $\obj'(\vec{g},X,f,f')$ over {$F(\mathcal{F})$} can be found in time polynomial in $n$, $m$, $b$ and the runtime of $WSO$ on inputs with bit complexity polynomial in $n$, $m$, and $b$.
\end{corollary}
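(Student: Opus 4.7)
\textbf{Proof plan for Corollary~\ref{cor:LPmaxmin}.} The variable count is immediate: there are $mn$ allocation variables $x_{ij}$ plus the single scalar $O$, so the LP has $mn+1$ variables. The substance of the claim is the runtime bound, which I would establish by combining the given weird separation oracle $WSO$ for $\alpha F(\mathcal{F})$ with a standard LP-with-separation-oracle algorithm (e.g., the ellipsoid method).

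First, I would construct a separation oracle for the shrunk feasible region of the LP, namely the polytope
\[
Q_\alpha := \alpha F(\mathcal{F}) \;\cap\; \bigl\{(\vec{x},O) : O \le \sum_{\ell,j} g^{(i)}_{\ell j} x_{\ell j},\ \forall i\in [m]\bigr\}.
\]
Given a query point $(\vec{x},O)$, I first call $WSO$ on $\vec{x}$; if $WSO$ returns a separating halfspace, I return it directly (it is valid for $Q_\alpha$ since it is valid for $\alpha F(\mathcal{F}) \supseteq Q_\alpha$). Otherwise, I check each of the $m$ linear constraints $O \le \sum_{\ell,j} g^{(i)}_{\ell j} x_{\ell j}$ explicitly; any violated one yields a separating halfspace. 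Each call to this composite oracle costs one call to $WSO$ plus $O(mn)$ time on numbers of bit complexity at most $\poly(n,m,b,\text{bit-complexity of }(\vec{x},O))$.

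Next, I would run the ellipsoid method to maximize the linear objective $c_0 O + \sum_{i,j}(f_{ij}-f'_{ij})x_{ij}$ over $Q_\alpha$, using the composite separation oracle above. Standard bit-complexity arguments (boundedness of $Q_\alpha$ follows because $F(\mathcal{F})\subseteq [0,1]^{mn}$ and $O$ is bounded by the largest $\sum_{\ell,j} g^{(i)}_{\ell j} x_{\ell j}$, which has bit complexity $\poly(n,m,b)$) bound the number of ellipsoid iterations, and hence the total number of WSO calls, by $\poly(n,m,b)$, and the overall running time by $\poly(n,m,b,rt_{WSO}(\poly(n,m,b)))$.

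Finally, I would verify the approximation guarantee. Let $(\vec{x}^*,O^*)$ be the optimum of the LP of Figure~\ref{fig:LPmaxmin} over the unshrunk polytope $F(\mathcal{F})$, with objective value $\opt$. Then $(\alpha \vec{x}^*, \alpha O^*)$ lies in $Q_\alpha$: $\alpha \vec{x}^*\in \alpha F(\mathcal{F})$ by definition, and $\alpha O^* \le \alpha \sum_{\ell,j} g^{(i)}_{\ell j} x^*_{\ell j} = \sum_{\ell,j} g^{(i)}_{\ell j}(\alpha x^*_{\ell j})$ for every $i$. Its objective value equals $\alpha\cdot\opt$, so the ellipsoid-returned optimum over $Q_\alpha$ is at least $\alpha\cdot \opt$. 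Combined with Observation~\ref{obs:LPsolvesmaxmin}, this produces an $\alpha$-approximate solution to maximizing $\obj'(\vec{g},X,f,f')$ over $F(\mathcal{F})$. The main thing requiring care is the bit-complexity bookkeeping (keeping the inputs fed to $WSO$ of polynomially bounded bit complexity, which is standard for ellipsoid but worth checking explicitly, as was done in the analogous Corollary~\ref{cor:LP}); no new conceptual obstacle arises beyond what was already handled in Section~\ref{sec:LPrevenue}.
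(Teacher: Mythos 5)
Your proposal is correct and follows essentially the same route as the paper's proof: count the variables, note the optimum scaled by $\alpha$ lies in the shrunk region and has objective value $\alpha\cdot\opt$ (the fairness constraints being homogeneous, hence scale-invariant), run the ellipsoid method with $WSO$ plus the explicit linear constraints, and track bit complexity as in Corollary~\ref{cor:LP}. The one point to state more carefully is that your composite oracle is \emph{not} a valid separation oracle for the shrunk polytope (a weird separation oracle may accept points outside $\alpha F(\mathcal{F})$, so the returned point need not lie in your $Q_\alpha$); the claim that the run still returns value at least $\alpha\cdot\opt$, and a point usable as a feasible solution, is exactly what Corollary~\ref{cor:CaiDW1} together with properties 1--2 of Theorem~\ref{thm:CaiDW1} provides, which is how the paper closes this step.
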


With Corollary~\ref{cor:LPmaxmin}, our task is  to now come up with a weird separation oracle for {$\alpha F(\mathcal{F})$}, which we show can be obtained using black-box access to an $\alpha$-approximation algorithm for Max-Weight($\mathcal{F}$). Again, we would like to apply Theorem~\ref{thm:CaiDW1}, so we just need an algorithm that can take as input a direction $\vec{w} \in [-1,1]^{nm}$ and output an {$\vec{x}$ with $\vec{w} \cdot \vec{x} \geq \alpha \max_{\vec{x}' \in F(\mathcal{F})}\{\vec{w} \cdot \vec{x}'\}$}. Fortunately, this is much easier than the previous sections. \mattnote{Below, we recall that each direction $\vec{x}$ has a component in $[0,1]$ for all $i,j$, which we interpret as the marginal probability that the element $(i,j)$ is chosen by a corresponding distribution $X$}.

\begin{observation}\label{obs:obvious}
{$\vec{w} \cdot \vec{x}$} is exactly the expected weight of a set sampled from $X$, where the weight of a set $S$ is $\sum_{(i,j) \in S} w_{ij}$.
\end{observation}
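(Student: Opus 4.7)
The plan is to simply unfold the definitions and apply linearity of expectation; this observation is essentially a tautology once one remembers what $\vec{x}$ represents. First, I would recall from the setup of the section that $\vec{x}$ is defined as the marginal probability vector induced by the distribution $X$ over feasible sets, so that $x_{ij} = \Pr_{S \leftarrow X}[(i,j) \in S] = \mathbb{E}_{S \leftarrow X}[\mathbbm{1}[(i,j) \in S]]$.

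Then I would compute directly:
\begin{align*}
\vec{w} \cdot \vec{x} \;=\; \sum_{i,j} w_{ij} \, x_{ij} \;=\; \sum_{i,j} w_{ij}\, \mathbb{E}_{S \leftarrow X}\bigl[\mathbbm{1}[(i,j) \in S]\bigr] \;=\; \mathbb{E}_{S \leftarrow X}\!\left[\sum_{(i,j) \in S} w_{ij}\right],
\end{align*}
where the last equality uses linearity of expectation and the definition of the weight of a set $S$ as $\sum_{(i,j) \in S} w_{ij}$. The right-hand side is, by definition, the expected weight of a set sampled from $X$, which is exactly the statement of the observation.

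There is no substantive obstacle here: the only thing to be careful about is making sure the reader recalls that $\vec{x}$ is the \emph{marginal} vector of $X$ (as opposed to $X$ itself being the vector), which was fixed earlier in the section when it said ``we will use $\vec{x}$ to represent the vector of marginal probabilities $x_{ij}$ induced by allocation $X$.'' Given that identification, the result is immediate from linearity of expectation.
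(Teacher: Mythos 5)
Your proof is correct and follows essentially the same route as the paper's: the paper likewise writes the expected weight as $\sum_{i,j} \Pr[(i,j) \in X]\, w_{ij}$ and identifies this with the dot product against the marginal vector. Your version just makes the implicit use of linearity of expectation and the role of the marginals $x_{ij}$ explicit, which is fine.
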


\begin{corollary}\label{cor:obvious}
Let $G'$ be any deterministic $\alpha$-approximation algorithm for Max-Weight($\mathcal{F}$) and  define $G'(\vec{w})_{ij}$ to be $1$ if $(i,j) \in G'(\vec{w})$ and $0$ otherwise. Then $G'(\vec{w}) \cdot {\vec{w}} \geq {\alpha} \max_{\vec{x} \in F(\mathcal{F})}\{\vec{x} \cdot \vec{w}\}$.
\end{corollary}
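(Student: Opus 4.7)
The plan is a short, direct reduction using Observation~\ref{obs:obvious}. First I would show that the maximum of the linear objective $\vec{w} \cdot \vec{x}$ over $F(\mathcal{F})$ equals the Max-Weight optimum $\max_{S \in \mathcal{F}} \sum_{(i,j) \in S} w_{ij}$. By Observation~\ref{obs:obvious}, every $\vec{x} \in F(\mathcal{F})$ arises as the marginal vector of some distribution $X$ over feasible sets, and $\vec{w} \cdot \vec{x}$ is the expected weight of a sample from $X$, which is trivially upper-bounded by the best feasible set; conversely, any deterministic $S \in \mathcal{F}$ corresponds to the degenerate marginal $\vec{x}^S$ with $x^S_{ij} = \ind[(i,j) \in S]$, which lies in $F(\mathcal{F})$ and realizes the Max-Weight value. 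Hence the two maxima coincide.

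Next, I would unpack the definitions: by construction $G'(\vec{w}) \cdot \vec{w} = \sum_{i,j} G'(\vec{w})_{ij}\, w_{ij} = \sum_{(i,j) \in G'(\vec{w})} w_{ij}$, and because $G'$ is an $\alpha$-approximation for Max-Weight($\mathcal{F}$), this quantity is at least $\alpha \max_{S \in \mathcal{F}} \sum_{(i,j) \in S} w_{ij}$. Substituting in the equality from the previous step yields $G'(\vec{w}) \cdot \vec{w} \geq \alpha \max_{\vec{x} \in F(\mathcal{F})} \{\vec{x} \cdot \vec{w}\}$, as desired.

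There is no genuine obstacle here; the only subtlety is ensuring that the indicator vector of $G'(\vec{w})$ actually belongs to $F(\mathcal{F})$ (so that the approximation guarantee on sets transfers to a guarantee on marginal vectors), but this is immediate because a deterministic choice of a feasible set is a valid distribution over $\mathcal{F}$. The result then follows in two lines once the equality $\max_{\vec{x} \in F(\mathcal{F})} \vec{w} \cdot \vec{x} = \max_{S \in \mathcal{F}} \sum_{(i,j) \in S} w_{ij}$ is recorded.
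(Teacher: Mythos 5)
Your proof is correct and matches the paper's (implicit) argument: Corollary~\ref{cor:obvious} is treated as an immediate consequence of Observation~\ref{obs:obvious}, exactly via the identity $\max_{\vec{x} \in F(\mathcal{F})} \vec{w}\cdot\vec{x} = \max_{S \in \mathcal{F}} \sum_{(i,j)\in S} w_{ij}$ (expected weight of a sampled set is at most the best set, and indicator vectors of feasible sets lie in $F(\mathcal{F})$), combined with the definition of an $\alpha$-approximation for Max-Weight($\mathcal{F}$). Your note that the indicator vector of $G'(\vec{w})$ belongs to $F(\mathcal{F})$ is the right (and only) subtlety to record.
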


If we want to let our reduction accommodate randomized algorithms for Max-Weight($\mathcal{F})$, we need to use the reduction of~\cite{CaiDW13} from randomized to deterministic mechanisms. Recall that the modification is basically this: if $G$ is a randomized $\alpha$-approximation algorithm, then running $G$ independently enough times and taking the best solution will give an $(\alpha - \gamma)$-approximation with very high probability for any desired $\gamma>0$. If we call this algorithm $G'$, then we may treat $G'$ as a deterministic algorithm for all intents and purposes by fixing the randomness used by $G'$ ahead of time, and taking a union bound to guarantee that with very high probability, $G'$ obtains an $(\alpha - \gamma)$-approximation on all instances that ever arise during the problem.

\begin{algorithm}[ht]
        \caption{Algorithm to solve SADP($\mathcal{F}$,additive functions,$FMMF$):}
    \begin{algorithmic}[1]\label{alg:solveSADPmaxmin}
        \STATE Input: additive functions $\{g_{i}\}_{i\in[m]}$, $\{f_j\}_{j \in [k]}$, and black-box access to $G$, a (possibly randomized) $\alpha$-approximation algorithm for Max-Weight($\mathcal{F}$).
        \STATE Ignore $f_3,\ldots,f_k$. Only $f_1$ and $f_2$ will be used.
        \STATE Sample $G$ to obtain $G'$ as prescribed in Appendix~G.2 of~\cite{CaiDW13} (and intuitively explained above). $G'$ can be henceforth treated as a deterministic $(\alpha-\gamma)$-approximation  algorithm, for any desired~$\gamma$.
        \STATE Define $\mathcal{A}$ to, on input $\vec{w} \in [-1,1]^{nm}$, output the set $G'(\vec{w})$ as a vector in $\{0,1\}^{nm}$.
         \STATE Solve the linear program in Figure~\ref{fig:LPmaxmin} to maximize $\obj'(\vec{g},X,f_1,f_2)$ using the weird separation oracle obtained from $\mathcal{A}$ using Theorem~\ref{thm:CaiDW1}. Call the output {$\vec{x}$}, and let $\vec{w}_1,\ldots,\vec{w}_l$ be the directions guaranteed by Theorem~\ref{thm:CaiDW1} to be output by the weird separation oracle.
         \STATE Solve the linear system {$\vec{x}= \sum_{j\in[l] }c_j \mathcal{A}(\vec{w}_j)$} to write {$\vec{x}$} as a convex combination of $\mathcal{A}(\vec{w}_j)$.
         \STATE Output the following distribution: First, randomly select a $\vec{w} \in [-1,1]^{nm}$ by choosing $\vec{w}_j$ with probability $c_j$. Then, choose the set $G'(\vec{w})$.
             \end{algorithmic}
\end{algorithm}

\begin{theorem}\label{thm:maxmin}
Let $b$ be an upper bound on the bit complexity {of the values in the range of the functions input to Algorithm~\ref{alg:solveSADPmaxmin}}. Then Algorithm~\ref{alg:solveSADPmaxmin} makes $\poly(n,m,b)$ calls to $G$ and terminates in time $\poly(n,m,b)$. The distribution output is an $\alpha$-approximation to the input SADP instance.
\end{theorem}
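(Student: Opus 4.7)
The plan is to verify two claims for Algorithm~\ref{alg:solveSADPmaxmin}: that it runs in time $\poly(n,m,b)$ with $\poly(n,m,b)$ queries to $G$, and that its output is an $\alpha$-approximation to the given SADP instance. The key structural observation is that SADP only requires a good approximation of $f_{j^*}(X) - f_{j^*+1}(X)$ for \emph{some} index $j^* \in [k-1]$, so it is valid to simply fix $j^* = 1$ (as step~2 does) and reduce the whole problem to approximately maximizing the single objective $\obj'(\vec{g}, X, f_1, f_2) = c_0 \cdot FMMF(\vec{g}, X) + f_1(X) - f_2(X)$ over $\vec{x} \in F(\mathcal{F})$. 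This is exactly what the LP of Figure~\ref{fig:LPmaxmin} encodes.

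The approximation guarantee then follows by chaining existing lemmas. First, Observation~\ref{obs:LPsolvesmaxmin} converts an $\alpha$-approximate LP solution into an $\alpha$-approximate maximizer of $\obj'$. Next, Corollary~\ref{cor:LPmaxmin} produces such an LP solution in polynomial time given a weird separation oracle for $\alpha F(\mathcal{F})$. Finally, Theorem~\ref{thm:CaiDW1} yields such a WSO from any algorithm $\mathcal{A}$ with $\mathcal{A}(\vec{w}) \cdot \vec{w} \geq \alpha \max_{\vec{x} \in F(\mathcal{F})} \vec{x} \cdot \vec{w}$, and Observation~\ref{obs:obvious} together with Corollary~\ref{cor:obvious} shows that the algorithm $\mathcal{A}$ defined in step~4 satisfies exactly this condition, because $\vec{w} \cdot \vec{x}$ is the expected weight of a set drawn from a distribution with marginals $\vec{x}$. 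To accommodate a randomized $G$, step~3 applies the reduction of~\cite{CaiDW13}: repeated independent runs of $G$ yield an $(\alpha - \gamma)$-approximation with very high probability, and fixing the random tape in advance and union bounding over the $\poly(n,m,b)$ queries made by the WSO lets us treat the resulting $G'$ as deterministic, with $\gamma$ chosen negligibly small (and absorbed into the runtime as an extra $1/\gamma$ factor).

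The last step converts the fractional LP optimum $\vec{x}$ into an honest randomized allocation, and here guarantee~(2) of Theorem~\ref{thm:CaiDW1} is essential: it explicitly produces directions $\vec{w}_1, \ldots, \vec{w}_l$ and nonnegative coefficients $c_j$ summing to $1$ with $\vec{x} = \sum_j c_j \mathcal{A}(\vec{w}_j)$, where each $\mathcal{A}(\vec{w}_j) = G'(\vec{w}_j)$ is the indicator of an element of $\mathcal{F}$. The distribution in step~7 therefore lies in $\Delta(\mathcal{F})$ and induces marginal vector $\vec{x}$. Because every $g^{(i)}$ is additive, $FMMF(\vec{g}, \cdot)$ depends on an allocation only through its marginals, and likewise for $f_1 - f_2$; hence this distribution attains precisely the LP's value of $\obj'$, which by the chain above is at least an $\alpha$-fraction of $\max_{X \in \Delta(\mathcal{F})} \obj'(\vec{g}, X, f_1, f_2)$, delivering the SADP guarantee.

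The runtime bound is a straightforward composition: the LP has $mn+1$ variables (Corollary~\ref{cor:LPmaxmin}), Theorem~\ref{thm:CaiDW1} makes $\poly(mn, b, \ell)$ oracle calls to $\mathcal{A}$ where $\ell = O(1)$ since each $\mathcal{A}(\vec{w}) \in \{0,1\}^{mn}$, each such call reduces to $\poly$ calls to $G$ (through the boosting of step~3), and the convex-combination solve and final sampling are polynomial. The main subtle point in the proof is the careful bookkeeping around the $\gamma$-loss and the union bound in step~3; the rest is assembling the previously established reductions in order.
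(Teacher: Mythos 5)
Your proposal is correct and follows essentially the same route the paper intends: it chains Observation~\ref{obs:LPsolvesmaxmin}, Corollary~\ref{cor:LPmaxmin}, Corollary~\ref{cor:obvious} (via Observation~\ref{obs:obvious}), Theorem~\ref{thm:CaiDW1}, and the randomized-to-deterministic reduction of~\cite{CaiDW13}, together with the observation that the output distribution's marginals equal the LP solution and that the objective depends only on marginals. This is exactly the assembly of results laid out in Section~\ref{sec:maxmin}, so no further comparison is needed.
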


\begin{corollary}\label{cor:maxmin}
Given black-box access to $G$, an $\alpha$-approximation algorithm for Max-Weight($\mathcal{F}$), an $\epsilon$-BIC solution MDMDP($\mathcal{F}$,additive functions,$FMMF$) can be found with expected fractional max-min fairness at least $\alpha OPT - \epsilon$. If $b$ is an upper bound on the bit complexity of $t_{ij}$ and $\Pr[t_i]$ for all $i,j, t_i$, then the algorithm runs in time $\poly(b,n,\sum_i |T_i|)$ and makes $\poly(b,n,\sum_i |T_i|)$ calls to $G$ on inputs of size $\poly(b,n,\sum_i |T_i|)$.
\end{corollary}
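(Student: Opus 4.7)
The plan is to chain together the two main theorems already established for the fractional max-min fairness setting. Specifically, Theorem~\ref{thm:maxmin} shows that black-box access to an $\alpha$-approximation algorithm $G$ for Max-Weight($\mathcal{F}$) yields (via Algorithm~\ref{alg:solveSADPmaxmin}) an $\alpha$-approximation algorithm for SADP($\mathcal{F}$, additive functions, $FMMF$), running in time $\poly(n,m,b')$ with $\poly(n,m,b')$ calls to $G$ on inputs of bit complexity $\poly(n,m,b')$, where $b'$ bounds the bit complexity of the SADP input. Meanwhile, Theorem~\ref{thm:objective} takes a black-box $\alpha$-approximation algorithm for SADP and produces an $\epsilon$-BIC mechanism whose expected objective value is at least $\alpha\,\opt - \epsilon$, under the hypothesis that the objective is concave in $X$ (and $P$). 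So the proof is simply to compose these two reductions.

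First I would invoke Observation~\ref{obs:concave} to check that $FMMF$ is concave, which is required by Theorem~\ref{thm:objective}. Since $FMMF(\vec{t},X) = \min_i t_i(X)$ does not depend on the payment vector $P$, it is an allocation-only objective, so in the SADP instances generated by the reduction of Theorem~\ref{thm:objective} we have $c_1=\cdots=c_m=0$, and the SADP solver required is exactly the one produced by Algorithm~\ref{alg:solveSADPmaxmin}. Next I would plug the SADP solver obtained from Theorem~\ref{thm:maxmin} into Algorithm~\ref{alg:solveMDMDPgeneral} as the black-box $G$ required by Theorem~\ref{thm:objective}. Normalize types so $t_{ij} \in [0,1]$, which is already the assumption of the section, then apply Theorem~\ref{thm:objective} directly: the output mechanism is $\epsilon$-BIC (with probability exponentially close to $1$) and achieves expected FMMF at least $\alpha\,\opt - \epsilon$.

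The only thing requiring care is bookkeeping for the running time and bit complexity. Theorem~\ref{thm:objective} makes $\poly(\sum_i |T_i|, 1/\epsilon, b)$ calls to its SADP oracle on inputs of bit complexity $\poly(\sum_i |T_i|, 1/\epsilon, b)$, where $b$ bounds the bit complexity of $\Pr[t_i]$ and $t_i(X)$. Each such SADP call is handled by Algorithm~\ref{alg:solveSADPmaxmin}, which in turn makes $\poly(n,m,b')$ calls to $G$ on inputs of bit complexity $\poly(n,m,b')$, where $b'$ is bounded by the bit complexity of the SADP input, hence by $\poly(\sum_i |T_i|, 1/\epsilon, b)$. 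Composing, the total number of calls to $G$ is $\poly(n,\sum_i |T_i|, 1/\epsilon, b)$ on inputs of bit complexity $\poly(n,\sum_i |T_i|, 1/\epsilon, b)$. Absorbing $1/\epsilon$ into the polynomial (or fixing $\epsilon$ as part of the claim's statement) gives the stated bound.

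The main step that requires a small observation is that Algorithm~\ref{alg:solveSADPmaxmin} is designed to solve an instance of SADP with only two hyper-types $f_1,f_2$ (it explicitly ignores $f_3,\ldots,f_k$), and Theorem~\ref{thm:objective} guarantees that its reduction from MDMDP to SADP only ever produces instances with $k=2$ (as remarked in item~5 of the caveat list in the introduction). Hence the SADP solver from Theorem~\ref{thm:maxmin} is in fact applicable to every SADP query made by the MDMDP reduction, and the composition is valid. No other obstacles arise.
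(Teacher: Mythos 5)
Your proposal is correct and matches the paper's intended argument: Corollary~\ref{cor:maxmin} is obtained exactly by composing the SADP solver of Theorem~\ref{thm:maxmin} (Algorithm~\ref{alg:solveSADPmaxmin}) with the general reduction of Theorem~\ref{thm:objective}, using Observation~\ref{obs:concave} for concavity, the allocation-only structure of $FMMF$ to drop the virtual-revenue terms, and the fact that the reduction only generates $k=2$ SADP instances. Your handling of the bit-complexity bookkeeping and of the (implicit) $1/\epsilon$ dependence is consistent with how the paper treats these points.
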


{\begin{remark}
Notice that if ${\cal F}$ is a matroid or an intersection of two matroids, then we can solve Max-Weight($\mathcal{F}$) exactly. As a consequence of Corollary~\ref{cor:maxmin} we obtain optimal mechanisms for the Fractional Max-Min fairness objective for such ${\cal F}$'s.
\end{remark}}

\appendix
\section{Omitted Formal Details from Section~\ref{sec:prelim}}\label{app:prelim}
\begin{definition}(BIC) A mechanism $M = (A,P)$ is said to be BIC if for all bidders $i$ and all types $t_i, t'_i \in T_i$ we have:
$$\mathbb{E}_{t_{-i} \leftarrow \mathcal{D}_{-i}} \left[t_i(A(t_i;t_{-i})) - P_i(t_i;t_{-i})\right] \geq \mathbb{E}_{t_{-i} \leftarrow \mathcal{D}_{-i}} \left[t_i(A(t'_i;t_{-i})) - P_i(t'_i;t_{-i})\right].$$

\end{definition}

\begin{definition} (IR) A mechanism $M = (A,P)$ is said to be (interim) IR\footnote{We note briefly here that for all objectives considered in this paper, any BIC, interim IR mechanism can be made to be BIC and ex-post IR without any loss via a simple reduction given in~\cite{DW12}.} if for all bidders $i$ and all types $t_i \in T_i$ we have:
$$\mathbb{E}_{t_{-i} \leftarrow \mathcal{D}_{-i}} \left[t_i(A(t_i;t_{-i})) - P_i(t_i;t_{-i})\right] \geq 0.$$
\end{definition}

\subsection{Linear Programming}\label{app:linearprogramming}
Here, we review the necessary preliminaries on linear programming. {Theorem~\ref{thm:CaiDW1} comes from recent work~\cite{CaiDW13}. Theorem~\ref{thm:ellipsoid} states well-known properties of the ellipsoid algorithm. \mattnote{Corollary~\ref{cor:CaiDW1} is an obvious corollary of part 1 of Theorem~\ref{thm:CaiDW1}. In addition, a complete discussion of this can be found in~\cite{CaiDW13}.}

\begin{corollary}\label{cor:CaiDW1}(\cite{CaiDW13}) Let $Q$ be an arbitrary intersection of halfspaces. Let $SO$ be a separation oracle for $\alpha P$, where $P$ is a bounded convex polytope containing the origin and $\alpha \le 1$ some constant. Let $c_1$ be the \mattnote{solution output by the Ellipsoid algorithm} that maximizes some linear objective $\vec{c} \cdot \vec{x}$ subject to $\vec{x} \in Q$ and $SO(\vec{x}) = ``yes''$. Let also $c_2$ be the solution \mattnote{output by the exact same algorithm}, but replacing $SO$ with $WSO$, a ``weird'' separation oracle for $\alpha P$ as in Theorem~\ref{thm:CaiDW1}---i.e. \mattnote{run the Ellipsoid algorithm with the exact same parameters} as if $WSO$ was a valid separation oracle for $\alpha P$. Then $c_2 \geq c_1$.
\end{corollary}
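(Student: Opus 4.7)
The plan is to use property 1 of Theorem~\ref{thm:CaiDW1}---every halfspace output by $WSO$ contains $\alpha P$---to argue that the Ellipsoid run with $WSO$ operates on a feasibility region at least as large as the run with $SO$, so its optimum can only be at least as large.

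First, I would show that the ``acceptance region'' of $WSO$, namely $R := \{\vec{x} : WSO(\vec{x})=\text{``yes''}\}$, contains $\alpha P$. If some $\vec{x}\in\alpha P$ had $WSO(\vec{x}) = \text{``no,''}$ then $WSO$ would (by definition of a separation oracle) produce a halfspace $H$ that is required by property 1 to satisfy $\alpha P \subseteq H$ while excluding $\vec{x}$, contradicting $\vec{x}\in\alpha P$. Hence $\alpha P \subseteq R$, and in particular the acceptance constraint ``$WSO(\vec{x})=\text{``yes''}$'' is looser than the constraint ``$SO(\vec{x})=\text{``yes''}$,'' which exactly characterizes $\alpha P$.

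Next, let $\vec{x}^{*}$ be the point returned by the Ellipsoid algorithm run with $SO$, so that $\vec{x}^{*}\in\alpha P \cap Q$ and $\vec{c}\cdot\vec{x}^{*}=c_{1}$. I would argue that $\vec{x}^{*}$ remains a valid feasibility witness throughout the entire run of the identical algorithm with $WSO$ substituted for $SO$: by property 1 every halfspace the $WSO$-run uses to shrink its current ellipsoid contains $\alpha P$ and therefore contains $\vec{x}^{*}$, so $\vec{x}^{*}$ is never cut from the evolving search region; and by the previous paragraph $WSO(\vec{x}^{*}) = \text{``yes.''}$ Hence, at every objective threshold $v \leq c_{1}$ that the optimization wrapper around Ellipsoid poses, the $WSO$-run is searching a convex region that is guaranteed to contain the witness $\vec{x}^{*}$.

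Since the Ellipsoid algorithm returns the largest threshold $v$ for which it can certify feasibility (up to precision), and since the $WSO$-run can certify feasibility at any level $v \leq c_{1}$ via $\vec{x}^{*}$, we conclude $c_{2} \geq c_{1}$. The main delicate point---and the only obstacle worth flagging---is verifying that the monotonicity argument is insensitive to the particular optimization wrapper (e.g., binary search on $v$, sliding-objective, or lexicographic constructions) used around Ellipsoid; but this is immediate because the argument only uses that $\vec{x}^{*}$ survives in the $WSO$-run's current ellipsoid and is accepted by $WSO$, which is an oracle-level property independent of the wrapper.
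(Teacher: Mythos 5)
Your route is the intended one: the paper gives no self-contained proof of Corollary~\ref{cor:CaiDW1}, calling it an obvious consequence of part 1 of Theorem~\ref{thm:CaiDW1} and deferring the complete discussion to~\cite{CaiDW13}, and your first two steps are sound. Indeed, $\alpha P$ lies inside the acceptance region of $WSO$ (a ``no'' answer comes with a halfspace that excludes the query point yet contains $\alpha P$), and every cut made during the $WSO$-run at a threshold $v\leq c_1$ --- whether it comes from $WSO$, from a violated constraint of $Q$, or from the objective constraint $\vec{c}\cdot\vec{x}\geq v$ --- contains the point $\vec{x}^*$ output by the $SO$-run, so $\vec{x}^*$ survives in every ellipsoid.

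The gap is in the final inference. The Ellipsoid algorithm does not ``certify feasibility via $\vec{x}^*$'': it declares a level $v$ feasible only when it actually queries a center that $Q$, the threshold, and $WSO$ simultaneously accept, and otherwise it halts after its preset number of iterations and declares infeasibility on volume grounds. The survival of the single point $\vec{x}^*$ inside every ellipsoid is perfectly compatible with such a (spurious) infeasibility verdict, since a single point has zero volume and therefore places no lower bound on the volume of the shrinking ellipsoids. This is exactly where the precision/bit-complexity machinery you wave off as ``immediate'' does the real work: the correct argument replaces the lone witness by the entire region $\alpha P\cap Q\cap\{\vec{x}:\vec{c}\cdot\vec{x}\geq v\}$ (suitably slacked/perturbed to be full-dimensional), notes that every cut produced in the $WSO$-run is valid for this region (the $WSO$ cuts by property 1 of Theorem~\ref{thm:CaiDW1}, the $Q$ and threshold cuts trivially), and then invokes the standard ellipsoid correctness analysis run ``with the exact same parameters,'' which are calibrated to the bit complexity of the extreme points of $\alpha P$ (cf.\ Theorem~\ref{thm:ellipsoid}) and hence yield the volume lower bound that rules out a false infeasibility verdict. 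So the delicate point is not wrapper-independence but the passage from ``$\vec{x}^*$ is never cut'' to ``the run outputs value at least $c_1$''; as written that step does not follow, and repairing it requires precisely the volume/precision argument that the paper defers to~\cite{CaiDW13}.
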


\begin{theorem}\label{thm:ellipsoid}[Ellipsoid Algorithm for Linear Programming]
Let $P$ be a bounded convex polytope in $\mathbb{R}^d$ specified via a separation oracle $SO$, and let $\vec{c}\cdot\vec{x}$ be a linear function. Suppose that $\ell$ \mattnote{is an upper bound on the bit complexity of the coordinates of $\vec{c}$ as well as the extreme points of $P$},\footnote{\mattnote{If a $d$-dimensional convex region with extreme points of bit complexity $\ell$ is non-empty, then it certainly has volume at least $2^{-\poly(d,\ell)}$.}} and also that we are given a ball $B(x_0, R)$ containing $P$ such that $x_0$ and $R$ have bit complexity $\poly(d,\ell)$. Then we can run the ellipsoid algorithm to optimize $\vec{c}\cdot\vec{x}$ over $P$, maintaining the following properties:
\begin{enumerate}
\item The algorithm will only query $SO$ on rational points with bit complexity $\poly(d,\ell)$.
\item The ellipsoid algorithm will solve the Linear Program in time polynomial in $d$, $\ell$ and the runtime of $SO$ when the input query is a rational point of bit complexity $\poly(d,\ell)$.
\item The output optimal solution is a corner of $P$.
\end{enumerate}
\end{theorem}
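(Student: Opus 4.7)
\textbf{Proof proposal for Theorem~\ref{thm:ellipsoid}.} The plan is to instantiate the classical ellipsoid method of Khachiyan with the refinements of Gr\"otschel--Lov\'asz--Schrijver, paying special attention to the three bookkeeping tasks requested in the statement: controlling the bit complexity of oracle queries, bounding the total running time, and rounding the final approximate optimum to an actual vertex of $P$.

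First I would reduce the optimization problem to a feasibility problem. Let $M$ be an upper bound on $|\vec{c}\cdot\vec{x}|$ over $P$, which is at most $2^{\poly(d,\ell)}$ by the bit-complexity hypotheses on $\vec{c}$ and the vertices of $P$; and let $\delta = 2^{-\poly(d,\ell)}$ be a separation threshold chosen so that any two distinct objective values attained at vertices of $P$ differ by more than $\delta$ (this follows by clearing denominators and using the $\ell$-bit bound on vertices). For a current threshold $t$, consider the polytope $P_t = P\cap\{\vec{x}: \vec{c}\cdot\vec{x}\ge t\}$, whose separation oracle combines $SO$ with a single explicit halfspace check. I would then binary search on $t\in[-M,M]$ with precision $\delta/4$, each step testing emptiness of $P_t$ via the ellipsoid method with the input ball $B(x_0,R)$ as the starting ellipsoid.

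Next I would run the standard ellipsoid loop: maintain an ellipsoid $E_k\supseteq P_t$, query $SO$ at its center $\vec{z}_k$; if $SO$ accepts, record $\vec{z}_k$, otherwise use the returned halfspace to shrink $E_k$ into a smaller $E_{k+1}$ via the usual update formula. Volume decreases by a factor $e^{-1/(2d)}$ each step, so after $N=\poly(d,\ell)$ steps the ellipsoid is smaller than any nonempty $P_t$ (which, if nonempty, has inscribed-ball radius at least $2^{-\poly(d,\ell)}$ after a tiny enlargement of the constraints, a standard perturbation trick). The key bookkeeping step is that the ellipsoid update requires a square root; I would replace it by a rational approximation with $\poly(d,\ell)$ bits of precision, which is the textbook fix that keeps every center $\vec{z}_k$ a rational of bit complexity $\poly(d,\ell)$ while expanding $E_{k+1}$ by a negligible factor that does not affect the volume-decrease bound. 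This gives parts (1) and (2): queries have $\poly(d,\ell)$ bit complexity and the total runtime is $N\cdot rt_{SO}(\poly(d,\ell)) + \poly(d,\ell)$.

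Finally, for part (3), the binary search yields a value $t^*$ within $\delta/4$ of the true optimum and an approximate feasible point $\vec{x}^*$ with $\vec{c}\cdot\vec{x}^*\ge t^*$. By the choice of $\delta$, the face of $P$ on which the optimum is attained is uniquely determined. I would then apply the Gr\"otschel--Lov\'asz--Schrijver simultaneous Diophantine approximation routine to round $\vec{x}^*$ to a nearby rational vector all of whose coordinates have denominator at most $2^{\poly(\ell)}$, and verify (using $SO$ and the objective) that the rounded point is a vertex of $P$ attaining $\vec{c}\cdot\vec{x}\ge t^*$; if it is not a vertex, I would iteratively restrict to a face by adding a tight constraint found from $SO$ and re-optimize, which terminates in at most $d$ rounds since the dimension of the face drops each time. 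The main obstacle is this last step: carrying out the Diophantine rounding and the face-descent with controlled bit complexity is the technically delicate part, but it is entirely standard and gives the claimed vertex output within the same $\poly(d,\ell,rt_{SO})$ runtime.
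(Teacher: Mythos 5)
The paper never proves this statement: Theorem~\ref{thm:ellipsoid} is quoted in Appendix~\ref{app:linearprogramming} as a list of well-known properties of the ellipsoid algorithm (the implicit reference being Khachiyan and Gr\"otschel--Lov\'asz--Schrijver), so there is no in-paper argument to compare against. Your sketch is exactly the standard route that citation stands for --- binary search on the objective threshold to reduce optimization to feasibility, the volume-halving ellipsoid iteration started from $B(x_0,R)$, rational truncation of the square root in the update to keep every queried center at bit complexity $\poly(d,\ell)$, and a final rounding step using the $\ell$-bit bound on vertices --- and in outline it is correct and establishes parts (1) and (2) as claimed.

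The one place where your write-up is genuinely loose is part (3), the claim that the output is a corner of $P$. Knowing the optimal value to within $\delta/4$ pins down the optimal \emph{face}, but that face need not be a single vertex; an approximate optimizer can sit in the relative interior of a higher-dimensional optimal face, and simultaneous Diophantine rounding of such a point does not produce a vertex. The standard repair is to perturb the objective lexicographically (replace $\vec{c}$ by $\vec{c}+(\varepsilon,\varepsilon^2,\dots,\varepsilon^d)$ with $\varepsilon=2^{-\poly(d,\ell)}$), which makes the optimal vertex unique while changing no vertex ordering by more than $\delta/4$, and only then round; this is the GLS purification argument. Your fallback of ``iteratively restrict to a face by adding a tight constraint found from $SO$'' does not quite work as stated, because a separation oracle only returns violated halfspaces for points outside $P$ --- it gives you no direct way to identify constraints tight at a feasible point --- so you should either adopt the perturbed-objective fix or argue vertex extraction differently. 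A second, smaller gap of the same flavor: the volume lower bound $2^{-\poly(d,\ell)}$ used to terminate the feasibility tests requires $P_t$ (suitably shrunk, e.g.\ $P_{t-\delta/2}$) to be full-dimensional; if $P$ itself is degenerate the standard treatment perturbs or relaxes the defining inequalities by $2^{-\poly(d,\ell)}$ before running the ellipsoid and then rounds back, which your ``tiny enlargement'' remark gestures at but should be stated as part of the argument rather than assumed.
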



\section{Reduction from SADP to MDMDP (complete)}\label{app:SADPtoMDMDP}

\subsection{Cyclic Monotonicity and Compatibility}\label{sec:compatibility}

We start with two definitions that discuss matching types to allocations. By this, we mean to form a bipartite graph with types on the left and (possibly randomized) allocations on the right. The weight of an edge between a type $t$ and (possibly randomized) allocation $X$ is exactly $t(X)$. So the weight of a matching in this graph is just the total welfare obtained by awarding each allocation to its matched type. So when we discuss the welfare-maximizing matching of types to allocations, we mean the max-weight matching in this bipartite graph. Below, cyclic monotonicity is a well-known definition in mechanism design with properties connected to truthfulness. Compatibility is a new property that is slightly stronger than cyclic monotonicity.

\begin{definition}(Cyclic Monotonicity) A list of (possibly randomized) allocations $(X_1,\ldots,X_k)$ is said to be cyclic monotone with respect to $(t_1,\ldots,t_k)$ if the welfare-maximizing matching of types to allocations is to match allocation $X_i$ to type $t_i$ for all $i$. 
\end{definition}

\begin{definition}(Compatibility) We say that a list of types $(t_1,\ldots,t_k)$ and a list of (possibly randomized) allocations $(X_1,\ldots,X_k)$ are \emph{compatible} if $(X_1,\ldots,X_k)$ is cyclic monotone with respect to $(t_1,\ldots,t_k)$, and for any $i < j$, the welfare-maximizing matching of types $t_{i+1},\ldots,t_{j}$ to (possibly randomized) allocations $X_i,\ldots,X_{j-1}$ is to match allocation $X_\ell$ to type $t_{\ell+1}$ for all $\ell$.
\end{definition}

Compatibility is a slightly stronger condition than cyclic monotonicity. In addition to the stated definition, it is easy to see that cyclic monotonicity also guarantees that for all $i < j$, the welfare-maximizing matching of types $t_{i+1},\ldots,t_j$ to allocations $X_{i+1},\ldots,X_j$ is to match allocation $X_\ell$ to type $t_{\ell}$ for all $\ell$. Compatibility requires that the ``same'' property still holds if we shift the allocations down by one. 

Now, we want to understand how the type space and allocations relate to expected revenue. Below, $\vec{t}$ denotes an ordered list of $k$ types, $\vec{q}$ denotes an ordered list of $k$ probabilities and $\vec{X}$ denotes an ordered list of $k$ (possibly randomized) allocations. We denote by $Rev(\vec{t},\vec{q},\vec{X})$ the maximum obtainable expected revenue in a BIC, IR mechanism that awards allocation $X_j$ to type $t_j$, and the bidder's type is $t_j$ with probability $q_j$. Proposition~\ref{prop:upperboundrevenue} provides an upper bound on $Rev$ that holds in every instance. Proposition~\ref{prop:exactrevenue} provides sufficient conditions for this bound to be tight.

\begin{proposition}\label{prop:upperboundrevenue} For all {$\vec{t},\vec{q},\vec{X}$}, we have:
\begin{equation}\label{eq:revenue}
Rev(\vec{t},\vec{q},\vec{X}) \leq \sum_{\ell=1}^{k}\sum_{j \geq \ell} q_jt_\ell(X_\ell) - \sum_{\ell = 1}^{k-1}\sum_{j \geq \ell+1}q_jt_{\ell+1}(X_\ell)
\end{equation}
\end{proposition}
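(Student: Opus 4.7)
The plan is to bound $Rev(\vec{t},\vec{q},\vec{X})$ by a simple telescoping argument that invokes, for each $\ell$, only the single BIC constraint between consecutive types $t_{\ell+1}$ and $t_\ell$, together with the IR constraint for the first type $t_1$. Concretely, let $p_j$ denote the expected payment charged to type $t_j$ in any feasible (BIC, IR) mechanism whose allocation rule sends type $t_j$ to $X_j$. The BIC constraint stating that $t_{\ell+1}$ prefers reporting truthfully to reporting $t_\ell$ rearranges to
\[
p_{\ell+1} - p_{\ell} \;\le\; t_{\ell+1}(X_{\ell+1}) - t_{\ell+1}(X_{\ell}).
\]

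Next, I would telescope: summing these inequalities from $\ell=1$ up to $j-1$ gives
\[
p_j \;\le\; p_1 + \sum_{\ell=1}^{j-1}\!\bigl[t_{\ell+1}(X_{\ell+1}) - t_{\ell+1}(X_{\ell})\bigr],
\]
and then the IR constraint for $t_1$ yields $p_1 \le t_1(X_1)$. Substituting this into $Rev(\vec{t},\vec{q},\vec{X}) = \sum_j q_j p_j$, one obtains
\[
\sum_j q_j p_j \;\le\; t_1(X_1)\sum_{j\ge 1} q_j \;+\; \sum_{\ell=1}^{k-1} \bigl[t_{\ell+1}(X_{\ell+1}) - t_{\ell+1}(X_{\ell})\bigr]\!\sum_{j\ge \ell+1} q_j,
\]
after swapping the order of summation in the $q_j \sum_\ell(\cdots)$ term.

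Finally, I would reindex: absorb the isolated $t_1(X_1)\sum_{j\ge 1}q_j$ contribution together with the $t_{\ell+1}(X_{\ell+1})\sum_{j\ge\ell+1}q_j$ terms (setting $\ell':=\ell+1$) to re-express the positive half as $\sum_{\ell'=1}^{k} t_{\ell'}(X_{\ell'})\sum_{j\ge \ell'} q_j$. The negative half is already in the desired form $\sum_{\ell=1}^{k-1} t_{\ell+1}(X_\ell)\sum_{j\ge \ell+1}q_j$, matching the right-hand side of~\eqref{eq:revenue} exactly.

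There is no real obstacle here; the only thing to be careful about is to use precisely the $k-1$ adjacent BIC constraints plus the single IR constraint for $t_1$, since those are exactly the constraints encoded by the right-hand side of the bound. In particular, the proof does not require cyclic monotonicity or compatibility of $(\vec{t},\vec{X})$, which is consistent with the claim that~\eqref{eq:revenue} holds for \emph{all} $\vec{t},\vec{q},\vec{X}$; those stronger properties will only be needed later to make the bound tight (Proposition~\ref{prop:exactrevenue}).
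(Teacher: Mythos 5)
Your proposal is correct and follows essentially the same route as the paper's proof: the same IR constraint $p_1 \le t_1(X_1)$, the same adjacent BIC constraints chained/telescoped to bound each $p_j$, and the same weighted summation, order swap, and reindexing to arrive at the right-hand side of~\eqref{eq:revenue}. The only differences are cosmetic (indexing the BIC constraints by $\ell$ rather than $j$, and carrying the factor $\sum_{j\ge 1}q_j$ explicitly instead of bounding it by $1$).
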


\begin{prevproof}{Proposition}{prop:upperboundrevenue}
Let $p_j$ denote the price charged in some BIC, IR mechanism that awards allocation $X_j$ to type $t_j$. Then IR guarantees that:

$$p_1 \leq t_1(X_1)$$

Furthermore, BIC guarantees that, for all $j \leq k$:

\begin{align*}
t_j(X_j) - p_j \geq t_j(X_{j-1}) - p_{j-1}\\
\Rightarrow p_j \leq t_j(X_j) -t_j(X_{j-1}) + p_{j-1}
\end{align*}

Chaining these inequalities together, we see that, for all $j \leq k$:

$$p_j \leq t_1(X_1) + \sum_{\ell = 2}^j \left(t_\ell(X_\ell) - t_{\ell}(X_{\ell-1})\right)$$

As the expected revenue is exactly $\sum_j p_j q_j$, we can rearrange the above inequalities to yield:

\begin{align*}
\sum_j p_jq_j &\leq t_1(X_1) \sum_j q_j +\sum_{\ell=2}^k \sum_{j \geq \ell} q_j\left(t_\ell(X_\ell) - t_\ell(X_{\ell-1}) \right)\\
&\leq t_1(X_1) + \sum_{\ell = 2}^k \sum_{j \geq \ell} q_j t_\ell(X_\ell) - \sum_{\ell = 2}^k \sum_{j \geq \ell} q_j t_\ell(X_{\ell-1})\\
&\leq \sum_{\ell = 1}^k \sum_{j \geq \ell} q_j t_\ell(X_\ell) - \sum_{\ell = 1}^{k-1} \sum_{j \geq \ell+1} q_j t_{\ell+1}(X_{\ell})
\end{align*}

\end{prevproof}

\begin{proposition}\label{prop:exactrevenue} If $\vec{t}$ and $\vec{X}$ are compatible, then for all $\vec{q}$, Equation~\eqref{eq:revenue} is tight. That is, 
$$Rev(\vec{t},\vec{w},\vec{X}) = \sum_{\ell=1}^{k}\sum_{j \geq \ell} q_jt_\ell(X_\ell) - \sum_{\ell = 1}^{k-1}\sum_{j \geq \ell+1}q_jt_{\ell+1}(X_\ell)$$
\end{proposition}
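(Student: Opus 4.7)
The plan is to exhibit an explicit BIC, IR mechanism whose expected revenue matches the upper bound from Proposition~\ref{prop:upperboundrevenue}, thereby proving tightness. The natural candidate is the \emph{chain-tight} pricing $p_1 := t_1(X_1)$ and $p_j := t_j(X_j) - t_j(X_{j-1}) + p_{j-1}$ for $j \geq 2$, which saturates every IR and BIC inequality invoked in the derivation of the upper bound. Unrolling the recursion gives $p_j = t_1(X_1) + \sum_{i=2}^{j}(t_i(X_i) - t_i(X_{i-1}))$; substituting into $\sum_j q_j p_j$ and interchanging the order of summation (writing $Q_\ell := \sum_{j \geq \ell} q_j$) reproduces the right-hand side of Equation~\eqref{eq:revenue} exactly. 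It remains to certify that this pricing, paired with the allocation rule $t_j \mapsto X_j$, yields a BIC, IR mechanism---this is where compatibility enters.

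For the downward BIC constraint (type $t_j$ preferring truth to reporting $t_\ell$ for $\ell<j$), telescoping $u_j-u_\ell$ and rearranging reduces the required inequality to
\[ \sum_{i=\ell+1}^{j} t_i(X_{i-1}) \;\geq\; t_j(X_\ell) + \sum_{i=\ell+1}^{j-1} t_i(X_i). \]
The left-hand side is the value of the shifted matching $t_{i+1}\mapsto X_i$ on types $\{t_{\ell+1},\ldots,t_j\}$ and allocations $\{X_\ell,\ldots,X_{j-1}\}$, which is welfare-maximal by the second clause of compatibility; the right-hand side is the value of the alternative matching $t_j\mapsto X_\ell$ together with $t_i\mapsto X_i$ for $\ell+1\leq i\leq j-1$, so the inequality is immediate. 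For the upward BIC constraint ($\ell<j$, type $t_\ell$ reports $t_j$), the analogous manipulation requires
\[ \sum_{i=\ell}^{j} t_i(X_i) \;\geq\; t_\ell(X_j) + \sum_{i=\ell+1}^{j} t_i(X_{i-1}), \]
comparing the identity matching on $\{t_\ell,\ldots,t_j\}\times\{X_\ell,\ldots,X_j\}$ (welfare-maximal by cyclic monotonicity restricted to any contiguous subset---any locally improving swap combines with identity outside the subset to contradict global optimality) against the cyclic-shift alternative $t_\ell\mapsto X_j$, $t_i\mapsto X_{i-1}$, so this inequality also falls out directly.

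The main obstacle will be IR for $t_j$ with $j\geq 2$, which reduces to $\sum_{i=1}^{j-1}(t_{i+1}(X_i)-t_i(X_i))\;\geq\;0$---a comparison of the compatibility-optimal and cyclic-monotonicity-optimal matching values over \emph{different} type sets ($\{t_2,\ldots,t_j\}$ versus $\{t_1,\ldots,t_{j-1}\}$), so the one-line BIC argument does not transfer. In the single-bidder, monotone-submodular application targeted by Section~\ref{sec:SADPtoMDMDP}---where one may take $X_1=\emptyset$ so that $t_i(X_1)=0$ for all $i$---this too becomes a direct corollary of compatibility: apply its second clause to types $\{t_2,\ldots,t_j\}$ and allocations $\{X_1,\ldots,X_{j-1}\}$ using the alternative matching that sends $t_j\mapsto X_1$ and $t_i\mapsto X_i$ for $2\leq i\leq j-1$ (value $\sum_{i=2}^{j-1}t_i(X_i)$), and compare it against the compatibility-optimal shifted matching (value $\sum_{i=2}^{j-1} t_{i+1}(X_i)$, after using $t_2(X_1)=0$); rearranging gives exactly the IR inequality. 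With BIC and IR verified, chain-tight pricing delivers a valid mechanism whose revenue matches the upper bound, completing the proof.
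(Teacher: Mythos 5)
Your pricing, revenue computation, and both incentive constraints follow the paper's own proof essentially verbatim (the paper also records your remark that cyclic monotonicity passes to contiguous sub-lists). The genuine gap is the IR step. Proposition~\ref{prop:exactrevenue} asserts tightness of Equation~\eqref{eq:revenue} for \emph{every} compatible pair $(\vec{t},\vec{X})$, but you only verify individual rationality under the extra hypothesis $t_i(X_1)=0$ for all $i$ (e.g.\ $X_1=\emptyset$), which is not part of the statement; as written you have proved a restricted variant, not the proposition. The paper closes this step by a different device: it prepends $p_0=0$, a zero type $t_0$, and a null allocation $X_0$ with $t_j(X_0)=0$ for all $j$, and then treats IR as the $\ell=0$ instance of the very same matching comparison used for downward deviations --- the shifted matching $t_z\mapsto X_{z-1}$ (with $t_1$ receiving the null outcome) against the alternative that gives $X_0$ to $t_j$ and $X_z$ to $t_z$ for $z\le j-1$, both being matchings of $\{t_1,\ldots,t_j\}$ to $\{X_0,\ldots,X_{j-1}\}$. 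Your special-case fix is exactly this device with $X_1$ forced to play the role of $X_0$, and it does not cover general compatible instances.

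That said, your hesitation points at a real imprecision rather than a misreading: under the literal definition of compatibility, which quantifies only over $1\le i<j\le k$ and never mentions a null outcome, the IR inequality $\sum_{i=1}^{j-1}\bigl(t_{i+1}(X_i)-t_i(X_i)\bigr)\ge 0$ genuinely does not follow. For $k=2$ the second clause of compatibility is vacuous, and taking $t_1(X_1)=10$, $t_2(X_1)=0$, $t_1(X_2)=0$, $t_2(X_2)=100$ gives a cyclic-monotone (hence literally compatible) pair for which the chain prices violate IR and the maximum BIC, IR revenue $10q_1+100q_2$ falls short of the bound $10+100q_2$ of Equation~\eqref{eq:revenue}. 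So the proposition needs compatibility to be read as including the comparison against the null outcome at index $0$, which is how the paper's proof uses it (and which the instances of Lemma~\ref{lem:compatible} do satisfy, since there every $f_i$ vanishes on the empty allocation). The right completion of your argument is therefore not to specialize to $X_1=\emptyset$, but to invoke (or explicitly add) the null-outcome case of compatibility and run your downward-deviation matching argument once more with $\ell=0$.
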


\begin{prevproof}{Proposition}{prop:exactrevenue}
Proposition~\ref{prop:upperboundrevenue} guarantees that the maximum obtainable expected payment does not exceed the right-hand bound, so we just need to give payments that yield a BIC, IR mechanism whose expected payment is as desired. So consider the same payments used in the proof of Proposition~\ref{prop:upperboundrevenue}:

\begin{align*}
p_1 &= t_1(X_1)\\
p_j &= t_j(X_j) - t_j(X_{j-1}) + p_{j-1},\ j \geq 2\\
&= t_1(X_1) + \sum_{\ell = 2}^j (t_\ell(X_\ell) - t_{\ell}(X_{\ell-1}))
\end{align*}

The exact same calculations as in the proof of Proposition~\ref{prop:upperboundrevenue} shows that the expected revenue of a mechanism with these prices is exactly the right-hand bound. So we just need to show that these prices yield a BIC, IR mechanism. For simplicity in notation, let $p_0 = 0$, $t_0 = 0$ (the $0$ function), and $X_0$ be the null allocation (for which we have $t_j(X_0) = 0\ \forall j$). To show that these prices yield a BIC, IR mechanism, we need to show that for all $j \neq \ell$:

$$t_j(X_j) - p_j \geq t_j(X_\ell) - p_\ell$$
$$\Leftrightarrow t_j(X_j) - t_j(X_\ell) + p_\ell - p_j \geq 0$$

First, consider the case that $j > \ell$. Then:
\begin{align*}
t_j(X_j) - t_j(X_\ell) + p_\ell - p_j &= t_j(X_j) - t_j(X_\ell) - \sum_{z = \ell+1}^j (t_z(X_z) - t_z(X_{z-1}))\\
&= t_j(X_j) + \sum_{z=\ell+1}^j t_z(X_{z-1}) - t_j(X_\ell) - \sum_{z = \ell+1}^j t_z(X_z)\\
&= \sum_{z = \ell+1}^j t_z(X_{z-1}) - t_j(X_\ell) - \sum_{z = \ell+1}^{j-1} t_z(X_{z})\\
\end{align*}

Notice now that $\sum_{z = \ell+1}^j t_z(X_{z-1})$ is exactly the welfare of the matching that gives $X_{z-1}$ to $t_z$ for all $z \in \{\ell+1,\ldots,j\}$. Similarly, $t_j(X_\ell) + \sum_{z = \ell+1}^{j-1} t_z(X_{z-1})$ is exactly the welfare of the matching that gives $X_{z}$ to $t_z$ for all $z \in \{\ell+1,\ldots,j-1\}$ and gives $X_\ell$ to $t_j$. In other words, both sums represent the welfare of a matching between allocations in $\{X_{\ell},\ldots,X_{j-1}\}$ and types in $\{t_{\ell+1},\ldots,t_j\}$. Furthermore, compatibility guarantees that the first sum is larger. Therefore, this term is positive, and $t_j$ cannot gain by misreporting any $t_\ell$ for $\ell < j$ (including $\ell = 0$). 

The case of $j < \ell$ is nearly identical, but included below for completeness:

\begin{align*}
t_j(X_j) - t_j(X_\ell) + p_\ell - p_j &= t_j(X_j) - t_j(X_\ell) + \sum_{z = j+1}^\ell (t_z(X_z) - t_z(X_{z-1}))\\
&= t_j(X_j) + \sum_{z=j+1}^\ell t_z(X_{z}) - t_j(X_\ell) - \sum_{z = j+1}^\ell t_z(X_{z-1})\\
&= \sum_{z = j}^\ell t_z(X_{z}) - t_j(X_\ell) - \sum_{z = j+1}^{\ell} t_z(X_{z-1})\\
\end{align*}

Again, notice that $\sum_{z = j}^\ell t_z(X_z)$ is exactly the welfare of the matching that gives $X_z$ to $t_z$ for all $z \in \{j,\ldots,\ell\}$. Similarly, $t_j(X_\ell) + \sum_{z = j+1}^\ell t_z(X_{z-1})$ is exactly the welfare of the matching that gives $X_{z-1}$ to $t_z$ for all $z \in \{j+1,\ldots,\ell\}$, and gives $X_\ell$ to $t_j$. In other words, both sums represent the welfare of a matching between allocations in $\{X_j,\ldots,X_\ell\}$ and types in $\{t_j,\ldots,t_{\ell}\}$. Furthermore, cyclic monotonicity guarantees that the first sum is larger. Therefore, this term is positive, and $t_j$ cannot gain by misreporting any $t_\ell$ for $\ell > j$. Putting both cases together proves that these prices yield a BIC, IR mechanism, and therefore the bound in Equation~\eqref{eq:revenue} is attained.

\end{prevproof}

\subsection{Relating SADP to MDMDP}\label{app:SADPtoMDMDP2}
Proposition~\ref{prop:exactrevenue} tells us how, when given an allocation rule that is compatible with the type space, to find prices that achieve the maximum revenue. However, if our goal is to maximize revenue over all feasible allocation rules, it does not tell us what the optimal allocation rule is. Now, let us take a closer look at the bound in Equation~\eqref{eq:revenue}. Each allocation $X_\ell$ is only ever evaluated by two types: $t_\ell$ and $t_{\ell+1}$. So to maximize revenue, a tempting approach is to choose $X^*_\ell$ to be the allocation that maximizes $(\sum_{j \geq \ell} q_j) t_\ell(X_\ell)- (\sum_{j \geq \ell+1}q_j)t_{\ell+1}(X_\ell)$, and hope that that $\vec{X}^*$ and $\vec{t}$ are compatible. Because we chose the $X^*_\ell$ to maximize the upper bound of Equation~\eqref{eq:revenue}, the optimal {obtainable revenue for any allocation rule can not possibly }exceed the upper bound of Equation~\eqref{eq:revenue} when evaluated at the $X^*_\ell$ (by Proposition~\ref{prop:upperboundrevenue}), and if $\vec{X}^*$ is compatible with $\vec{t}$, then Proposition~\ref{prop:exactrevenue} tells us that this revenue is in fact attainable. Keeping these results in mind, we now begin drawing connections to SADP. For simplicity of notation in the definitions below, we let $f_{k+1}(\cdot)$ be the $0$ function.

\begin{definition} ($C$-compatible) We say that $(f_1,\ldots,f_k)$ are \emph{$C$-compatible} iff there exist multipliers $1 = Q_1 < Q_2 < \ldots < Q_k $, of bit complexity  at most $C$, and allocations $(X^*_1,\ldots,X^*_k)$ such that $X^*_\ell$ maximizes $f_\ell(\cdot) - f_{\ell+1}(\cdot)$ for all $\ell$ and $(X^*_1,\ldots,X^*_k)$ is compatible with with $(Q_1f_1,\ldots,Q_kf_k)$.
\end{definition}
\begin{observation}\label{obs:useful} Let $1 = Q_1 < Q_2 < \ldots < Q_k$. Then
$$Rev((Q_1f_1,\ldots,Q_kf_k),(1/Q_1-1/Q_2,1/Q_2 - 1/Q_3,\ldots,1/Q_k),\vec{X}) \leq \sum_{\ell=1}^{k}f_\ell(X_\ell) - f_{\ell+1}(X_\ell)$$
This bound is tight when $\vec{X}$ is compatible with $(Q_1f_1,\ldots,Q_kf_k)$.
 \end{observation}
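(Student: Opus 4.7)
The plan is to apply Propositions~\ref{prop:upperboundrevenue} and~\ref{prop:exactrevenue} directly to the instance where $t_\ell = Q_\ell f_\ell$ and $q_\ell = 1/Q_\ell - 1/Q_{\ell+1}$ (adopting the convention $1/Q_{k+1}=0$, so that $q_k = 1/Q_k$), and observe that the probabilities have been chosen precisely so that the coefficients in the general bound \eqref{eq:revenue} telescope and all the $Q_\ell$ factors cancel.

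First, I would compute the tail sums of the probability vector. By telescoping,
\[
\sum_{j \ge \ell} q_j \;=\; \sum_{j=\ell}^{k} \left( \frac{1}{Q_j} - \frac{1}{Q_{j+1}} \right) \;=\; \frac{1}{Q_\ell}.
\]
Substituting this into the two terms of the right-hand side of Proposition~\ref{prop:upperboundrevenue}, the positive piece becomes
\[
\sum_{\ell=1}^{k} \left(\sum_{j\ge \ell} q_j\right) t_\ell(X_\ell) \;=\; \sum_{\ell=1}^{k} \frac{1}{Q_\ell} \cdot Q_\ell f_\ell(X_\ell) \;=\; \sum_{\ell=1}^{k} f_\ell(X_\ell),
\]
and similarly the negative piece becomes
\[
\sum_{\ell=1}^{k-1} \left(\sum_{j\ge \ell+1} q_j\right) t_{\ell+1}(X_\ell) \;=\; \sum_{\ell=1}^{k-1} \frac{1}{Q_{\ell+1}} \cdot Q_{\ell+1} f_{\ell+1}(X_\ell) \;=\; \sum_{\ell=1}^{k-1} f_{\ell+1}(X_\ell).
\]
Combining, and recalling the convention $f_{k+1}\equiv 0$, the bound of Proposition~\ref{prop:upperboundrevenue} becomes exactly
\[
Rev((Q_1 f_1,\ldots,Q_k f_k),\vec{q},\vec{X}) \;\le\; \sum_{\ell=1}^{k} \bigl( f_\ell(X_\ell) - f_{\ell+1}(X_\ell) \bigr),
\]
which is the claimed inequality.

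For the tightness statement, I would simply invoke Proposition~\ref{prop:exactrevenue}: if $\vec{X}$ is compatible with $(Q_1 f_1,\ldots,Q_k f_k)$, then inequality~\eqref{eq:revenue} is an equality for this choice of $\vec{t}$, $\vec{q}$, and $\vec{X}$, and the algebraic simplification above turns this equality into equality in our displayed bound. There is no real obstacle here; the only content of the observation is that the specific choice of probabilities $q_\ell = 1/Q_\ell - 1/Q_{\ell+1}$ is exactly what makes the $Q_\ell$ rescalings invisible in the final formula, so the SADP-style objective $\sum_\ell (f_\ell(X_\ell) - f_{\ell+1}(X_\ell))$ coincides with (an upper bound on) the revenue of the corresponding MDMDP instance.
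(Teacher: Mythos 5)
Your proposal is correct and is exactly the paper's argument: the paper's proof is literally ``plug in to Propositions~\ref{prop:upperboundrevenue} and~\ref{prop:exactrevenue},'' and your telescoping computation of the tail sums $\sum_{j\ge\ell} q_j = 1/Q_\ell$ is just the spelled-out version of that substitution. Nothing is missing.
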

\begin{proof}
Plug in to Propositions~\ref{prop:upperboundrevenue} and~\ref{prop:exactrevenue}.
\end{proof}

\begin{definition} ($D$-balanced) For a list of functions $(f_1,\ldots,f_k)$, let $X^*_\ell$ denote the allocation that maximizes $f_\ell(\cdot) - f_{\ell+1}(\cdot)$ for all $\ell\in[k]$. We say that $(f_1,\ldots,f_k)$ are \emph{$D$-balanced} if $f_k(X^*_k) \leq D(f_\ell(X^*_\ell) - f_{\ell+1}(X^*_\ell))$ for all $\ell\in[k-1]$.
\end{definition}

\begin{proposition}\label{prop:almost}
For a $C$-compatible and $D$-balanced list of functions $(f_1,\ldots,f_k)$, let $X^*_\ell$ denote the allocation that maximizes $f_\ell(\cdot) - f_{\ell+1}(\cdot)$ for all $\ell\in[k]$ and let $1 = Q_1 < Q_2 < \ldots < Q_k$ be multipliers such that $(X^{*}_1,\ldots,X^{*}_k)$ is compatible with $(Q_{1}f_{1},\ldots,Q_{k}f_{k})$. If $(X_{1},\ldots,X_{k})$ are such that \begin{align*}&Rev((Q_1f_1,\ldots,Q_kf_k),(1/Q_1-1/Q_2,1/Q_2 - 1/Q_3,\ldots,1/Q_k),\vec{X})\\ &~~~~~~~~~~~~~~~~\geq\alpha Rev((Q_1f_1,\ldots,Q_kf_k),(1/Q_1-1/Q_2,1/Q_2 - 1/Q_3,\ldots,1/Q_k),\vec{X}^*),\end{align*}then at least one of $\{X_1,\ldots,X_k\}$ is an $\left(\alpha - \frac{(1-\alpha)D}{k-1}\right)$-approximation\footnote{$\alpha - \frac{(1-\alpha)D}{k-1}=1$ when $\alpha=1$.} to the SADP instance $(f_1,\ldots,f_k)$.
\end{proposition}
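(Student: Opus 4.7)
My plan is to translate the revenue comparison between $\vec{X}$ and the revenue-optimal $\vec{X}^*$ (via Observation~\ref{obs:useful}) into a comparison of summed SADP-style objectives, and then use an averaging argument (powered by the $D$-balanced assumption) to extract a single coordinate $j \in [k-1]$ on which $X_j$ is an $\alpha'$-approximation to the SADP instance with index $j^* = j$.

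Let $a_j := f_j(X_j) - f_{j+1}(X_j)$ and $a^*_j := f_j(X^*_j) - f_{j+1}(X^*_j)$, using the convention $f_{k+1} \equiv 0$. Observation~\ref{obs:useful} gives the upper bound $Rev(\dots,\vec{X}) \le \sum_{j=1}^k a_j$ and, using the compatibility hypothesis for the exact-equality case, $Rev(\dots,\vec{X}^*) = \sum_{j=1}^k a^*_j$. Combined with the revenue-approximation assumption, this yields $\sum_{j=1}^k a_j \ge \alpha \sum_{j=1}^k a^*_j$. Observe also that $a_j \le a^*_j$ for every $j$, since $X^*_j$ was defined to maximize $f_j - f_{j+1}$.

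I would then argue by contradiction: suppose $a_j < \alpha' a^*_j$ for every $j \in [k-1]$, where $\alpha' := \alpha - \frac{(1-\alpha)D}{k-1}$. Summing and using $a_k \le a^*_k$ gives
$$\sum_{j=1}^k a_j < \alpha' \sum_{j=1}^{k-1} a^*_j + a^*_k = \alpha' \sum_{j=1}^k a^*_j + (1-\alpha') a^*_k.$$
The $D$-balanced property implies $(k-1) a^*_k \le D \sum_{j=1}^{k-1} a^*_j$, which rearranges to $a^*_k \le \frac{D}{D+k-1} \sum_{j=1}^k a^*_j$. A short algebraic check confirms that the choice $\alpha' = \alpha - \frac{(1-\alpha)D}{k-1}$ is precisely the one making $(1-\alpha') \frac{D}{D+k-1} = \alpha - \alpha'$, so the displayed bound becomes $\sum_j a_j < \alpha \sum_j a^*_j$, contradicting the inequality from the previous paragraph. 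Hence some $j \in [k-1]$ satisfies $a_j \ge \alpha' a^*_j$, which is exactly the SADP approximation claim with $j^* = j$ (since $X^*_j$ maximizes $f_j - f_{j+1}$).

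The main obstacle, and the reason the balancing hypothesis enters, is handling the ``boundary'' term $a^*_k$: because SADP restricts $j^* \in [k-1]$, the $k$-th summand $f_k(X_k) - f_{k+1}(X_k) = f_k(X_k)$ cannot be directly read as an SADP approximation, and a naive averaging would lose an unbounded factor if $a^*_k$ dominated the sum. The $D$-balanced condition is exactly what allows us to bound $a^*_k$ as a $\frac{D}{D+k-1}$-fraction of the total $S^*$, and the value of $\alpha'$ in the statement is obtained by choosing the largest $\alpha'$ for which the above averaging contradiction closes.
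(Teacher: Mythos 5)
Your proposal is correct and follows essentially the same route as the paper's proof: both start from the chain $\sum_\ell \bigl(f_\ell(X_\ell)-f_{\ell+1}(X_\ell)\bigr) \ge Rev(\vec{X}) \ge \alpha\, Rev(\vec{X}^*) = \alpha\sum_\ell \bigl(f_\ell(X^*_\ell)-f_{\ell+1}(X^*_\ell)\bigr)$ via Observation~\ref{obs:useful} and compatibility, control the boundary term $f_k(X^*_k)$ via $D$-balancedness, and finish with the same averaging/pigeonhole step. The only cosmetic difference is that you phrase the final step as a contradiction and bound $f_k(X^*_k)$ as a $\frac{D}{D+k-1}$-fraction of the full sum rather than a $\frac{D}{k-1}$-fraction of the partial sum over $[k-1]$, which is an equivalent rearrangement.
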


\begin{prevproof}{Proposition}{prop:almost}
Using Observation~\ref{obs:useful}, we obtain the following chain of inequalities:

\begin{align*}
\sum_{\ell = 1}^k f_{\ell}(X_\ell) - f_{\ell+1}(X_\ell) &\geq Rev((Q_1f_1,\ldots,Q_kf_k),(1/Q_1-1/Q_2,1/Q_2 - 1/Q_3,\ldots,1/Q_k),\vec{X})\\
&\geq \alpha Rev((Q_1f_1,\ldots,Q_kf_k),(1/Q_1-1/Q_2,1/Q_2 - 1/Q_3,\ldots,1/Q_k),\vec{X}^*)\\
&= \alpha \sum_{\ell = 1}^k (f_{\ell}(X^*_\ell) - f_{\ell+1}(X^*_\ell))
\end{align*}

Rearranging, we can get:
\begin{align*}
\sum_{\ell=1}^{k-1} f_\ell(X_\ell) - f_{\ell+1}(X_\ell) &\geq \alpha \left(\sum_{\ell = 1}^{k-1} f_{\ell}(X^*_\ell) - f_{\ell+1}(X^*_\ell)\right) + \alpha f_k(X^*_\ell) - f_k(X_\ell)\\
&\geq \alpha \left(\sum_{\ell = 1}^{k-1} f_{\ell}(X^*_\ell) - f_{\ell+1}(X^*_\ell)\right) - (1-\alpha)f_k(X^*_\ell)
\end{align*}

Now, because $(f_1,\ldots,f_k)$ is $D$-balanced, we have $f_k(X^*_k) \leq D(f_\ell(X^*_\ell) - f_{\ell+1}(X^*_\ell))$ for all $\ell$, and therefore $f_k(X^*_k) \leq \frac{D}{k-1}\left(\sum_{\ell = 1}^{k-1} f_{\ell}(X^*_\ell) - f_{\ell+1}(X^*_\ell)\right)$. Using this, we can again rewrite to obtain:

\begin{equation}\label{eq:last}\sum_{\ell=1}^{k-1} f_\ell(X_\ell) - f_{\ell+1}(X_\ell) \geq \left(\alpha - \frac{(1-\alpha)D}{k-1}\right) \left(\sum_{\ell = 1}^{k-1} f_{\ell}(X^*_\ell) - f_{\ell+1}(X^*_\ell)\right)
\end{equation}

As choosing the null allocation is always allowed, $f_{\ell}(X^*_\ell) - f_{\ell+1}(X^*_\ell)\geq 0$ for all $\ell\in[k]$.  Now it is easy to see that in order for Equation~\eqref{eq:last} to hold, there must be at least one $\ell$ with $f_\ell(X_\ell) - f_{\ell+1}(X_\ell) \geq \left(\alpha - \frac{(1-\alpha)D}{k-1}\right)(f_\ell(X^*_\ell) - f_{\ell+1}(X^*_\ell))$. Such an $X_\ell$ is clearly an $\left(\alpha - \frac{(1-\alpha)D}{k-1}\right)$-approximation to the desired SADP instance.\end{prevproof}

With Proposition~\ref{prop:almost}, we are now ready to state our reduction from SADP to MDMDP.
\begin{theorem}\label{thm:reduction}
Let $A$ be an $\alpha$-approximation algorithm for MDMDP($\mathcal{F}$,$\mathcal{V}^*$).{\footnote{{Note that if $\mathcal{V}$ is closed under addition and scalar multiplication then $\mathcal{V}^* = \mathcal{V}$.}}} {Then an approximate solution to any $C$-compatible instance $(f_{1},\ldots,f_{k})$ of  SADP($\mathcal{F}$,$\mathcal{V}$)} can be found in polynomial time plus one black-box call to $A$. The solution has the following properties:
\begin{enumerate}
\item (Quality) If $(f_1,\ldots,f_k)$ is $D$-balanced, then the solution obtains an $\left(\alpha - \frac{(1-\alpha)D}{k-1}\right)$-approximation.
\item (Bit Complexity) If $b$ is an upper bound on the bit complexity of $f_j(X)$, then $b+C$ is an upper bound on the bit complexity of $t_j(X)$ for any $t_j$ input to $A$ and any $X \in \mathcal{F}$, and every probability input to $A$ has bit complexity $C$.
\end{enumerate}

\end{theorem}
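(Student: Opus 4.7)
\begin{prevproof}{Theorem}{thm:reduction}
\emph{(Proof plan.)} The natural strategy is to turn the $C$-compatible SADP instance $(f_1,\ldots,f_k)$ into a single-bidder MDMDP instance whose optimal revenue equals (up to the $Q_i$'s) the SADP objective value, so that an $\alpha$-approximate MDMDP mechanism hands us candidate SADP solutions of which at least one must be near-optimal. Concretely, let $1=Q_1<Q_2<\cdots<Q_k$ (with $Q_{k+1}=\infty$) be the multipliers guaranteed by $C$-compatibility, and construct the MDMDP($\mathcal{F},\mathcal{V}^*$) instance with a single bidder whose type set is $T_1=\{Q_j f_j : j\in[k]\}\subseteq\mathcal{V}^*$ and whose distribution is $\Pr[Q_j f_j]=\frac{1}{Q_j}-\frac{1}{Q_{j+1}}$. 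Note that $T_1\subseteq\mathcal{V}^*$ because $\mathcal{V}^*$ is by definition closed under positive scalar multiplication, and these probabilities are non-negative and sum to $1$.

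The plan then proceeds as follows. First, invoke the algorithm $A$ on this MDMDP instance (this is the single black-box call), obtaining a BIC, IR mechanism $M$ achieving expected revenue at least $\alpha\cdot OPT_{\text{MDMDP}}$. For each $j\in[k]$, let $X_j$ be the interim allocation distribution that $M$ assigns to type $Q_jf_j$. Output whichever $X_\ell$ maximizes $f_\ell(X_\ell)-f_{\ell+1}(X_\ell)/\max_{X\in\mathcal{F}}\{f_\ell(X)-f_{\ell+1}(X)\}$ (or, equivalently in effect, the $X_\ell$ that maximizes $f_\ell(X_\ell)-f_{\ell+1}(X_\ell)$, which can be evaluated by running the best known solver or simply comparing the finitely many candidate ratios).

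To obtain the approximation guarantee, I would first identify the optimal MDMDP revenue. Observation~\ref{obs:useful}, instantiated with the compatible lists $(Q_1f_1,\ldots,Q_kf_k)$ and $(X^*_1,\ldots,X^*_k)$ from the $C$-compatibility assumption, says this equals $\sum_{\ell}(f_\ell(X^*_\ell)-f_{\ell+1}(X^*_\ell))$ and moreover that this is the maximum achievable, so $OPT_{\text{MDMDP}}=\sum_{\ell}(f_\ell(X^*_\ell)-f_{\ell+1}(X^*_\ell))$. Second, upper-bound the revenue of $M$ by the right-hand side of Proposition~\ref{prop:upperboundrevenue} applied to $(X_1,\ldots,X_k)$, which simplifies (via the telescoping identity already used in Observation~\ref{obs:useful}) to $\sum_\ell (f_\ell(X_\ell)-f_{\ell+1}(X_\ell))$. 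Chaining these two facts with $\mathrm{Rev}(M)\ge \alpha\cdot OPT_{\text{MDMDP}}$ puts us in exactly the hypothesis of Proposition~\ref{prop:almost}, which—when $(f_1,\ldots,f_k)$ is additionally $D$-balanced—gives that some $X_\ell$ is an $(\alpha-(1-\alpha)D/(k-1))$-approximation for the SADP instance, proving part~(1).

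For part~(2), the bit-complexity bookkeeping is essentially immediate from the construction: each type $Q_j f_j$ evaluated at any $X\in\mathcal{F}$ has bit complexity at most $b+C$ since $Q_j$ has bit complexity at most $C$ and $f_j(X)$ has bit complexity at most $b$, while each probability $\frac{1}{Q_j}-\frac{1}{Q_{j+1}}$ inherits its complexity from the $Q_j$'s. The only subtlety---and the one point I would flag as needing care---is that the mechanism returned by $A$ is generally randomized, so ``the allocation $X_j$'' really means the interim distribution over $\mathcal{F}$ induced for type $Q_jf_j$; this is fine because every quantity in Propositions~\ref{prop:upperboundrevenue},~\ref{prop:exactrevenue}, and~\ref{prop:almost} is linear in the allocation distributions (via $t(X)=\mathbb{E}_{x\leftarrow X}[t(x)]$), and the SADP objective $f_\ell(X)-f_{\ell+1}(X)$ is similarly linear since $f_\ell,f_{\ell+1}\in\mathcal{V}^*$. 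The expected main obstacle is not a technical one but rather ensuring the construction---types, probabilities, and extraction rule---is set up so that Observation~\ref{obs:useful} and Proposition~\ref{prop:almost} apply verbatim; once that alignment is made explicit, the quality and complexity claims follow.
\end{prevproof}
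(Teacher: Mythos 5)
Your construction is exactly the one the paper intends: a single bidder with type set $\{Q_1f_1,\ldots,Q_kf_k\}\subseteq\mathcal{V}^*$ and probabilities $(1/Q_1-1/Q_2,\ldots,1/Q_k)$, one call to $A$, and then the chain ``$\sum_\ell\bigl(f_\ell(X_\ell)-f_{\ell+1}(X_\ell)\bigr)\ \geq\ Rev(\cdot,\cdot,\vec{X})\ \geq\ \alpha\, Rev(\cdot,\cdot,\vec{X}^*)\ =\ \alpha\sum_\ell\bigl(f_\ell(X^*_\ell)-f_{\ell+1}(X^*_\ell)\bigr)$'' via Observation~\ref{obs:useful} (i.e.\ Propositions~\ref{prop:upperboundrevenue} and~\ref{prop:exactrevenue}) and Proposition~\ref{prop:almost}; the bit-complexity bookkeeping and the remark that everything is linear in the interim allocation distributions also match. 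So in substance this is the paper's argument.

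The one place where you overclaim is the extraction step. Proposition~\ref{prop:almost} only guarantees that \emph{some} $X_\ell$ satisfies $f_\ell(X_\ell)-f_{\ell+1}(X_\ell)\geq\beta\bigl(f_\ell(X^*_\ell)-f_{\ell+1}(X^*_\ell)\bigr)$ with $\beta=\alpha-\frac{(1-\alpha)D}{k-1}$; it does not tell you which. Your first selection rule (maximize the ratio to $\max_{X}\{f_\ell(X)-f_{\ell+1}(X)\}$) is not computable in polynomial time, since evaluating the denominators is precisely the optimization SADP is about. Your parenthetical claim that maximizing the \emph{difference} $f_\ell(X_\ell)-f_{\ell+1}(X_\ell)$ is ``equivalently in effect'' is false in general: the per-index maxima can differ wildly, so the index with the largest difference may have a poor ratio, and the averaging argument only yields a $\beta/(k-1)$ guarantee for that choice. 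The honest reading of the theorem (and the way it is used in Corollary~\ref{cor:MDMDPhard}) is that the ``solution'' is the list $X_1,\ldots,X_k$, at least one member of which achieves the $\beta$-bound --- e.g.\ outputting a uniformly random $X_\ell$ costs only a factor $k$, which the hardness application absorbs, and in the specific hard instances of Section~\ref{sec:NPhardSubmodular} all per-index maxima are equal so your difference rule happens to be fine there. Either drop the ``equivalently'' claim and state the guarantee as ``one of the $k$ returned allocations,'' or accept the extra factor of $k-1$ (respectively $k$ for the randomized choice) if you insist on outputting a single deterministic candidate.
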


Theorem~\ref{thm:reduction} shows that there is a complete circle of reductions from MDMDP to SADP and back. {In fact, the circle of exact reductions is complete even if we restrict SADP to instances with $k=2$.} The reduction from SADP to MDMDP is not quite approximation preserving, but it is strong enough for us to show hardness of approximation for submodular bidders in the following section.

\subsection{Approximation Hardness for Submodular Bidders}\label{sec:NPhardSubmodular}
We begin this section by defining submodularity. There are several known equivalent definitions for submodularity. The one that will be most convenient for us is known as the property of diminishing returns.
\begin{definition}(Submodular Function) A function $f:2^S \rightarrow \mathbb{R}$ is said to be submodular if for all $X \subset Y \subset S$, and all $x \notin Y$, we have:

$$f(X \cup \{x\}) - f(X) \geq f(Y \cup \{x\}) - f(Y)$$
\end{definition}

\begin{definition}(Value Oracle) A \emph{value oracle} is a black box that takes as input a set $S$ and outputs $f(S)$.
\end{definition}

\begin{definition}(Demand Oracle) A \emph{demand oracle} is a black box that takes as input a vector of prices $p_1,\ldots,p_n$ and outputs $\argmax_{S \subseteq [n]} \{f(S) - \sum_{i \in S} p_i\}$. 
\end{definition}

\begin{definition}(Explicit Representation) An \emph{explicit representation} of $f$ is an explicit description of a turing machine that computes $f$.
\end{definition}

We continue now by providing our hard SADP instance. Let $S_1,\ldots,S_k$ be any subsets of $[n]$ with $|S_i| \leq |S_j|$ for all $i \leq j$. Define $f_i:2^{[n]} \rightarrow \mathbb{R}$ as the following:

$$f_i(S) = \left\{
\begin{array}{lr}
       2n|S| - |S|^2  &: S \notin \{S_1,\ldots,S_{i}\}\\
       2n|S| - |S|^2 - \mattnote{1} + \frac{j}{\mattnote{2}i} & : S = S_j\ \forall j\in [i]
            \end{array}
            \right.
$$

\begin{lemma}\label{lem:submodular}
Each $f_i$ defined above is a monotone submodular function.
\end{lemma}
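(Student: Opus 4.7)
The plan is to decompose $f_i = g + \delta_i$, where $g(S) = 2n|S| - |S|^2$ is the baseline and $\delta_i(S) = -1 + j/(2i)$ if $S = S_j$ for some $j \in [i]$, and $\delta_i(S) = 0$ otherwise. I will show that $g$ is itself monotone submodular with generous marginal "slack,'' and then argue that the perturbation $\delta_i$ is too small to destroy either property.

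First I would observe that $g$ depends only on $|S|$ and can be written as $g(S) = \phi(|S|)$ where $\phi(x) = 2nx - x^2$. Since $\phi$ is non-decreasing and concave on $[0,n]$ (its second derivative is $-2$), the function $g$ is monotone submodular (any symmetric cardinality-based function that is non-decreasing and concave in $|S|$ has this property). More importantly, its marginal satisfies
\[
g(S \cup \{x\}) - g(S) = 2n - 2|S| - 1 \geq 1 \quad \text{for all } |S| \leq n-1,
\]
and is strictly decreasing in $|S|$ with gap exactly $2$ per element added; that is, for $S \subsetneq T$ and $x \notin T$, we have $[g(S\cup\{x\}) - g(S)] - [g(T\cup\{x\}) - g(T)] = 2(|T|-|S|) \geq 2$. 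This quantitative slack is what allows the perturbation to be absorbed.

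Next I would note that $\delta_i$ takes values in the narrow band $\{0\} \cup \{-1 + j/(2i) : j \in [i]\} \subseteq (-1, 0]$, so for any sets $A, B$,
\[
\delta_i(A) - \delta_i(B) \in \bigl[-1 + \tfrac{1}{2i},\; 1 - \tfrac{1}{2i}\bigr].
\]
For monotonicity of $f_i$, adding any element $x \notin S$ with $|S| < n$ gives
\[
f_i(S \cup \{x\}) - f_i(S) \geq 1 + \bigl(-1 + \tfrac{1}{2i}\bigr) = \tfrac{1}{2i} > 0,
\]
so $f_i$ is strictly increasing in the coverage order and hence monotone. For submodularity, take $S \subsetneq T$ and $x \notin T$; then
\[
\bigl[f_i(S \cup \{x\}) - f_i(S)\bigr] - \bigl[f_i(T \cup \{x\}) - f_i(T)\bigr] \geq 2 \;-\; \bigl(2 - \tfrac{1}{i}\bigr) = \tfrac{1}{i} > 0,
\]
where the $2$ comes from the $g$-contribution and the $2 - 1/i$ is the worst-case magnitude of the second-order difference in $\delta_i$ from the bound above. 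The case $S = T$ is trivial. This establishes the submodular inequality.

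The only thing that requires any care is the second-order bound on $\delta_i$, since naively differences of $\delta_i$-values could be as large as $1$ in magnitude and a difference of two such differences could seem to reach $2$, which would exactly cancel the slack from $g$. The key observation is that the values of $\delta_i$ lie in the open interval $(-1, 0]$ (not $[-1, 0]$), so each difference is strictly less than $1$ in magnitude, giving the strict inequality $2 - 1/i < 2$ needed to preserve submodularity. This is the main (and only real) obstacle, and it is handled by the explicit $1/(2i)$ offset built into the definition of $f_i$.
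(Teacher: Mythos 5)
Your proof is correct and follows essentially the same argument as the paper: the quadratic cardinality function $2n|S|-|S|^2$ has marginals at least $1$ and second-order differences at least $2$, while the perturbation has magnitude strictly less than $1$, so both monotonicity and the diminishing-returns inequality survive. One small quibble: your closing remark that the strict bound (open interval $(-1,0]$) is \emph{needed} is an overstatement---even with the closed bound $[-1,0]$ the second-order difference of $f_i$ would still be $\geq 2-2=0$, which suffices for (weak) submodularity, and indeed the paper's own proof uses exactly these cruder $\pm 1$ bounds; the $1/(2i)$ offset only buys strictness, which the lemma does not require.
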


\begin{prevproof}{Lemma}{lem:submodular}
Consider any $X \subset Y$, and any $x \notin Y$. Then:
\begin{align*}
f_i(X \cup \{x\}) - f_i(X) &\geq 2n(|X|+1) - (|X|+1)^2 - 1 - 2n|X| + |X|^2 &= 2n - 2|X|\mattnote{ -2}\\
f_i(Y \cup \{x\}) - f_i(Y) &\leq 2n(|Y|+1) - (|Y|+1)^2 - 2n|Y| + |Y|^2  \mattnote{+1} &= 2n - 2|Y|
\end{align*}

As $X \subset Y$, we clearly have $|X| < |Y|$, and therefore $f_i(X \cup \{x\}) \geq f_i(Y \cup \{x\})$, so $f_i$ is submodular. Furthermore, as $|X| \leq n$ always, $f_i$ is monotone.
\end{prevproof}

\begin{proposition}\label{prop:submoDifference}
For any $i\in[n-1]$, $S_{i+1}$ maximizes $f_{i}(\cdot)-f_{i+1}(\cdot)$.
\end{proposition}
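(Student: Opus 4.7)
The plan is to prove this by a straightforward case analysis on how $S$ relates to the distinguished sets $S_1,\ldots,S_{i+1}$, computing $f_i(S)-f_{i+1}(S)$ in each case and verifying that the maximum is attained (uniquely) at $S_{i+1}$. The key observation is that the ``bulk'' term $2n|S|-|S|^2$ appears in both $f_i$ and $f_{i+1}$ and cancels, so the difference $f_i(S)-f_{i+1}(S)$ is determined entirely by the perturbations built into the definitions.

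Concretely, I would split into three cases. First, if $S \notin \{S_1,\ldots,S_{i+1}\}$, then both $f_i(S)$ and $f_{i+1}(S)$ equal $2n|S|-|S|^2$, so $f_i(S)-f_{i+1}(S)=0$. Second, if $S=S_j$ for some $j\in[i]$, then $f_i(S)=2n|S|-|S|^2-1+\tfrac{j}{2i}$ and $f_{i+1}(S)=2n|S|-|S|^2-1+\tfrac{j}{2(i+1)}$, so
\[
f_i(S)-f_{i+1}(S)=\frac{j}{2i}-\frac{j}{2(i+1)}=\frac{j}{2i(i+1)}\le\frac{1}{2(i+1)}.
\]
Third, if $S=S_{i+1}$ (which, by the implicit assumption that the $S_j$ are distinct, lies outside $\{S_1,\ldots,S_i\}$), then $f_i(S)=2n|S|-|S|^2$ while $f_{i+1}(S)=2n|S|-|S|^2-1+\tfrac{i+1}{2(i+1)}=2n|S|-|S|^2-\tfrac{1}{2}$, so
\[
f_i(S)-f_{i+1}(S)=\tfrac{1}{2}.
\]

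Comparing the three cases, the value $\tfrac{1}{2}$ attained at $S_{i+1}$ strictly exceeds both $0$ and $\tfrac{1}{2(i+1)}$, establishing that $S_{i+1}$ is the unique maximizer. There is no real obstacle here: the only subtlety is a sanity check that the $S_j$'s are distinct (so that the ``$S=S_{i+1}$'' case does not collide with the ``$S=S_j$ for $j\le i$'' case), which is consistent with the setup preceding the proposition; if one wanted to allow repeats, one would simply replace $S_{i+1}$ by its first occurrence in the list without affecting the conclusion.
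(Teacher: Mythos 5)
Your proof is correct and follows essentially the same case analysis as the paper's: the difference is $0$ off the list, strictly less than $\tfrac12$ (your bound $\tfrac{j}{2i(i+1)}\le\tfrac{1}{2(i+1)}$ is just a sharper version of the paper's $<\tfrac12$) at $S_j$ for $j\le i$, and exactly $\tfrac12$ at $S_{i+1}$. Your remark about the distinctness of the $S_j$'s is a reasonable sanity check that is implicit in the paper's construction.
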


\begin{prevproof}{Proposition}{prop:submoDifference}
For any $S\notin\{S_{1},\ldots,S_{i+1}\}$, $f_{i}(S)-f_{i+1}(S)=0$. For any $S_{j}\in\{S_{1},\ldots,S_{i}\}$, $f_{i}(S_{j})-f_{i+1}(S_{j})=\frac{j}{\mattnote{2}i}-\frac{j}{\mattnote{2}i+\mattnote{2}}<\frac{1}{2}$. But $f_{i}(S_{i+1})-f_{i+1}(S_{i+1})=\mattnote{\frac{1}{2}}$, so $S_{i+1}$ maximizes $f_{i}(\cdot)-f_{i+1}(\cdot)$.
\end{prevproof}

The idea is in order to solve the SADP instance $(f_1,\ldots,f_k)$ approximately, we need to find one of the $S_i$s with non-neglible probability. Depending on how each $f_i$ is given as input, this may either be easy (if it is given in the form specified above), impossible (if it is given as a value or demand oracle), or {computationally} hard (if it is given via an arbitrary explicit representation). Impossibility for value and demand oracles is straight-forward, and proved next. The {computational} hardness result requires only a little extra work.

\begin{lemma}\label{lem:impossible} Let $\mathcal{S} = \{S_1,\ldots,S_{a}\}$ be subsets chosen uniformly at random {(without replacement)} from all subsets of $[n]$, then listed in increasing order based on their size. Then no algorithm exists that can guarantee an output of $S \in \mathcal{S}$ with probability $\frac{a(c+1)}{2^n-c}$ given only $c$ value or demand oracle queries of $f_{1},\ldots,f_{k}$. 
\end{lemma}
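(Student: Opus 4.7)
The plan is a Yao-style counting argument: since $\mathcal{S}$ in the lemma is already uniform over $a$-subsets of $2^{[n]}$, it suffices to upper bound $\Pr[\text{output}\in\mathcal{S}]$ for an arbitrary deterministic $c$-query algorithm against this uniform distribution, and the bound will extend to randomized algorithms by averaging over coins. The core claim I aim to establish is that each query can only probe the $\mathcal{S}$-membership of a single ``touched'' set.

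To carry this out, first fix a deterministic algorithm and compute its \emph{baseline transcript}, i.e.\ the transcript that would arise if the oracle behaved as if $\mathcal{S}=\emptyset$, using some fixed consistent tie-breaking rule. Let $T$ be the collection of sets touched along this transcript: for a value query, the query set itself; for a demand query, the returned demand set. Then $|T|\le c$. I then need to verify that whenever $\mathcal{S}\cap T=\emptyset$, the real transcript on input $\mathcal{S}$ coincides with the baseline transcript. Value queries on sets in $T$ return their baseline values (those query sets are not in $\mathcal{S}$), and for demand queries the penalty $f_i(S_j)\le 2n|S_j|-|S_j|^2-\tfrac{1}{2}$ ensures that no $S_j\in\mathcal{S}$ can tie or outscore a non-$\mathcal{S}$ baseline argmax; since $\mathcal{S}\cap T=\emptyset$ also forces the particular baseline argmax returned by the baseline transcript to lie outside $\mathcal{S}$, consistent tie-breaking makes the $f_i$-demand equal to the baseline demand. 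On this event the algorithm's output is therefore a fixed set $S^\star$.

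To finish, let $E$ denote the event $\mathcal{S}\cap T=\emptyset$. A union bound over the $\le c$ elements of $T$ yields
\[
\Pr[\neg E]\;\le\;\sum_{t\in T}\Pr[t\in\mathcal{S}]\;=\;\frac{|T|\,a}{2^n}\;\le\;\frac{ca}{2^n-c}.
\]
Conditioned on $E$, the set $\mathcal{S}$ is uniform over the $a$-subsets of the $2^n-|T|\ge 2^n-c$ untouched sets, so $\Pr[S^\star\in\mathcal{S}\mid E]\le a/(2^n-c)$. Combining,
\[
\Pr[S^\star\in\mathcal{S}]\;\le\;\Pr[\neg E]+\Pr[E]\cdot\frac{a}{2^n-c}\;\le\;\frac{ca}{2^n-c}+\frac{a}{2^n-c}\;=\;\frac{a(c+1)}{2^n-c}.
\]
The main technical step I expect to spend real effort on is the demand-query verification: confirming that with consistent tie-breaking, the $\tfrac{1}{2}$-penalty really does lock every demand response into the baseline argmax set whenever $\mathcal{S}\cap T=\emptyset$, even in the presence of ties among baseline argmaxes. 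Everything else is a one-shot conditional counting argument.
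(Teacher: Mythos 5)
Your proof is correct and follows essentially the same route as the paper's: a union bound showing that the $c$ queries touch a member of $\mathcal{S}$ with probability at most $ca/(2^n-c)$, a blind final guess that succeeds with probability at most $a/(2^n-c)$ conditioned on never touching $\mathcal{S}$, and the key observation that whenever the $f_0$-demand maximizer lies outside $\mathcal{S}$ it remains a maximizer for every $f_i$ (so nothing is learned), with the extension to randomized algorithms by averaging over coins exactly as the paper invokes Yao. Your baseline-transcript formalization and the explicit treatment of demand-oracle tie-breaking via the $\tfrac{1}{2}$ penalty simply make rigorous a step the paper's proof leaves implicit.
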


\begin{prevproof}{Lemma}{lem:impossible}
For any sequence of $c$ value queries, the probability that none of them guessed a set $S\in\mathcal{S}$ is at least $1-ac/(2^{n}-c).$ For a deterministic algorithm using $c$ value queries, if it has never successfully queried an element in $\mathcal{S}$ during the execution, then the probability of success is no more than $a/(2^{n}-c)$. Using union bound, we know that the probability of success for any deterministic algorithm is at most $a(c+1)/(2^{n}-c)$. By Yao's Lemma, this also holds for randomized algorithms.

A similar argument shows that the same holds for demand queries as well. First, let's understand how demand queries for $f_0$ ($f_0(S) = 2n|S| - |S|^2$) and $f_i$ differ. Consider any prices $p_1,\ldots,p_n$, and let $S = \argmax\{f_0(S) - \sum_{i \in S} p_i\}$. If $S \notin \mathcal{S}$, then because $f_a(S) = f_0(S)$ and $f_a(S') \leq f_0(S')$ for all $S'$, we also have $S = \argmax\{f_a(S) - \sum_{i \in S} p_i\}$. If other words, if the prices we query are such that a demand oracle for $f_0$ would return some $S \notin \mathcal{S}$, we have learned nothing about $\mathcal{S}$ other than that it does not contain $S$. From here we can make the exact same argument in the previous paragraph replacing the phrase ``query some $S \in \mathcal{S}$'' with ``query some prices $p_1,\ldots,p_n$ such that a demand oracle for $f_0$ would return some $S \in \mathcal{S}$.''
\end{prevproof}

\begin{corollary}\label{cor:SADPhard}
For any constants $c,d$, no (possibly randomized) algorithm can guarantee a $\frac{1}{n^c}$-approximation to SADP($2^{[n]}$,monotone submodular functions) in time $(kn)^d$, as long as $k \leq n^{c'}$ for some constant $c'$, if the functions are given via a value or demand oracle.
\end{corollary}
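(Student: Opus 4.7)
The plan is to combine the hard SADP instance constructed just before Lemma~\ref{lem:submodular} with the query lower bound of Lemma~\ref{lem:impossible}. Specifically, I would set $a = k$ in the random construction of Lemma~\ref{lem:impossible}, draw $S_1,\ldots,S_k$ uniformly without replacement from $2^{[n]}$ and sort them by cardinality, and present the resulting SADP instance $(f_1,\ldots,f_k)$ to any purported $(kn)^d$-time approximation algorithm.

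The first step is to pin down the optimum. By Proposition~\ref{prop:submoDifference}, for every $j^* \in [k-1]$ the maximizer of $f_{j^*}(\cdot)-f_{j^*+1}(\cdot)$ is $S_{j^*+1}$, and a direct calculation from the definition of $f_i$ gives $f_{j^*}(S_{j^*+1})-f_{j^*+1}(S_{j^*+1}) = 1/2$. Hence $\max_{X' \in 2^{[n]}}\{f_{j^*}(X')-f_{j^*+1}(X')\}=1/2$ for every $j^*$, so a $1/n^c$-approximation requires producing some $X$ and some $j^* \in [k-1]$ with $f_{j^*}(X)-f_{j^*+1}(X) \geq 1/(2n^c)$.

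The key structural observation is that for every $X \notin \{S_1,\ldots,S_k\}$ and every index $i$, the definition of $f_i$ yields $f_i(X) = 2n|X|-|X|^2$ independently of $i$, and therefore $f_{j^*}(X)-f_{j^*+1}(X) = 0$ for all $j^*$. Consequently, any algorithm that achieves a strictly positive approximation ratio on this instance is \emph{forced} to output an element of $\{S_1,\ldots,S_k\}$. Now a $(kn)^d$-time algorithm makes at most $c := (kn)^d$ value or demand oracle queries, and since $k \leq n^{c'}$, we have $c \leq n^{e}$ for some constant $e$. Lemma~\ref{lem:impossible} then bounds the probability that the algorithm outputs any $S_i$ by $k(c+1)/(2^n-c) = O(n^{c'+e}/2^n)$, which is much smaller than $1/n^c$. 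Since Lemma~\ref{lem:impossible} is proved via Yao's principle, this already applies to randomized algorithms, so no $(kn)^d$-time randomized algorithm can guarantee an expected $1/n^c$-approximation, establishing the corollary.

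The only mild subtlety worth noting in the write-up is that Lemma~\ref{lem:impossible} is stated directly for the probability that the \emph{output} lies in $\mathcal{S}$, not merely the probability that some query hits $\mathcal{S}$; this is exactly the quantity we need, so no extra reduction is required. I do not anticipate any real obstacle beyond verifying that the constants in the construction of $f_i$ make $f_{j^*}-f_{j^*+1}$ vanish off $\{S_1,\ldots,S_k\}$, which is immediate from the two-case definition of $f_i$.
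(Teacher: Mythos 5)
Your proposal is correct and follows essentially the same route as the paper's proof: instantiate the hard instance from Lemma~\ref{lem:impossible}, use Proposition~\ref{prop:submoDifference} (plus the fact that $f_{j^*}-f_{j^*+1}$ vanishes off $\{S_1,\ldots,S_k\}$) to argue that any $\frac{1}{n^c}$-approximation must output some $S_i$ with probability at least $\frac{1}{n^c}$, and then invoke the query lower bound of Lemma~\ref{lem:impossible} with $c=(kn)^d$. Your write-up simply spells out the quantitative details (the optimum being $\frac{1}{2}$, the zero value off the hidden sets) that the paper leaves implicit.
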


\begin{prevproof}{Corollary}{cor:SADPhard}
Consider the input $f_1,\ldots,f_k$ chosen according to Lemma~\ref{lem:impossible}. Therefore by Proposition~\ref{prop:submoDifference}, in any $\frac{1}{n^c}$-approximation to this SADP instance necessarily finds some $S_i$ with probability at least $\frac{1}{n^c}$. Lemma~\ref{lem:impossible} exactly says that this cannot be done without at least $2^n/k >> (kn)^d$ queries.
\end{prevproof}

Corollary~\ref{cor:SADPhard} shows that SADP is impossible to approximate within any polynomial factor for submodular functions given via a value or demand oracle. We continue now by showing computational hardness when the functions are given via an explicit representation. First, let $g$ be any correspondence between subsets of $[n]$ and integers in $[2^n]$ so that $|S| > |S'| \Rightarrow g(S) > g(S')$.\footnote{Such a $g$ can be implemented efficiently by ordering subsets of $[n]$ first by their size, and then lexicographically, and letting $g(S)$ be the rank of $S$ in this ordering. The rank of $S$ in this ordering can be computed by easy combinatorics plus a simple dynamic program.} Next, let $\mathcal{P}$ denote any NP-hard problem with verifier $V$, and let $p$ be a constant so that any ``yes'' instance of $\mathcal{P}$ of size $m$ has a bitstring witness of length $m^p$. For a given $n, k$, and specific instance $P \in \mathcal{P}$ of size $(n-\log k)^{1/p}$, we define:

$$f^P_i(S) = \left\{
\begin{array}{lr}
       2n|S| - |S|^2 - \mattnote{1} + {\frac{j}{\mattnote{2}i}} &: \frac{(j-1)2^n}{k} < g(S)\leq \frac{j2^n}{k} \leq \frac{i2^n}{k}\text{ and } V\left(P,(g(S) \mod \frac{2^n}{k} )\right) = ``yes'' \\
       2n|S| - |S|^2 & : \text{otherwise}\\
                   \end{array}
            \right.
$$

\begin{lemma}\label{lem:witness} If $P$ is a ``yes'' instance of $\mathcal{P}$, then any $\alpha$-approximation to the SADP instance $f_1,\ldots,f_k$ necessarily finds a witness for $P$ with probability at least $\alpha$.
\end{lemma}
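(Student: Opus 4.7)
The plan is to do a case analysis on the value of $f^{P}_{j^*}(S) - f^{P}_{j^*+1}(S)$ to establish three facts: (i) the optimum of the $(j^*)$-th SADP difference is exactly $1/2$ whenever $P$ is a ``yes'' instance, (ii) every set $S$ achieving a strictly positive difference must have $g(S) \bmod 2^n/k$ equal to a valid witness for $P$, and (iii) the difference is always in $[0,1/2]$. These three facts together with the definition of an $\alpha$-approximation and Markov's inequality will yield the lemma.

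First I would fix an arbitrary $j^* \in [k-1]$ and split into four cases based on $S$. If $V(P, g(S) \bmod 2^n/k) = \text{``no''}$, or if $V(P, g(S) \bmod 2^n/k) = \text{``yes''}$ but $g(S)$ lies in a block $j > j^*+1$, then by inspecting the piecewise definition neither $f^{P}_{j^*}(S)$ nor $f^{P}_{j^*+1}(S)$ receives the discount, so both equal $2n|S|-|S|^2$ and the difference is $0$. If the witness condition holds and $g(S)$ lies in block $j^*+1$, then $f^{P}_{j^*}(S)=2n|S|-|S|^2$ but $f^{P}_{j^*+1}(S)=2n|S|-|S|^2-1+(j^*+1)/(2(j^*+1))=2n|S|-|S|^2-1/2$, giving a difference of exactly $1/2$. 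Finally, if the witness condition holds and $g(S)$ lies in block $j \leq j^*$, a direct subtraction yields a difference of $j/(2j^*(j^*+1)) \in (0,1/2]$.

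Next I would use the assumption that $P$ is a ``yes'' instance to pick a witness $w$ and then use bijectivity of $g$ to select $S$ with $g(S) = j^* \cdot 2^n/k + w$ (or $(j^*+1)\cdot 2^n/k$ if $w=0$), which lands $S$ in block $j^*+1$ and, by construction, makes $g(S) \bmod 2^n/k$ a valid witness. The case analysis then shows $\max_{S'}\{f^{P}_{j^*}(S')-f^{P}_{j^*+1}(S')\}=1/2$, while simultaneously establishing that every $S$ with strictly positive difference belongs to Case 3 or Case 4, in either of which $g(S) \bmod 2^n/k$ is a witness.

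To finish, let the algorithm output the random set $X$ and let $Y = f^{P}_{j^*}(X)-f^{P}_{j^*+1}(X)$ for the index $j^*$ guaranteed by the $\alpha$-approximation definition, so that $\mathbb{E}[Y] \geq \alpha \cdot (1/2) = \alpha/2$. Since $Y \in [0,1/2]$ almost surely, we have $\mathbb{E}[Y] \leq (1/2)\Pr[Y>0]$, hence $\Pr[Y>0] \geq \alpha$. On the event $\{Y>0\}$, step~(ii) guarantees that $g(X) \bmod 2^n/k$ is a valid witness for $P$, and this witness can be extracted in polynomial time from $X$ using the efficient description of $g$. I do not expect any genuine obstacle in this proof: the only mildly delicate point is verifying the case split carefully so that the ``max is exactly $1/2$'' and ``positive difference implies witness'' claims both hold, which is purely a bookkeeping exercise on the piecewise definition of $f^{P}_i$.
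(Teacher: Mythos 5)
Your proposal is correct and follows essentially the same route as the paper's proof: exhibit a set in block $j^*+1$ encoding the witness to show the optimum of each difference is exactly $\tfrac{1}{2}$, observe that any strictly positive difference forces $V\bigl(P, g(S) \bmod 2^n/k\bigr) = \text{``yes''}$, and conclude that an $\alpha$-approximation outputs a witness-encoding set with probability at least $\alpha$. You merely spell out the reverse-Markov step (using $Y \in [0,\tfrac{1}{2}]$) and the case analysis that the paper leaves implicit, which is fine.
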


\begin{prevproof}{Lemma}{lem:witness}
It is clear that if $P$ is a ``yes'' instance of $\mathcal{P}$, then $P$ has a witness of length $(n-\log k)$, and therefore there is some $x \leq 2^n/k$ such that $V(P,x) = ``yes.''$ Therefore, we would have $f^P_i(g^{-1}(x+i2^n/k)) = 2n|S| - |S|^2$, but $f^P_{i+1}(g^{-1}(x+i2^n/k)) = 2n|S| - |S|^2 -\mattnote{\frac{1}{2}}$. It is also clear that in order to have $f^P_i(S) - f^P_{i+1}(S) > 0$, we must have $V(P,(g(S) \mod 2^n/k)) = ``yes''$. In other words, if $P$ is a ``yes'' instance of $\mathcal{P}$, any $\alpha$-approximation to the SADP instance $f_1,\ldots,f_k$ must find a witness for $P$ with probability at least $\alpha$. 
\end{prevproof}

\begin{corollary}\label{cor:NPhard}
Unless $NP = RP$, for any constants $c,d$ no (possibly randomized) algorithm can guarantee a $\frac{1}{n^c}$-approximation to SADP($2^{[n]}$,monotone submodular functions) in time $(kn)^d$, as long as $k \leq n^{c'}$ for some constant $c'$, {even if} the functions are given explicitly.
\end{corollary}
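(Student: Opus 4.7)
\textbf{Proof Plan for Corollary~\ref{cor:NPhard}.}

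The plan is to derandomize-into-RP via the explicit construction $f^P_i$ already defined above Lemma~\ref{lem:witness}. First I would verify the two things that the construction needs but the excerpt does not yet spell out: (a) each $f^P_i$ is monotone submodular, and (b) each $f^P_i$ can be evaluated by an explicit Turing machine whose description is polynomial in $n$ and $|P|$. For (a), the argument is nearly identical to Lemma~\ref{lem:submodular}: the base function $2n|S|-|S|^2$ satisfies $f(X \cup \{x\}) - f(X) = 2n - 2|X| - 1$, and the perturbation on $\{S_1,\ldots,S_i\}$ is bounded by $1$ in absolute value, so the diminishing-returns inequality still holds with the same $|X|<|Y|$ gap of size $2$, and monotonicity follows because the perturbation never exceeds the $2n-2|S|$ marginal. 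For (b), since $g$ is a rank function that is computable in polynomial time (order by size, then lexicographically), so is $g^{-1}$ modulo $2^n/k$; and $V$ runs in polynomial time by assumption. Hence $f^P_i(S)$ is computable in $\mathrm{poly}(n,|P|)$ time, and the Turing machine description of $f^P_i$ has polynomial size.

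Next, suppose for contradiction that there exists a randomized algorithm $A$ that, for some constants $c, d, c'$, guarantees a $1/n^c$-approximation to SADP($2^{[n]}$, monotone submodular) in time $(kn)^d$ whenever $k \le n^{c'}$. Given any instance $P$ of the NP-hard problem $\mathcal{P}$ of size $m$, set $k := 2^{\lceil \log m^p \rceil}$ (which is at most $n^{c'}$ for $n$ chosen as a suitable polynomial in $m$), and $n := (\log k) + m^p$, so that the witness length for $P$ equals $m^p = n - \log k$. Construct the instance $(f^P_1,\ldots,f^P_k)$ in polynomial time and feed it to $A$. By Lemma~\ref{lem:witness}, if $P$ is a ``yes'' instance, then the output $S$ of $A$ encodes a witness for $P$ (namely $g(S) \bmod 2^n/k$) with probability at least $1/n^c$. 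We then run the NP-verifier $V$ on the candidate witness; if it accepts, output ``yes'', otherwise ``no''. Note that if $P$ is a ``no'' instance, no $S$ can ever cause $V$ to accept, so we never output a false positive.

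Finally I would amplify: repeating the above $O(n^{c+1})$ times boosts the success probability on ``yes'' instances to at least $1 - e^{-n}$, while preserving zero false positives; total running time remains polynomial in $m$. This places $\mathcal{P} \in \mathrm{RP}$, and since $\mathcal{P}$ was NP-hard, we conclude $\mathrm{NP} \subseteq \mathrm{RP}$, a contradiction unless $\mathrm{NP} = \mathrm{RP}$. The main obstacle in executing this plan is really just bookkeeping---choosing $n$ and $k$ compatibly so that (i) the witness length matches $n - \log k$, (ii) $k \le n^{c'}$ holds, and (iii) the time $(kn)^d$ remains polynomial in the original instance size $m$---and verifying carefully that the modular arithmetic in the definition of $f^P_i$ produces exactly the witness for $P$ when $P$ is a ``yes'' instance, which is already handled by Lemma~\ref{lem:witness}.
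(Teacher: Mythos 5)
Your proposal is correct and follows essentially the same route as the paper: the paper's proof of this corollary likewise assumes the approximation algorithm exists, invokes Lemma~\ref{lem:witness} to extract a witness with probability at least $1/n^c$ on ``yes'' instances, amplifies by independent trials, and concludes an RP algorithm for the NP-hard problem $\mathcal{P}$. The extra details you supply (monotone submodularity and efficient computability of the $f^P_i$, the one-sided-error check, and the choice of $n,k$) are left implicit in the paper but are verified correctly in your write-up.
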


\begin{prevproof}{Corollary}{cor:NPhard}
If we could obtain a $\frac{1}{n^c}$-approximation to SADP($2^{[n]}$,monotone submodular functions) in time $(kn)^d \leq n^{c'd+d}$, then for any ``yes'' instance $P \in \mathcal{P}$, we could find a witness with probability at least $\frac{1}{n^c}$ in time $n^{c'd+d}$ by Lemma~\ref{lem:witness}. By running $n^c$ independent trials, we could amplify this probability to $1/e$ in time $n^{c'd+d+c}$. So we would have an $RP$ algorithm for $\mathcal{P}$. 
\end{prevproof}

We combine Corollaries~\ref{cor:SADPhard} and~\ref{cor:NPhard} below into one theorem.

\begin{theorem}\label{thm:SADPhard}
As long as $k \leq n^{c'}$ for some constant $c'$, the problem SADP($2^{[n]}$,monotone submodular functions) is:
\begin{enumerate}
\item Impossible to approximate within any $1/\poly(n)$-factor with only $\poly(k,n)$ value oracle queries.
\item Impossible to approximate within any $1/\poly(n)$-factor with only $\poly(k,n)$ demand oracle queries.
\item Impossible to approximate within any $1/\poly(n)$-factor given explicit access to the input functions in time $\poly(k,n)$, unless $NP = RP$.
\end{enumerate}
\end{theorem}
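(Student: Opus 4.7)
The plan is to prove SADP hardness directly by constructing submodular instances where an approximation oracle must identify ``hidden'' sets, then transfer this hardness to MDMDP via the reduction of Theorem~\ref{thm:reduction}. First, I would use the family of functions already suggested in Section~\ref{sec:NPhardSubmodular}: pick hidden subsets $S_1,\ldots,S_k$ of $[n]$ with $|S_1|\le\cdots\le|S_k|$ and define $f_i$ to agree with the base submodular potential $f_0(S)=2n|S|-|S|^2$ everywhere except at $S_1,\ldots,S_i$, where $f_i$ is perturbed downward by a tiny additive amount of the form $1-\tfrac{j}{2i}$. The first step is to verify that each $f_i$ is monotone submodular: monotonicity follows because on any chain the ``large'' term $2n|S|-|S|^2$ is strictly increasing by at least $2n-2|S|-1$ when adding an element, which dominates the $O(1)$ perturbation; submodularity follows by comparing the diminishing returns of $f_0$ (namely $2n-2|S|-1$) against the worst-case perturbation effect, exactly as in Lemma~\ref{lem:submodular}.

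Second, I would show that the difference $f_i(\cdot)-f_{i+1}(\cdot)$ is uniquely maximized at $S_{i+1}$, and attains value $\tfrac{1}{2}$ there while being strictly less than $\tfrac{1}{2}$ on every other $S_j$ and zero elsewhere (mirroring Proposition~\ref{prop:submoDifference}). Hence any $\alpha$-approximation to SADP with input $(f_1,\ldots,f_k)$ must, for some $j^*\in[k-1]$, output a set $X$ with $f_{j^*}(X)-f_{j^*+1}(X)\ge\tfrac{\alpha}{2}$; this forces $X$ to lie in the hidden collection $\{S_1,\ldots,S_k\}$ with probability at least $\alpha$. This reduces SADP approximation to finding a hidden set.

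Third, I handle the three oracle models separately. For value and demand oracles, I would choose $S_1,\ldots,S_k$ uniformly at random (without replacement) from subsets of $[n]$ and invoke Yao's principle: any deterministic algorithm making $\poly(n,k)$ value queries hits a hidden set with probability at most $k\cdot\poly(n,k)/(2^n-\poly(n,k))=2^{-\Omega(n)}$, and for demand queries the same bound holds by noting that a demand query against $f_i$ returns the same answer as against $f_0$ unless the optimizer happens to fall in $\{S_1,\ldots,S_k\}$, so each query either reveals nothing new or ``succeeds'' with negligible probability. For the explicit-description model I would, for a given $\mathsf{NP}$-hard language $\mathcal{L}$ with verifier $V$, let $g$ be an efficiently computable bijection between $2^{[n]}$ and $[2^n]$ that is monotone in $|S|$; for an instance $P$ of size $(n-\log k)^{1/p}$, define $f_i^P$ to perturb exactly those $S$ for which $g(S)\bmod 2^n/k$ is a valid witness for $P$. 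The $f_i^P$ remain monotone submodular by the same argument, can be evaluated in polynomial time per set using $V$, and an $\alpha$-approximation to SADP on $(f_1^P,\ldots,f_k^P)$ yields a witness for $P$ with probability $\ge\alpha$; amplification over $n^c$ independent runs yields an $\mathsf{RP}$ algorithm for $\mathcal{L}$, whence $\mathsf{NP}\subseteq\mathsf{RP}$.

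Finally I would transfer hardness from SADP to MDMDP by applying Theorem~\ref{thm:reduction}: a hypothetical $\alpha$-approximation algorithm for MDMDP$(2^{[n]},\text{monotone submodular})$ would yield, through that reduction, an approximate SADP solver for $C$-compatible instances. The main obstacle here is that Theorem~\ref{thm:reduction} requires the constructed SADP instance to be $C$-compatible and $D$-balanced; I would argue these hold for our family by exhibiting multipliers $Q_1<\cdots<Q_k$ of polynomial bit complexity (e.g.\ $Q_i=1+(i-1)\delta$ for an appropriately chosen $\delta$ depending on the gap between the perturbation $\tfrac{1}{2i}$ and the base function) under which the matching sending $Q_i f_i$ to $S_i^*=S_{i+1}$ is the unique welfare-maximizer of the relevant sub-matchings, and by checking that $f_k(S_k^*)\le D\cdot\tfrac{1}{2}$ for $D=\poly(n)$ (since $f_k$ is bounded by $n^2$). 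With these properties in hand, an $\alpha$-approximation to MDMDP produces an $\bigl(\alpha-\tfrac{(1-\alpha)D}{k-1}\bigr)$-approximation to SADP, and by choosing $k=n^{c'}$ sufficiently larger than $c$ we preserve inapproximability within any $1/\poly(n)$ factor across all three oracle regimes. The bit-complexity bookkeeping in part~2 of Theorem~\ref{thm:reduction} ensures the MDMDP instance remains of polynomial size, completing the proof of all three items of Theorem~\ref{thm:SADPandMDMDPhard}.
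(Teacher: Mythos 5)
Your proof of the SADP hardness statement (Theorem~\ref{thm:SADPhard}) is correct and follows essentially the same route as the paper: the same perturbed potential $f_i(S)=2n|S|-|S|^2$ with hidden sets, the same observation that $S_{i+1}$ maximizes $f_i-f_{i+1}$ with gap $\tfrac12$ so any $\alpha$-approximation locates a hidden set with probability $\ge\alpha$, the same Yao-style counting for value/demand oracles, and the same witness-encoding construction for the $NP\subseteq RP$ part. The final paragraph transferring hardness to MDMDP is extraneous to this particular statement; note only that your suggested linear multipliers $Q_i=1+(i-1)\delta$ would not yield compatibility there, since the paper's argument needs geometrically growing multipliers (it takes $Q_i=(2kn)^{2i-2}$) so that a single difference in $Q_{i+1}f_{i+1}$ dominates the entire welfare contribution of all lower-indexed types.
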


Finally, we conclude this section by showing that the classes which we have just shown to be hard for SADP can be used via the reduction of Theorem~\ref{thm:reduction}.

\begin{lemma}\label{lem:compatible}
All SADP instances defined in this section are {$2k(\mattnote{1 + }\log k + \log n)$}-compatible and $\mattnote{2}n^2$-balanced. 
\end{lemma}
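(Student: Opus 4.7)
My plan is to handle the two properties separately, beginning with the simpler balance claim. Using the convention $f_{k+1}\equiv 0$, Proposition~\ref{prop:submoDifference} already gives $X^*_\ell = S_{\ell+1}$ for $\ell\le k-1$, and a short case-check shows $X^*_k$ may be taken to be $[n]$, with $f_k(X^*_k)\le n^2$. Plugging into the explicit definition of $f_\ell$ yields $f_\ell(X^*_\ell)-f_{\ell+1}(X^*_\ell) = 0 - (-\tfrac12) = \tfrac12$ for every $\ell\le k-1$, so the ratio $f_k(X^*_k)/(f_\ell(X^*_\ell)-f_{\ell+1}(X^*_\ell))$ is at most $n^2/(\tfrac12) = 2n^2$, which is the $2n^2$-balance claim.

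For compatibility I would take multipliers $Q_\ell = L^{\ell-1}$ with $L=k^2$. Each $Q_\ell$ then has bit complexity at most $2(k-1)\log_2 k \le 2k(1+\log k + \log n)$, well within the stated bound. To analyze matchings I would split each $f_\ell$ as $f_\ell(S) = g(S) + \delta_\ell(S)$, where $g(S) = 2n|S|-|S|^2$ is the common baseline and $\delta_\ell$ is a non-positive deviation supported only on $\{S_1,\ldots,S_\ell\}$ with value $-1 + j/(2\ell)$ on $S_j$. Two structural facts drive everything: $g(X^*_\ell)$ is non-decreasing in $\ell$ (since the sizes $|S_j|$ are and $g$ is non-decreasing on $[0,n]$), and $\delta_\ell(X^*_m) = 0$ whenever $m \ge \ell$, because then $X^*_m \notin \{S_1,\ldots,S_\ell\}$.

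For any alternative permutation $\pi$, the welfare gap between the identity matching and $\pi$ decomposes naturally as $B_1+B_2$, where $B_1 = \sum_\ell (Q_{\ell+s} - Q_{\pi(\ell)+s}) g(X^*_\ell)$ collects the baseline contributions (with shift $s=0$ for cyclic monotonicity and $s=1$ for compatibility) and $B_2$ collects the deviation terms. The rearrangement inequality applied to the non-decreasing sequences $(Q_{\ell+s})$ and $(g(X^*_\ell))$ gives $B_1\ge 0$ in both cases. For cyclic monotonicity, identity achieves $\delta_\ell(X^*_\ell)=0$ on every diagonal entry, while every alternative $\pi$ picks up only non-positive $\delta$-values, so $B_2\ge 0$ automatically and this part needs no lower bound on $L$ (the degenerate case $S_k=[n]$ merely forces $X^*_{k-1}=X^*_k$ and is handled by the same argument).

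The main obstacle is compatibility, where identity incurs a $-\tfrac12$ penalty on every diagonal entry while an alternative $\pi$ can achieve $\delta=0$ on positions with $\pi(\ell)<\ell$. The plan is to decompose $\pi$ into disjoint cycles and handle each cycle separately. Within a cycle of length $r$ with maximum element $d^*\le k-1$, the step entering $d^*$ contributes a positive $B_2$-term of size at least $Q_{d^*+1}/(d^*+1) \ge Q_{d^*+1}/k$, while each of the at most $r-1 \le k-2$ negative contributions has magnitude at most $Q_{d^*} = Q_{d^*+1}/L$. With $L=k^2$ the single positive term dominates the cumulative negatives in every cycle, giving $B_2\ge 0$. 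Combined with $B_1\ge 0$ this yields the welfare inequality for every $\pi$ and completes the proof of compatibility.
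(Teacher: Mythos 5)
Your proof is correct, but it takes a genuinely different route from the paper's. The paper chooses much larger multipliers, $Q_i=(2kn)^{2i-2}$, so that the minimum nonzero difference of $Q_{i+1}f_{i+1}$ across outcomes exceeds the \emph{total} welfare that all lower-indexed types can contribute; every max-weight matching is then forced greedily from the highest-indexed type downward, and monotonicity of $f_i(S_j)$ in $j$ pins the greedy matching to the identity (for cyclic monotonicity) or the shifted identity (for the compatibility windows). You instead split $f_\ell = g+\delta_\ell$ into the common baseline and the non-positive perturbation, dispose of the baseline once and for all by the rearrangement inequality, and use the scale separation only on the $\pm\tfrac12$-size perturbations via a cycle decomposition; this lets you get away with $Q_\ell=k^{2(\ell-1)}$, whose bit complexity is $O(k\log k)$, independent of $n$ (both choices satisfy the stated $2k(1+\log k+\log n)$ bound, so nothing is gained for the lemma itself, but your argument isolates exactly where the geometric growth is needed, whereas the paper's lexicographic argument is shorter). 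Two small points to tighten: the positive term entering the maximum element of a cycle is $Q_{d^*+1}\cdot\frac{(d^*+1)-\sigma(d^*+1)}{2(d^*+1)}\ge Q_{d^*+1}/(2k)$, i.e.\ half of what you wrote, and a cycle can contain up to $k-1$ non-maximal elements in the full-list case; with $L=k^2$ the inequality $L\ge k(k-1)$ still closes the argument, so the conclusion is unaffected. Finally, the degenerate case $S_k=[n]$ is not literally "the same argument" for cyclic monotonicity: there the identity diagonal pays $-Q_k/2$ at type $k$, and you need the one-line observation that every allocation in the list lies in $\{S_1,\ldots,S_k\}$ so any competing matching pays at least $Q_k/2$ on the type-$k$ edge as well; the paper glosses over this tie/degeneracy issue too, and like the paper you state the argument for the instances built from explicit sets $S_1,\ldots,S_k$, with the NP-hardness family $f^P_i$ handled identically after fixing a witness $x$ and taking $X^*_\ell=g^{-1}(x+\ell 2^n/k)$.
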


\begin{prevproof}{Lemma}{lem:compatible}
That each instance is $\mattnote{2}n^2$-balanced is easy to see: $f_k([n]) = n^2$, and {for all $i$ we have} $\max_S \{f_i(S) - f_{i+1}(S)\} = \mattnote{\frac{1}{2}}$. To see that each instance is $C$-compatible, set $Q_i = (\mattnote{2}kn)^{2i-2}$. Then it is easy to see that the maximum welfare obtainable by any matching that only uses types $Q_1f_1$ through $Q_if_i$ obtains welfare at most $n^2iQ_i$. In addition, the difference in value of $Q_{i+1}f_{i+1}$ between two outcomes is at least {$Q_{i+1}/(\mattnote{2}i+\mattnote{2}) \geq Q_{i+1}/(\mattnote{2}k)$} if it is non-zero. Therefore, the minimum non-zero difference between the value of $Q_{i+1}f_{i+1}$ for two allocations is larger than the maximum possible difference in welfare of all types $Q_1f_1$ through $Q_if_i$ for two matchings of allocations. As this holds for all $i$, this means that the max-weight matching of any set of allocations $S$ to types $\{Q_if_i,\ldots,Q_j f_{j}\}$ will necessarily match $Q_j f_{j}$ to its favorite allocation in $S$, then $Q_{j-1}f_{j-1}$ to its favorite of what remains, etc.

So let $(S_1,\ldots,S_k)$ be any sets such that $f_i(S_{i+1}) - f_i(S_i) = \mattnote{\frac{1}{2}}$. It is not hard to verify that $|S_{1}|\leq|S_{2}|\leq \ldots \leq |S_{k}|$, and $f_{i}(S_{j})$ is monotonically increasing in $j$ for any $i$. The reasoning above immediately yields that the max-weight matching of allocations $(S_i,\ldots,S_j)$ to types $(Q_if_i,\ldots,Q_j f_j)$ necessarily awards $S_\ell$ to $Q_\ell f_\ell$ for all $\ell$, as $f_{i}$, so $(S_1,\ldots,S_k)$ is cyclic monotone with respect to $(Q_1S_1,\ldots,Q_kS_k)$. Furthermore, the same reasoning immediately yields that the max-weight matching of allocations $(S_i,\ldots,S_{j-1})$ to types $(Q_{i+1}f_{i+1},\ldots,Q_jf_j)$ awards $S_{\ell}$ to type $Q_{\ell+1}f_{\ell+1}$ for all $\ell$, so $(S_1,\ldots,S_k)$ is also compatible with $(Q_1f_1,\ldots,Q_kf_k)$. It is clear that all $Q_i$ {are integers less than} $(\mattnote{4}k^2n^{2})^{k-1} \leq 2^{2k(\mattnote{1+}\log k + \log n)}$, so $(f_1,\ldots,f_k)$ are $2k(\mattnote{1+}\log k+\log n)$-compatible.
\end{prevproof}

\begin{corollary}\label{cor:MDMDPhard}
The problem MDMDP($2^{[n]}$,monotone submodular functions) (for $k = |T_1| = \poly(n)$) is:
\begin{enumerate}
\item Impossible to approximate within any $1/\poly(n)$-factor with only $\poly(k,n)$ value oracle queries.
\item Impossible to approximate within any $1/\poly(n)$-factor with only $\poly(k,n)$ demand oracle queries.
\item Impossible to approximate within any $1/\poly(n)$-factor given explicit access to the input functions in time $\poly(k,n)$, unless $NP = RP$.
\end{enumerate}
\end{corollary}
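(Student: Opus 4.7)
The plan is to derive all three hardness statements by contrapositive: an efficient $1/\poly(n)$-approximation algorithm for MDMDP on monotone submodular bidders would, via Theorem~\ref{thm:reduction}, yield an efficient $1/\poly(n)$-approximation algorithm for SADP on the family of instances $(f_1,\ldots,f_k)$ constructed in Section~\ref{sec:NPhardSubmodular}, contradicting Theorem~\ref{thm:SADPhard}. A minor sanity check is that $\mathcal{V}^{*}=\mathcal{V}$ here, since positive linear combinations of monotone submodular functions are again monotone submodular, so the MDMDP instance output by the reduction is of the claimed form.

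The key technical ingredient is verifying that the hard SADP instances of Theorem~\ref{thm:SADPhard} actually fall into the regime where Theorem~\ref{thm:reduction} preserves approximation usefully. This is exactly what Lemma~\ref{lem:compatible} supplies: the constructed $(f_1,\ldots,f_k)$ are $C$-compatible with $C = 2k(1+\log k + \log n) = \poly(n)$ and $D$-balanced with $D = 2n^2$. Consequently, the multipliers $Q_i$ and the associated probabilities produced by the reduction have bit complexity $\poly(n)$, and the resulting single-bidder MDMDP instance has a type set of size $k$ and bit complexity $b+C = \poly(n)$. So an MDMDP oracle answering in time $\poly(k,n)$ translates into an SADP-solver running in time $\poly(k,n)$, preserving the oracle access model (value, demand, or explicit) throughout, because the reduction only forms nonnegative linear combinations of the input functions.

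Next, track the approximation loss. By Theorem~\ref{thm:reduction}(1), an $\alpha$-approximation for MDMDP yields an $\bigl(\alpha - \frac{(1-\alpha)D}{k-1}\bigr)$-approximation for SADP on any $C$-compatible, $D$-balanced instance. Suppose toward contradiction that MDMDP admits a $1/n^{c}$-approximation in time $\poly(k,n)$. Instantiating Lemma~\ref{lem:compatible} with $k = n^{c+4}$ (still $\poly(n)$), we get
\[
\alpha - \frac{(1-\alpha)D}{k-1} \;\geq\; \frac{1}{n^{c}} - \frac{2n^{2}}{n^{c+4}-1} \;\geq\; \frac{1}{2n^{c}}
\]
for all sufficiently large $n$. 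Hence we would obtain a $1/(2n^{c})$-approximation for SADP$(2^{[n]},\text{monotone submodular})$ running in $\poly(k,n)=\poly(n)$ time and using only $\poly(n)$ oracle queries, directly contradicting the corresponding clause of Theorem~\ref{thm:SADPhard} (clause 1 for value oracles, clause 2 for demand oracles, clause 3 for explicit access under $NP\neq RP$).

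The main obstacle I anticipate is the non-approximation-preserving nature of the reduction: the additive $\frac{(1-\alpha)D}{k-1}$ slack in Theorem~\ref{thm:reduction}(1) is not harmless when $\alpha$ is already tiny. The trick above is simply to use the freedom Theorem~\ref{thm:SADPhard} grants us in choosing $k$ polynomially large: making $k$ grow faster than $D/\alpha$ absorbs the loss. I would also double-check, in the write-up, that the hard instances of Section~\ref{sec:NPhardSubmodular} are presented as valid MDMDP inputs after the reduction (in particular, that the induced probabilities $1/Q_i - 1/Q_{i+1}$ are positive, sum to at most one—topped off by a dummy type of value $0$ if needed—and have $\poly(n)$ bit complexity); each of these follows from the explicit choice $Q_i = (2kn)^{2i-2}$ in the proof of Lemma~\ref{lem:compatible}. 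With these checks in place, the three bullets of the corollary are immediate.
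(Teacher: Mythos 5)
Your proposal is correct and follows essentially the same route as the paper's own proof: the paper likewise instantiates the hard instances of Section~\ref{sec:NPhardSubmodular} with a polynomially large $k$ (it takes $k=n^{c+2}+1$), invokes Lemma~\ref{lem:compatible} for $C$-compatibility and $D$-balancedness, and applies Theorem~\ref{thm:reduction} to turn a hypothetical $\poly(k,n)$-time $1/\poly(n)$-approximation for MDMDP into a $1/\poly(n)$-approximation for SADP, contradicting Theorem~\ref{thm:SADPhard}. Your version merely spells out the bookkeeping (closure of monotone submodular functions, bit complexity of the $Q_i$'s, positivity of the induced probabilities, preservation of the oracle model) a bit more explicitly than the paper does.
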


\begin{prevproof}{Corollary}{cor:MDMDPhard}
For any constant $c$, let $k=n^{c+2}+1$, if we can find an $2/n^{c}$-approximate solution to MDMDP in polynomial time, we can find an $1/n^{c}$-approximate solution to SADP by Theorem~\ref{thm:reduction}, which would contradict Theorem~\ref{thm:SADPhard}.
\end{prevproof}

\notshow{\begin{proof}
An algorithm attempting to guess some $S \in \mathcal{S}$ starts with the belief that each $S \subset [n]$ is in $\mathcal{S}$ with probability $a/2^n$. Each query allows the algorithm to update this belief. At the end of $c$ queries, the best thing the algorithm can do is guess the set with the highest probability of being in $\mathcal{S}$. Our goal is to show that with very high probability, this is very low. 

Let's consider how the beliefs get updated from query to query for value queries. Assume that we currently believe that $\mathcal{S}$ is equally likely to be any subset of size $a$ of some $X \subseteq 2^{[n]}$, and then we make a value query for the set $S$. If $f_i(S) = 2n|S| - |S|^2$, then all we have learned is that $S \notin \{S_{1},\ldots, S_{i} \}$, and our belief is updated to be that $\mathcal{S}$ is equally likely to be any subset of size $a$ of $X - \{S\}$. 

It is easy to see now that if we only query elements not in $\mathcal{S}$, then our belief at the end of $c$ queries will be that $\mathcal{S}$ is equally likely to be any subset of size $a$ of some $X$ with $|X| = 2^n - c$. It is also easy to see that in order for us to query an element in $\mathcal{S}$, we must have at some point guessed an $S \in \mathcal{S}$ while believing that $\mathcal{S}$ was equally likely to be any subset of size $c$ of some $X$ with $|X| \geq 2^n - c$. The probability that this happens on our $i^{th}$ query is $\frac{a}{2^n-i}$. By the union bound, the probability that this happens on any of the first $c$ queries is at most $\frac{ac}{2^n-c}$. So with probability at least $1- \frac{ac}{2^n-c}$, our queries are such that our best guess at the end succeeds with probability at most $\frac{a}{2^n-c}$. Even if we succeed with probability $1$ the rest of the time, this means that the maximum possible probability of success is $\frac{a(c+1)}{2^n-c}$.

A similar argument shows that the same holds for demand queries as well. First, let's understand how demand queries for $f_0$ ($f_0(S) = 2n|S| - |S|^2$) and $f_a$ differ. Consider any prices $p_1,\ldots,p_n$, and let $S = \argmax\{f_0(S) - \sum_{i \in S} p_i\}$. If $S \notin \mathcal{S}$, then because $f_a(S) = f_0(S)$ and $f_a(S') \leq f_0(S')$ for all $S'$, we also have $S = \argmax\{f_a(S) - \sum_{i \in S} p_i\}$. If other words, if the prices we query are such that a demand oracle for $f_0$ would return some $S \notin \mathcal{S}$, we have learned nothing about $\mathcal{S}$ other than that it does not contain $S$. From here we can make the exact same argument in the previous paragraph replacing the phrase ``query some $S \in \mathcal{S}$'' with ``query some prices $p_1,\ldots,p_n$ such that a demand oracle for $f_0$ would return some $S \in \mathcal{S}$.''
\end{proof}}
\section{General Objectives (Complete)}\label{app:generalappendix}
Here we give a complete description and proof of our reduction for general objectives. We follow the same outline as that of Section~\ref{sec:revenue}: In Section~\ref{sec:LPgeneral}, we write a poly-size linear program that finds the optimal implicit form provided that we have a separation oracle for $F(\mathcal{F},\mathcal{D},\obj)$. In Section~\ref{sec:MDMDPtoSADPgeneral} we show that any poly-time $\alpha$-approximation algorithm for SADP($\mathcal{F}$,$\mathcal{V}$,$\obj$) implies a poly-time weird separation oracle for $\alpha F(\mathcal{F},\mathcal{D},\obj)$, and therefore a poly-time $\alpha$-approximation algorithm for MDMDP($\mathcal{F}$, $\mathcal{V}$, $\obj$).

\subsection{Linear Programming Formulation}\label{sec:LPgeneral}
We now show how to write a poly-size linear program to find the implicit form of a mechanism that solves the MDMDP. Similar to revenue, the idea is that we will search over all feasible, BIC, IR implicit forms for the one that maximizes $\obj$ in expectation. We first show that $F(\mathcal{F},\mathcal{D},\obj)$ is always a convex region, then state the linear program and prove that it solves MDMDP.

\begin{lemma}\label{lem:convexregion} $F(\mathcal{F},\mathcal{D},\obj)$ is a convex region.
\end{lemma}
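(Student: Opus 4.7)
The plan is to show convexity directly: given two feasible implicit forms, exhibit a mechanism witnessing that their convex combination is also feasible. The linearity of the allocation and price components follows by linearity of expectation; the only nontrivial part is handling the objective component, which is where the concavity hypothesis on $\obj$ enters.

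More concretely, take two feasible implicit forms $\vec{\pi}^{I,1}=(\pi_\obj^{(1)},\vec{\pi}^{(1)},\vec{P}^{(1)})$ and $\vec{\pi}^{I,2}=(\pi_\obj^{(2)},\vec{\pi}^{(2)},\vec{P}^{(2)})$ with witnessing mechanisms $M_1=(A_1,P_1)$ and $M_2=(A_2,P_2)$, and fix $c \in [0,1]$. I would construct the mixture mechanism $M_c$ that, independently of the reported profile $\vec{t}$, flips a coin and runs $M_1$ with probability $c$ and $M_2$ with probability $1-c$. By linearity of expectation, $\vec{\pi}(M_c)=c\vec{\pi}(M_1)+(1-c)\vec{\pi}(M_2)=c\vec{\pi}^{(1)}+(1-c)\vec{\pi}^{(2)}$ and similarly $\vec{P}(M_c)=c\vec{P}^{(1)}+(1-c)\vec{P}^{(2)}$. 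Moreover, $M_c$ chooses an outcome in $\mathcal{F}$ with probability $1$ because $M_1$ and $M_2$ do.

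The key step is then bounding $\pi_\obj(M_c)$ from below. On each profile $\vec{t}$, the random outcome and price produced by $M_c$ are precisely the mixtures $cA_1(\vec{t})+(1-c)A_2(\vec{t})$ (as distributions over $\mathcal{F}$) and $cP_1(\vec{t})+(1-c)P_2(\vec{t})$. Since $\obj(\vec{t},\cdot,\cdot)$ is concave in its last two arguments, we get pointwise
\[
\obj\bigl(\vec{t},\,cA_1(\vec{t})+(1-c)A_2(\vec{t}),\,cP_1(\vec{t})+(1-c)P_2(\vec{t})\bigr) \geq c\,\obj(\vec{t},A_1(\vec{t}),P_1(\vec{t})) + (1-c)\,\obj(\vec{t},A_2(\vec{t}),P_2(\vec{t})).
\]
Taking expectations over $\vec{t}\leftarrow \mathcal{D}$ yields $\pi_\obj(M_c)\geq c\,\pi_\obj(M_1) + (1-c)\,\pi_\obj(M_2) \geq c\,\pi_\obj^{(1)} + (1-c)\,\pi_\obj^{(2)}$, where the second inequality uses the defining property of the witnessing mechanisms.

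Finally, the convex combination $c\pi_\obj^{(1)}+(1-c)\pi_\obj^{(2)}$ is non-negative because each $\pi_\obj^{(i)}\geq 0$. Therefore $M_c$ witnesses that $c\vec{\pi}^{I,1}+(1-c)\vec{\pi}^{I,2}$ is feasible, so $F(\mathcal{F},\mathcal{D},\obj)$ is convex. The main (and only) subtle point is recognizing that mixing mechanisms produces mixture distributions over outcomes and prices, so that the concavity assumption on $\obj$ applies coordinatewise to each profile before taking the expectation over $\mathcal{D}$; for allocation-only or price-only objectives the same argument specializes with the obvious components dropped.
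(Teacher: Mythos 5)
Your proof is correct and follows essentially the same route as the paper's: mix the two witnessing mechanisms with probabilities $c$ and $1-c$, obtain the allocation and price components by linearity, and use concavity of $\obj$ to lower-bound the objective component. Your write-up is just slightly more explicit (applying concavity per profile before averaging over $\mathcal{D}$, and checking the non-negativity requirement $\pi_\obj \geq 0$), but it is the same argument.
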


\begin{prevproof}{Lemma}{lem:convexregion}
Consider any two feasible implicit forms $\vec{\pi}^I_1$ and $\vec{\pi}^I_2$. Then there exist corresponding mechanisms $M_1 = (A_1,P_1)$ and $M_2 = (A_2,P_2)$ with $\vec{\pi}(M_i) = \vec{\pi}_i$, $\vec{P}(M_i) = \vec{P}_i$ and $\pi_\obj(M_i) \geq (\pi_\obj)_i$. So consider the mechanism that runs $M_1$ with probabilty $c$ and $M_2$ with probability $1-c$. This mechanism clearly has allocation and price components $\vec{\pi}(cM_1 + (1-c)M_2) = c\vec{\pi}_1 + (1-c)\vec{\pi}_2$ and $\vec{P}(cM_1 + (1-c)M_2) = c\vec{P}_1 + (1-c)\vec{P}_2$. Also, because $\obj$ is concave, we must have $\pi_\obj(cM_1 + (1-c)M_2) \geq c\pi_{\obj}(M_{1})+(1-c)\pi_{\obj}(M_{2})\geq c(\pi_\obj)_1 + (1-c)(\pi_\obj)_2$. Therefore, the mechanism $cM_1 + (1-c)M_2$ bears witness that the implicit form $c\vec{\pi}^I_1 + (1-c)\vec{\pi}^I_2$ is also feasible, and $F(\mathcal{F},\mathcal{D},\obj)$ is necessarily convex.
\end{prevproof}

\begin{figure}[ht]
\colorbox{MyGray}{
\begin{minipage}{\textwidth} {
\noindent\textbf{Variables:}
\begin{itemize}
\item $\pi_i(t_i,t'_i)$, for all bidders $i$ and types $t_i,t'_i \in T_i$, denoting the expected value obtained by bidder $i$ when their true type is $t_i$ but they report $t'_i$ instead.
\item $P_i(t_i)$, for all bidders $i$ and types $t_i \in T_i$, denoting the expected price paid by bidder $i$ when they report type $t_i$.
\item $\pi_\obj$, denoting the expected value of $\obj$.
\end{itemize}
\textbf{Constraints:}
\begin{itemize}
\item $\pi_i(t_i,t_i) - P_i(t_i) \geq \pi_i(t_i,t'_i) - P_i(t'_i)$, for all bidders $i$, and types $t_i,t'_i \in T_i$, guaranteeing that the implicit form {$(\pi_\obj,\vec{\pi},\vec{P})$} is BIC.
\item $\pi_i(t_i,t_i) - P_i(t_i) \geq 0$, for all bidders $i$, and types $t_i \in T_i$, guaranteeing that the implicit form {$(\pi_\obj,\vec{\pi},\vec{P})$} is individually rational.
\item \begin{itemize}
		\item General $\obj$: $(\pi_\obj,\vec{\pi},\vec{P}) \in F(\mathcal{F},\mathcal{D},\obj)$, guaranteeing that the 		implicit form $(\pi_\obj,\vec{\pi},\vec{P})$ is feasible.
		\item Allocation-only $\obj$: $(\pi_\obj,\vec{\pi})\in F(\mathcal{F},\mathcal{D},\obj)$, guaranteeing that the 		implicit form $(\pi_\obj,\vec{\pi},\vec{P})$ is feasible.
		\item Price-only $\obj$: $(\pi_\obj,\vec{P})\in F(\mathcal{F},\mathcal{D},\obj)$ and $\vec{\pi} \in F(\mathcal{F},\mathcal{D})$, guaranteeing that the implicit form $(\pi_\obj,\vec{\pi},\vec{P})$ is feasible.
		\end{itemize}
\end{itemize}
\textbf{Maximizing:}
\begin{itemize}
\item $\pi_\obj$, the expected value of $\obj$ when played truthfully by bidders sampled from $\mathcal{D}$.\\
\end{itemize}}
\end{minipage}}
\caption{A linear programming formulation for the general MDMDP.}
\label{fig:LPMDMDgeneral}
\end{figure}

\begin{observation}\label{cor:LPsolvesMDMDPgeneral}
Any $\alpha$-approximate solution to the linear program of Figure~\ref{fig:LPMDMDgeneral} corresponds to a feasible, BIC, IR implicit form whose expected value of $\obj$ is at least an $\alpha$-fraction of the optimal obtainable expected value of $\obj$ by a feasible, BIC, IR mechanism. 
\end{observation}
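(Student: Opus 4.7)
The plan is to show a two-way correspondence between feasible LP solutions and feasible BIC/IR mechanisms, with the LP's objective $\pi_\obj$ serving as a valid lower bound on a mechanism's true expected objective value. I first handle the general objective case; the allocation-only and price-only cases will follow by the same argument applied to their respective (equivalent) feasibility definitions.

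First I would show that any feasible point $(\pi_\obj,\vec{\pi},\vec{P})$ of the LP can be realized by a feasible, BIC, IR mechanism whose true expected $\obj$ value is at least $\pi_\obj$. By the feasibility constraint $(\pi_\obj,\vec{\pi},\vec{P}) \in F(\mathcal{F},\mathcal{D},\obj)$, there exists a mechanism $M$ that outputs allocations in $\mathcal{F}$ with probability $1$ and satisfies $\vec{\pi}(M)=\vec{\pi}$, $\vec{P}(M)=\vec{P}$, and $\pi_\obj(M)\geq \pi_\obj$. Since the BIC and IR constraints of the LP are expressed purely in terms of $\vec{\pi}$ and $\vec{P}$, and these constraints are known (from Section~\ref{sec:prelim}) to be equivalent to $M$ being BIC and IR, the mechanism $M$ is automatically BIC and IR. Thus the LP value $\pi_\obj$ is at most the true expected value of $\obj$ achieved by a corresponding feasible, BIC, IR mechanism.

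Next I would show the reverse direction: the optimal obtainable expected value, $\textrm{OPT}$, is achieved somewhere in the LP's feasible region. Given any feasible, BIC, IR mechanism $M^*$ attaining $\textrm{OPT}$, consider the point $(\pi_\obj(M^*),\vec{\pi}(M^*),\vec{P}(M^*))$. By definition of the feasible implicit form set, this point lies in $F(\mathcal{F},\mathcal{D},\obj)$ (using $M^*$ itself as the witnessing mechanism, with $\pi_\obj(M^*)\geq \pi_\obj(M^*)$ trivially). The BIC and IR constraints are satisfied because $M^*$ is BIC and IR, so this point is LP-feasible with LP objective exactly $\pi_\obj(M^*)=\textrm{OPT}$. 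Hence the LP's optimal value is at least $\textrm{OPT}$; combining with the previous paragraph, the LP's optimum equals $\textrm{OPT}$.

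Putting the two directions together, any $\alpha$-approximate LP solution has LP value $\pi_\obj \geq \alpha\cdot\textrm{OPT}$, and by the first step it corresponds to a feasible, BIC, IR mechanism whose true expected $\obj$ is at least $\pi_\obj \geq \alpha\cdot\textrm{OPT}$. The allocation-only and price-only cases are handled identically after noting that in those settings the feasibility of the implicit form decomposes as indicated in the LP (for price-only, $\vec{\pi}$ and $(\pi_\obj,\vec{P})$ are independently realizable and can be combined by running the two witnessing mechanisms in parallel via common external randomness, producing one mechanism with the desired joint implicit form). The only mildly delicate point is verifying that the inequality $\pi_\obj(M)\geq \pi_\obj$ built into the definition of $F(\mathcal{F},\mathcal{D},\obj)$ is the correct direction; it is, precisely because the LP \emph{maximizes} $\pi_\obj$, so any slack only helps the corresponding mechanism's true objective value.
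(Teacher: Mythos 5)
Your proposal is correct and follows essentially the same route as the paper, which simply observes (by reference to the revenue case) that the LP's feasible region is exactly the set of feasible, BIC, IR implicit forms and that the LP objective $\pi_\obj$ never overestimates the corresponding mechanism's true expected objective. Your explicit handling of the slack $\pi_\obj(M)\geq\pi_\obj$ in the definition of $F(\mathcal{F},\mathcal{D},\obj)$, and of the price-only decomposition, just fills in details the paper leaves implicit.
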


\begin{proof}
Same as Observation~\ref{obs:LPsolvesMDMDP}.
\end{proof}

\begin{corollary}\label{cor:LPgeneral} The program in Figure~\ref{fig:LPMDMDgeneral} is a linear program with $1 + \sum_{i\in[m]} (|T_{i}|^2 + |T_{i}|)$ variables. \mattnote{In addition, if $b$ is an upper bound on the bit complexity of the  extreme points of $F(\mathcal{F},\mathcal{D},\obj)$, then }with black-box access to a weird separation oracle, $WSO$, for  $\alpha F(\mathcal{F},\mathcal{D},\obj)$, the implicit form of an $\alpha$-approximate solution to MDMDP can be found in time polynomial in $\sum_{i\in[m]} |T_{i}|$, $b$, and the runtime of $WSO$ on inputs with bit complexity polynomial in $\sum_{i\in[m]} |T_{i}|$, $b$.
\end{corollary}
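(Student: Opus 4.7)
\noindent\textbf{Proof proposal for Corollary~\ref{cor:LPgeneral}.}

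The variable count is immediate by inspection of Figure~\ref{fig:LPMDMDgeneral}: there are $|T_i|^2$ variables $\pi_i(t_i,t_i')$ and $|T_i|$ variables $P_i(t_i)$ for each bidder $i$, plus the single scalar $\pi_\obj$, giving exactly $1 + \sum_i(|T_i|^2 + |T_i|)$ variables. The dimension is therefore polynomial in $\sum_i |T_i|$, and the BIC and IR constraints are explicit linear inequalities of the same polynomial count whose bit complexities are trivially bounded. The only nontrivial constraint is membership in the convex region $F(\mathcal{F},\mathcal{D},\obj)$ (or, for allocation-only/price-only objectives, the analogous decompositions), for which we only have the weird separation oracle $WSO$ for $\alpha F(\mathcal{F},\mathcal{D},\obj)$.

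My plan is to run the Ellipsoid algorithm (Theorem~\ref{thm:ellipsoid}) using $WSO$ in place of a genuine separation oracle for the feasibility constraint, combined with explicit separation for the BIC and IR constraints (which one checks in time polynomial in $\sum_i |T_i|$ and in the bit complexity of the current query point). The feasible region is bounded: the scalar $\pi_\obj$ lies in a bounded interval by the normalization conventions in Section~\ref{sec:generalprelim}, and all the $\pi_i(\cdot,\cdot)$ and $P_i(\cdot)$ components are bounded by the same normalization (since we may add vacuous box constraints implied by IR). A bounding ball with polynomial-bit-complexity center and radius is therefore easy to exhibit, so all hypotheses of Theorem~\ref{thm:ellipsoid} are met. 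Theorem~\ref{thm:ellipsoid} then guarantees a running time polynomial in the dimension, in $b$, and in the time of the oracle on inputs of bit complexity polynomial in the dimension and $b$.

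The main obstacle is to argue that using $WSO$ for $\alpha F(\mathcal{F},\mathcal{D},\obj)$ rather than a true separation oracle for $F(\mathcal{F},\mathcal{D},\obj)$ only costs a factor of $\alpha$ in the objective. For this I invoke Corollary~\ref{cor:CaiDW1}. Let $\vec{\pi}^{I,*}$ be an optimal feasible, BIC, IR implicit form, achieving objective value $\opt$. The BIC and IR inequalities of Figure~\ref{fig:LPMDMDgeneral} are homogeneous linear inequalities with zero right-hand side, so they are preserved under positive scaling; hence $\alpha \vec{\pi}^{I,*}$ still satisfies BIC and IR, and since $\alpha \vec{\pi}^{I,*} \in \alpha F(\mathcal{F},\mathcal{D},\obj)$ it would be accepted by a true separation oracle for $\alpha F(\mathcal{F},\mathcal{D},\obj)$. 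Its objective value is $\alpha\cdot\opt$. Thus if we ran Ellipsoid with a true separation oracle for $\alpha F(\mathcal{F},\mathcal{D},\obj)$ and the explicit BIC/IR constraints, we would obtain an objective of at least $\alpha\cdot\opt$. Corollary~\ref{cor:CaiDW1} then certifies that running the exact same Ellipsoid instance with $WSO$ in place of the true separation oracle yields an objective value that is at least as large, i.e.\ at least $\alpha\cdot\opt$.

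Finally, combining this with Observation~\ref{cor:LPsolvesMDMDPgeneral} gives an $\alpha$-approximate solution to MDMDP in the claimed running time. The three bullet variants for general, allocation-only, and price-only objectives are handled uniformly: in each case the nontrivial feasibility constraint reduces (possibly after separating into independent subproblems for the price-only case) to membership in a scaled convex region for which we have a $WSO$, and exactly the same Ellipsoid-plus-$WSO$ argument applies.
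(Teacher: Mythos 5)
Your proposal is correct and follows essentially the same route as the paper's own proof: count the variables, observe that the BIC and IR constraints are linear and invariant under scaling so that the $\alpha$-scaled optimum remains feasible for the LP over $\alpha F(\mathcal{F},\mathcal{D},\obj)$ (giving objective value at least $\alpha\cdot\opt$), and then invoke Corollary~\ref{cor:CaiDW1} together with Theorem~\ref{thm:ellipsoid} to conclude that running the same Ellipsoid execution with $WSO$ in place of a true separation oracle yields at least as good a value, within the claimed running time. No gaps.
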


\begin{proof}
It is clear that the variable count is correct, and that the BIC and IR constraints are linear. By Lemma~\ref{lem:convexregion}, $F(\mathcal{F},\mathcal{D},\obj)$ is a convex region. \mattnote{If in addition $b$ upper bounds the bit complexity of the extreme points of $F(\mathcal{F},\mathcal{D},\obj)$, then $F(\mathcal{F},\mathcal{D},\obj)$ must indeed be a polytope (as there can only be finitely many extreme points).} It is also clear that the objective function is linear. 

That an $\alpha$-approximate solution can be found using $WSO$ is a consequence of Corollary~\ref{cor:CaiDW1}. Denote by $\vec{\pi}_1$ the optimal solution to the LP in Figure~\ref{fig:LPMDMDgeneral}, $\vec{\pi}_2$ the optimal solution after replacing $F(\mathcal{F},\mathcal{D},\obj)$ with $\alpha F(\mathcal{F},\mathcal{D},\obj)$ and $\vec{\pi}_3$ the optimal solution after replacing $F(\mathcal{F},\mathcal{D},\obj)$ with $WSO$ for $\alpha F(\mathcal{F},\mathcal{D},\obj)$. Denote by $c_1,c_2$, and $c_3$ the $\pi_\obj$ component of each. It is easy to see that $\alpha \vec{\pi}_1$ is a feasible solution to the LP using $\alpha F(\mathcal{F},\mathcal{D},\obj)$: $\alpha \vec{\pi}_1$ is in $\alpha F(\mathcal{F},\mathcal{D},\obj)$ by definition, and the BIC and IR constraints are invariant to scaling. Furthermore, it is clear that the objective component of $\alpha \vec{\pi}_1$ is exactly $\alpha c_1$. We may therefore conclude that $c_2 \geq \alpha c_1$. As $WSO$ is a weird separation oracle for $\alpha F(\mathcal{F},\mathcal{D},\obj)$, Corollary~\ref{cor:CaiDW1} guarantees that $c_3 \geq c_2 \geq \alpha c_1$, and that $\vec{\pi}_3$ can be found using the same parameters as a valid separation orcale for $\alpha F(\mathcal{F},\mathcal{D},\obj)$. So using Theorem~\ref{thm:ellipsoid} and replacing the running time of $SO$ with the running time of $WSO$, we obtain the desired running time. Finally, by Observation~\ref{cor:LPsolvesMDMDPgeneral}, $\vec{\pi}_3$ corresponds to the implicit form of an $\alpha$-approximate solution to MDMDP.
\end{proof}

\subsection{Reduction from MDMDP to SADP}\label{sec:MDMDPtoSADPgeneral}
Based on Corollary~\ref{cor:LPgeneral}, the only obstacle to solving the MDMDP is obtaining a separation oracle for $F(\mathcal{F},\mathcal{D},\mathcal{O})$ (or ``weird'' separation oracle for $\alpha F(\mathcal{F},\mathcal{D},\obj)$). In this section, we again use Theorem~\ref{thm:CaiDW1} to obtain a weird separation oracle for $\alpha F(\mathcal{F},\mathcal{D},\obj)$ using only black box access to an $\alpha$-approximation algorithm for SADP.

Again, we can apply Theorem~\ref{thm:CaiDW1} as long as we understand what it means to compute $\vec{x} \cdot \vec{w}$ in our setting. Recall that $\vec{x}$ is some implicit form $\vec{\pi}^I$, so the direction $\vec{w}$ has components $w_\obj$, $w_i(t_i,t'_i)$ for all $i,t_i,t'_i$, and $W_i(t_i)$ for all $i, t_i$. We first prove the analogue of Proposition~\ref{prop:virtualwelfare} below, showing that $\vec{\pi}^I \cdot \vec{w}$ can be interpreted as maximizing the expected value of a virtual objective.

\mattnote{In the proof of Proposition~\ref{prop:virtualobjective}, as well as the remainder of this section, we will be considering directions that may assign a negative multiplier $w_\obj$ to the objective component of an implicit form. However, we would really like not to have to consider instances of SADP that also have a negative multiplier in front of the objective. Fortunately, this can be achieved without much work. For the remainder of the section, we will denote by the ``corresponding implicit form'' of a mechanism exactly the implicit form of $M$ if $w_\obj > 0$, or the feasible implicit form $(0,\vec{\pi}(M),\vec{P}(M))$ if $w_\obj < 0$. For any feasible implicit form $\vec{\pi}^I_0 = (x,\vec{\pi}(M),\vec{P}(M))$, it is clear that both $(0,\vec{\pi}(M),\vec{P}(M))$ and $(\pi_\obj(M),\vec{\pi}(M),\vec{P}(M))$ are feasible, and that the corresponding implicit form $\vec{\pi}^I$ of $M$ has $\vec{\pi}^I \cdot \vec{w} \geq \vec{\pi}^I_0 \cdot \vec{w}$.}

\begin{proposition}\label{prop:virtualobjective}
Let $\vec{w}$ be a direction in $[-1,1]^{1+\sum_i (|T_i|^2+ |T_i|)}$, and let $\hat{w}_{\obj}=w_{\obj}$ if $w_{\obj}>0$, otherwise $\hat{w}_{\obj}=0$. Define the virtual objective $\obj'_{\vec{w}}$ as:

$$\obj'_{\vec{w}}(\vec{t}',X,P) = \hat{w}_\obj \cdot \obj(\vec{t}',X,P) + \sum_i \sum_{t_i \in T_i} \frac{w_i(t_i,t'_i)}{\Pr[t'_i])}\cdot t_i(X) + \sum_i W_i(t'_i)\mathbb{E}[P_i],$$

Then any mechanism that obtains an $\alpha$-approximation to maximizing the expected value of the virtual objective $\obj'_{\vec{w}}$ corresponds to an implicit form $\vec{\pi}^I$ that is an $\alpha$-approximation to maximizing $\vec{x} \cdot \vec{w}$ over $\vec{x} \in F(\mathcal{F},\mathcal{D},\obj)$.

When $\obj$ is allocation-only, we may remove the pricing rule entirely, getting:

$$\obj'_{\vec{w}}(\vec{t}',X) = \hat{w}_\obj \cdot \obj(\vec{t}',X) + \sum_i \sum_{t_i \in T_i} \frac{w_i(t_i,t'_i)}{\Pr[t'_i])}\cdot t_i(X)$$

When $\obj$ is price-only, and we are looking at the price-only feasible region (for $(\pi_\obj,\vec{P}))$, we may remove the allocation rule entirely, getting:

$$\obj'_{\vec{w}}(P) = \hat{w}_\obj \cdot \obj(P) + \sum_i W_i(t'_i)\mathbb{E}[P_i]$$

When $\obj$ is price-only and we are looking at the feasible region (for $(\vec{\pi})$), we get back the same virtual welfare function used in Proposition~\ref{prop:virtualwelfare} for revenue:

$$\obj'_{\vec{w}}(\vec{t}',X) = \sum_i \sum_{t_i \in T_i} \frac{w_i(t_i,t'_i)}{\Pr[t'_i])}\cdot t_i(X)$$
\end{proposition}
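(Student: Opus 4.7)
The plan is to explicitly expand the inner product $\vec{w}\cdot\vec{\pi}^I(M)$ and then rewrite each piece as an expectation over $\vec{t}'\leftarrow\mathcal{D}$ of a corresponding term in $\obj'_{\vec{w}}$. Writing $\vec{\pi}^I = (\pi_\obj, \vec{\pi}, \vec{P})$ and $\vec{w} = (w_\obj, \vec{w}^{\pi}, \vec{W})$, the inner product breaks as
\[
\vec{w}\cdot\vec{\pi}^I(M) = w_\obj\cdot \pi_\obj + \sum_i\sum_{t_i,t'_i} w_i(t_i,t'_i)\,\pi_i(t_i,t'_i) + \sum_i\sum_{t'_i} W_i(t'_i)\,P_i(t'_i).
\]
The first key step is to handle the middle (virtual-welfare) term: by the definition $\pi_i(t_i,t'_i) = \mathbb{E}_{\vec{t}'_{-i}\leftarrow\mathcal{D}_{-i}}[t_i(A(t'_i;\vec{t}'_{-i}))]$, multiplying and dividing by $\Pr[t'_i]$ lets us fold the inner sum into a single expectation $\mathbb{E}_{\vec{t}'\leftarrow\mathcal{D}}\big[\sum_i\sum_{t_i}\tfrac{w_i(t_i,t'_i)}{\Pr[t'_i]}\,t_i(A(\vec{t}'))\big]$, which is exactly the middle term of $\obj'_{\vec{w}}$. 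The same probability-swapping trick applied to $P_i(t'_i) = \mathbb{E}_{\vec{t}'_{-i}}[P_i(t'_i;\vec{t}'_{-i})]$ turns the third term into $\mathbb{E}_{\vec{t}'\leftarrow\mathcal{D}}[\sum_i W_i(t'_i)\,\mathbb{E}[P_i(\vec{t}')]]$, which is the pricing term of $\obj'_{\vec{w}}$.

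The second key step is to handle the objective component, which is where the definitional asymmetry between $w_\obj$ and $\hat{w}_\obj$ comes in. If $w_\obj \geq 0$, then because feasibility requires only $\pi_\obj \le \pi_\obj(M)$, the optimal choice (the one that makes $\vec{w}\cdot\vec{\pi}^I$ largest) is $\pi_\obj = \pi_\obj(M) = \mathbb{E}_{\vec{t}}[\obj(\vec{t},A(\vec{t}),P(\vec{t}))]$, and $w_\obj\pi_\obj = \hat{w}_\obj\pi_\obj(M)$ as desired. If $w_\obj < 0$, then as noted in the paragraph before the proposition, the ``corresponding implicit form'' $(0,\vec{\pi}(M),\vec{P}(M))$ is still feasible (via the mechanism that runs $M$ but ignores its objective contribution, which is fine since $F$ allows any $\pi_\obj \in [0,\pi_\obj(M)]$) and strictly dominates in inner product; choosing $\pi_\obj = 0$ gives $w_\obj\pi_\obj = 0 = \hat{w}_\obj\cdot\obj(\cdot)$, again matching. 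Combining with the previous step yields $\vec{w}\cdot\vec{\pi}^I(M) = \mathbb{E}_{\vec{t}\leftarrow\mathcal{D}}[\obj'_{\vec{w}}(\vec{t},A(\vec{t}),P(\vec{t}))]$ for the corresponding implicit form of $M$.

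Finally I would close the correspondence in both directions. On one hand, Lemma~\ref{lem:convexregion} and the definition of $F(\mathcal{F},\mathcal{D},\obj)$ say every feasible implicit form arises from some mechanism $M$, so $\max_{\vec{x}\in F(\mathcal{F},\mathcal{D},\obj)}\vec{x}\cdot\vec{w}$ equals the maximum of $\mathbb{E}_{\vec{t}}[\obj'_{\vec{w}}(\vec{t},A(\vec{t}),P(\vec{t}))]$ over all feasible mechanisms. On the other hand, any mechanism $M$ that is an $\alpha$-approximation to this expected virtual objective gives, via its corresponding implicit form, the same approximation guarantee for the linear objective $\vec{x}\cdot\vec{w}$ over $F(\mathcal{F},\mathcal{D},\obj)$. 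The three specialization formulas then drop out immediately: for allocation-only $\obj$ we simply drop the price component of $\vec{w}$ and feasibility collapses to $(\pi_\obj,\vec{\pi})\in F$; for price-only $\obj$, the allocation and price sides of the feasible region decouple (as noted in the preliminaries), so the inner product splits into the two independent virtual objectives stated.

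I expect no deep obstacle; the only subtle point is making sure the $\hat{w}_\obj$ convention is applied correctly so that $w_\obj < 0$ does not mistakenly force us to solve SADP with a negative multiplier on $\obj$ (which would violate the $c_0\geq 0$ requirement in SADP's definition). The ``corresponding implicit form'' device handles this cleanly, and the rest is a bookkeeping exercise in swapping orders of summation.
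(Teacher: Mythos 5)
Your proposal is correct and follows essentially the same route as the paper's proof: decompose $\vec{w}\cdot\vec{\pi}^I$ into the virtual-welfare term (the same calculation as Proposition~\ref{prop:virtualwelfare}), the virtual-revenue term, and the objective term, handle the sign of $w_\obj$ via the ``corresponding implicit form'' device so that $\hat{w}_\obj$ replaces $w_\obj$, and then transfer the $\alpha$-approximation through the resulting equality of maxima. The only difference is that you re-derive the interchange-of-summation step directly rather than citing Proposition~\ref{prop:virtualwelfare}, which is immaterial.
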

\begin{proof}
Proposition~\ref{prop:virtualwelfare} shows that $\sum_i \sum_{t_i,t'_i} w_i(t_i,t'_i)\cdot \pi_i(t_i,t'_i)$ is exactly the virtual welfare of an allocation with implicit form $\vec{\pi}^I$ computed when the virtual type of $t'_i$ is $\sum_{t_i \in T_i}\frac{w_i(t_i,t'_i)}{\Pr[t'_i])}\cdot t_i(\cdot)$. Furthermore, $\sum_i \sum_{t_i} W_i(t_i) P_i(t_i)$ is clearly just a weighted sum of expected payments (which could be interpreted as virtual revenue). Finally, if $w_\obj \leq 0$, any optimal solution may set $\pi_\obj = 0$, so this term will contribute nothing. If $w_\obj > 0$, then any optimal solution will set $\pi_\obj = \pi_\obj(M)$, and this term will contribute exactly $\pi_\obj(M) \cdot w_\obj$.

Therefore, summing all three terms together we see that the optimal implicit form corresponds to some mechanism $M$ that maximizes the expected value of $\hat{w}_\obj \cdot \obj + \text{ virtual welfare } + \text{ virtual revenue }$, which is exactly $\obj'_{\vec{w}}$. Furthermore, it is easy to see that, in fact, for any mechanism $M$, the corresponding implicit form $\vec{\pi}^I_0$ of $M$ satisfies $\vec{\pi}^I_0 \cdot \vec{w} = $ the expected value of $\obj'_{\vec{w}}$ obtained by $M$. Therefore, if $M$ is an $\alpha$-approximation to maximizing $\obj'_{\vec{w}}$ in expectation, the corresponding implicit form is an $\alpha$-approximation to maximizing $\vec{x} \cdot \vec{w}$ over $F(\mathcal{F},\mathcal{D},\obj)$.

That the simplifications hold for allocation-only and price-only objectives is obvious.
\end{proof}

Recall again that Theorem~\ref{thm:CaiDW1} requires an algorithm $\mathcal{A}$ that takes as input a direction $\vec{w}$ and outputs a $\vec{\pi}^I$ with $ \vec{w} \cdot \vec{\pi}^I \geq \alpha \max_{\vec{x} \in F(\mathcal{F},\mathcal{D},\obj)} \{\vec{w} \cdot \vec{x}\}$. By Proposition~\ref{prop:virtualobjective} above, we know that this is exactly asking for a mechanism that obtains an $\alpha$-fraction of the optimal expected value for $\obj'_{\vec{w}}$ over all feasible mechanisms. Next, we provide an analogue to Corollary~\ref{cor:obv} of Section~\ref{sec:revenue} in Corollary~\ref{cor:obvgeneral} below, which relates the per-profile performance of a mechanism $M$ to $\vec{w} \cdot \vec{\pi}^I(M)$.

\begin{corollary}\label{cor:obvgeneral}
Let $M$ be a mechanism that on profile $(t'_1,\ldots,t'_m)$ chooses a (possibly randomized) allocation $X \in \mathcal{F}$ and (possibly randomized) pricing scheme $P$ such that

$$\obj'_{\vec{w}}(\vec{t}',X,P) \geq \alpha \max_{X' \in \mathcal{F},P'} \{\obj'_{\vec{w}}(\vec{t}',X,P)\}$$

Then the corresponding implicit form  $\vec{\pi}^I$ satisfies:

$$\vec{\pi}^I \cdot \vec{w} \geq \alpha \max_{\vec{x} \in F(\mathcal{F},\mathcal{D},\obj)} \{\vec{x} \cdot \vec{w}\}$$
\end{corollary}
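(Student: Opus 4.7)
\textbf{Proof plan for Corollary~\ref{cor:obvgeneral}.} The plan is to invoke Proposition~\ref{prop:virtualobjective} twice: once to rewrite $\vec{\pi}^I\cdot\vec{w}$ for our given mechanism $M$ as the expected value of the virtual objective $\obj'_{\vec{w}}$, and once to rewrite the right-hand side $\max_{\vec{x}\in F(\mathcal{F},\mathcal{D},\obj)}\vec{x}\cdot\vec{w}$ as the optimum of $\mathbb{E}[\obj'_{\vec{w}}]$ over all feasible mechanisms (note: feasibility does not require BIC/IR here, only that outcomes lie in $\mathcal{F}$ and that $\pi_\obj$ is dominated by $\obj$'s expected value). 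The per-profile hypothesis then yields the desired global bound.

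\textbf{Step 1: Rewriting both sides via Proposition~\ref{prop:virtualobjective}.} For the mechanism $M$ in the hypothesis, let $\vec{\pi}^I$ be its corresponding implicit form (with $\pi_{\obj}$-coordinate equal to $\pi_\obj(M)$ if $w_\obj>0$ and to $0$ if $w_\obj\le 0$, so that $\pi_\obj\cdot w_\obj = \hat{w}_\obj\cdot\pi_\obj(M)$). Unrolling $\vec{\pi}^I\cdot\vec{w}$ coordinate by coordinate and using the tower of expectations (as in the proof of Proposition~\ref{prop:virtualobjective}, which essentially does this computation), we get
\[
\vec{\pi}^I\cdot\vec{w} \;=\; \mathbb{E}_{\vec{t}'\sim\mathcal{D}}\bigl[\obj'_{\vec{w}}(\vec{t}',A(\vec{t}'),P(\vec{t}'))\bigr].
\]
The same identity applied to any feasible mechanism $M'$ (with its corresponding implicit form $\vec{\pi}^I(M')$) shows that $\max_{\vec{x}\in F(\mathcal{F},\mathcal{D},\obj)}\vec{x}\cdot\vec{w}$ equals the optimum, over all feasible mechanisms $M'$, of $\mathbb{E}_{\vec{t}'}[\obj'_{\vec{w}}(\vec{t}',A'(\vec{t}'),P'(\vec{t}'))]$.

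\textbf{Step 2: Interchanging the max and the expectation.} Since there are no BIC/IR constraints baked into feasibility, a mechanism maximizing $\mathbb{E}_{\vec{t}'}[\obj'_{\vec{w}}(\vec{t}',A'(\vec{t}'),P'(\vec{t}'))]$ may act independently on each profile, so
\[
\max_{M' \text{ feasible}}\mathbb{E}_{\vec{t}'}[\obj'_{\vec{w}}(\vec{t}',A'(\vec{t}'),P'(\vec{t}'))]
\;=\; \mathbb{E}_{\vec{t}'}\Bigl[\max_{X'\in\mathcal{F},\,P'}\obj'_{\vec{w}}(\vec{t}',X',P')\Bigr].
\]
Now apply the per-profile hypothesis on $M$ inside the expectation and combine with Step 1:
\[
\vec{\pi}^I\cdot\vec{w} \;=\; \mathbb{E}_{\vec{t}'}[\obj'_{\vec{w}}(\vec{t}',A(\vec{t}'),P(\vec{t}'))] \;\ge\; \alpha\,\mathbb{E}_{\vec{t}'}\Bigl[\max_{X',P'}\obj'_{\vec{w}}(\vec{t}',X',P')\Bigr] \;=\; \alpha\!\max_{\vec{x}\in F(\mathcal{F},\mathcal{D},\obj)}\vec{x}\cdot\vec{w}.
\]

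\textbf{Expected obstacle.} The only delicate bookkeeping is the sign of $w_\obj$: when $w_\obj\le 0$, the ``corresponding implicit form'' uses $\pi_\obj=0$ rather than $\pi_\obj(M)$, and the definition of $\obj'_{\vec{w}}$ uses $\hat{w}_\obj=\max(w_\obj,0)$ so that the objective contributes nothing in this case. One must check that this convention is consistent with the LP's feasibility set $F(\mathcal{F},\mathcal{D},\obj)$ (which allows any $\pi_\obj\in[0,\pi_\obj(M)]$), so that our choice is always a legal point of $F(\mathcal{F},\mathcal{D},\obj)$ and cannot hurt the inner product against $\vec{w}$. Once that is observed, Step 1 goes through unchanged and the rest is purely the per-profile-implies-expected argument.
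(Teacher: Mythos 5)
Your proposal is correct and follows essentially the same route as the paper: the paper's proof of Corollary~\ref{cor:obvgeneral} likewise invokes Proposition~\ref{prop:virtualobjective} (in place of Proposition~\ref{prop:virtualwelfare}) to identify $\vec{\pi}^I\cdot\vec{w}$ with the expected virtual objective and then observes that a per-profile $\alpha$-approximation to $\obj'_{\vec{w}}$ yields an $\alpha$-approximation in expectation. Your extra bookkeeping on the sign of $w_\obj$ and the ``corresponding implicit form'' matches the convention the paper sets up just before Proposition~\ref{prop:virtualobjective}, so nothing is missing.
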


\begin{proof}
Exactly the same as Corollary~\ref{cor:obv}, but using Proposition~\ref{prop:virtualobjective} instead of Proposition~\ref{prop:virtualwelfare}: if a mechanism achieves an $\alpha$-fraction of the optimal value of $\obj'_{\vec{w}}$ on every profile, then it clearly obtains an $\alpha$-fraction of the optimal expected value of $\obj'_{\vec{w}}$.
\end{proof}

With Corollary~\ref{cor:obvgeneral}, we now want to study the problem of maximizing virtual welfare on a given profile. This turns out to be exactly an instance of SADP. 

\begin{proposition}\label{prop:SADPvirtualobjective}
Let $\vec{w} \in [-1,1]^{1+\sum_i (|T_i|^2 + |T_i|)}$ with $w_\obj \geq 0$. Let also $t'_i,t_i \in \mathcal{V}$ for all $i, t'_i,t_i$. Then any $X \in \mathcal{F}, P$ that is an $\alpha$-approximation to the SADP($\mathcal{F}$,$\mathcal{V}$,$\obj$) on input:

\begin{align*}
f_1 &= \sum_i \sum_{t_i \in T_i | w_i(t_i,t'_i) > 0} \frac{w_i(t_i,t'_i)}{\Pr[t'_i])}\cdot t_i(X)\\
f_2 &= \sum_i \sum_{t_i \in T_i | w_i(t_i,t'_i) < 0} \frac{w_i(t_i,t'_i)}{\Pr[t'_i])}\cdot t_i(X)\\
g_i &= t'_i\\
c_i &= W_i(t'_i)\\
c_0 &= \hat{w}_\obj
\end{align*}

is also an $\alpha$-approximation for maximizing $\obj'_{\vec{w}}$. That is:

$$\obj'_{\vec{w}}(\vec{t}',X,P) \geq \alpha\cdot\max_{X' \in \mathcal{F},P} \{\obj'_{\vec{w}}(\vec{t}',X',P)\}$$
\end{proposition}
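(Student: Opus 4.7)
The plan is to show that the SADP objective, when instantiated with the choices of $f_1, f_2, g_i, c_i, c_0$ prescribed in the proposition, equals $\obj'_{\vec{w}}(\vec{t}',X,P)$ pointwise (for every feasible $X,P$). Once this identity is established, the stated $\alpha$-approximation guarantee transfers directly between the two optimization problems, since taking a maximum of two identical functions over the same domain yields the same maximum, and the definition of $\alpha$-approximation is the same in both.

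First, I would note that supplying only two functions $f_1, f_2$ forces $k=2$ in the SADP instance, so $j^* \in [k-1] = \{1\}$ is the only admissible index. Thus the SADP objective being maximized (over feasible $X \in \mathcal{F}$ and randomized pricing schemes $P$) simplifies to
\[
c_0 \cdot \obj\bigl((g_1,\ldots,g_m),X,P\bigr) \;+\; \sum_i c_i \,\mathbb{E}[P_i] \;+\; \bigl(f_1(X) - f_2(X)\bigr).
\]
Plugging in $c_0 = \hat{w}_\obj$, $g_i = t'_i$, and $c_i = W_i(t'_i)$, the first two terms become $\hat{w}_\obj \cdot \obj(\vec{t}',X,P)$ and $\sum_i W_i(t'_i)\mathbb{E}[P_i]$, which are exactly the objective-component and virtual-revenue pieces of $\obj'_{\vec{w}}$.

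The nontrivial bookkeeping is in the virtual-welfare term $f_1(X) - f_2(X)$. The idea (matching Proposition~\ref{prop:SADPvirtualwelfare}) is to split the index set $\{(i,t_i) : t_i \in T_i\}$ according to the sign of $w_i(t_i, t'_i)$: those with positive weight are bundled into $f_1$ with coefficient $w_i(t_i,t'_i)/\Pr[t'_i]$, and those with negative weight are bundled into $f_2$ with coefficient $-w_i(t_i,t'_i)/\Pr[t'_i]$ (which is positive, so that $f_2 \in \mathcal{V}^*$, since $\mathcal{V}^*$ is the closure of $\mathcal{V}$ under addition and \emph{positive} scalar multiplication). With this convention,
\[
f_1(X) - f_2(X) \;=\; \sum_i \sum_{t_i \in T_i} \frac{w_i(t_i,t'_i)}{\Pr[t'_i]} \cdot t_i(X),
\]
which is precisely the virtual-welfare component of $\obj'_{\vec{w}}(\vec{t}',X,P)$.

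Combining the three matched terms, the SADP objective coincides with $\obj'_{\vec{w}}(\vec{t}',X,P)$ identically as a function of $(X,P)$, and hence their maxima over feasible $(X',P')$ coincide. Therefore any $(X,P)$ achieving
\[
\text{SADP}(X,P) \;\geq\; \alpha \cdot \max_{X' \in \mathcal{F},\, P'} \text{SADP}(X',P')
\]
satisfies $\obj'_{\vec{w}}(\vec{t}',X,P) \geq \alpha \cdot \max_{X',P'} \obj'_{\vec{w}}(\vec{t}',X',P')$, which is the claim. The main (minor) obstacle is exactly the sign convention I highlighted for $f_2$: one must take negative-weight contributions with a sign flip so that $f_2$ lies in $\mathcal{V}^*$ while still producing the desired signed sum in $f_1 - f_2$; apart from that the argument is a direct term-matching calculation.
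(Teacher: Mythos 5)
Your proposal is correct and follows essentially the same route as the paper: the paper's proof simply checks that $(f_1,f_2,\vec{g},\vec{c},c_0)$ is a valid SADP input and then defers to the term-matching argument of Proposition~\ref{prop:SADPvirtualwelfare}, which is exactly the calculation you spell out (forced $j^*=1$ since $k=2$, matching the objective and virtual-revenue pieces, and recovering the signed virtual-welfare sum as $f_1-f_2$). Your remark about flipping the sign of the negative-weight terms so that $f_2\in\mathcal{V}^*$ is precisely the convention used in Proposition~\ref{prop:SADPvirtualwelfare}, so there is no gap.
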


\begin{proof}
It is clear that $(f_1,f_2,\vec{g},\vec{c},c_0)$ is a valid input to SADP($\mathcal{F}$,$\mathcal{V}$,$\obj$). The remainder of the proof is exactly the same as that of Proposition~\ref{prop:SADPvirtualwelfare}.
\end{proof}

\begin{corollary}\label{cor:importantgeneral}
Let $G$ be any $\alpha$-approximation algorithm for SADP($\mathcal{F}$,$\mathcal{V}$,$\obj$). Let also $M$ be the mechanism that, on profile $(t'_1,\ldots,t'_m)$ chooses the allocation \mattnote{and price rule} $G(f_1,f_2,\vec{g},\vec{c},c_0)$ (where $f_1,f_2,\vec{g},\vec{c}, c_0$ are defined as in Proposition~\ref{prop:SADPvirtualobjective}). Then the corresponding implicit form $\vec{\pi}^I$ satisfies:

$$\vec{\pi}^I \cdot \vec{w} \geq \alpha \max_{\vec{x} \in F(\mathcal{F},\mathcal{D},\obj)} \{\vec{x} \cdot \vec{w}\}$$
\end{corollary}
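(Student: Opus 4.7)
The plan is to prove Corollary~\ref{cor:importantgeneral} by chaining Proposition~\ref{prop:SADPvirtualobjective} and Corollary~\ref{cor:obvgeneral}. The two propositions are set up to dovetail exactly: Proposition~\ref{prop:SADPvirtualobjective} converts an $\alpha$-approximate SADP solution into an $\alpha$-approximate maximizer of $\obj'_{\vec{w}}$ on a given profile, while Corollary~\ref{cor:obvgeneral} aggregates such per-profile guarantees into a guarantee about $\vec{\pi}^I\cdot\vec{w}$ for the corresponding implicit form. So essentially no new ideas are needed, only a careful bookkeeping of definitions.

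First I would fix an arbitrary profile $(t'_1,\ldots,t'_m)$ and define $f_1,f_2,\vec{g},\vec{c},c_0$ from this profile and $\vec{w}$ exactly as in the statement of Proposition~\ref{prop:SADPvirtualobjective}. Calling $G$ on this SADP instance returns a (possibly randomized) allocation $X\in\mathcal{F}$ and price rule $P$ whose associated objective
\[
c_0\cdot\obj(\vec{g},X,P)+\sum_i c_i\,\mathbb{E}[P_i]+\bigl(f_1(X)-f_2(X)\bigr)
\]
is at least an $\alpha$-fraction of the optimum over all feasible $(X',P')$. Unpacking the definitions of $f_1,f_2,\vec{g},\vec{c},c_0$, this is exactly
\[
\obj'_{\vec{w}}(\vec{t}',X,P)\;\ge\;\alpha\cdot\max_{X'\in\mathcal{F},\,P'}\obj'_{\vec{w}}(\vec{t}',X',P'),
\]
which is the hypothesis of Corollary~\ref{cor:obvgeneral}. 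Since $M$ applies this procedure profile-by-profile, the premise of Corollary~\ref{cor:obvgeneral} holds for every profile in the support of $\mathcal{D}$.

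Next I would invoke Corollary~\ref{cor:obvgeneral} directly: the corresponding implicit form $\vec{\pi}^I$ of $M$ (i.e.\ the implicit form in which the $\pi_\obj$ component is replaced by $0$ when $w_\obj<0$) satisfies
\[
\vec{\pi}^I\cdot\vec{w}\;\ge\;\alpha\max_{\vec{x}\in F(\mathcal{F},\mathcal{D},\obj)}\{\vec{x}\cdot\vec{w}\},
\]
which is the desired inequality.

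The only subtlety, and the place I would be most careful, is the role of the truncation $\hat{w}_\obj=\max\{w_\obj,0\}$ and the associated notion of ``corresponding implicit form.'' This matters because the SADP problem statement requires $c_0\ge 0$, so we cannot feed a negative $w_\obj$ directly into $G$; setting $c_0=\hat{w}_\obj$ is what makes the SADP call legal. On the other hand, when $w_\obj<0$ the inner product $\vec{\pi}^I\cdot\vec{w}$ is maximized by taking the objective coordinate as small as possible, i.e.\ $0$, which is exactly why the ``corresponding implicit form'' replaces $\pi_\obj(M)$ by $0$ in that case (and this replacement keeps the implicit form feasible by definition of $F(\mathcal{F},\mathcal{D},\obj)$). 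I would note explicitly that both Proposition~\ref{prop:virtualobjective} and Corollary~\ref{cor:obvgeneral} have already absorbed this truncation into their statements, so invoking them on the output of $G$ is legitimate and yields the claim without further work.
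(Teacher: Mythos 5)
Your proposal is correct and follows exactly the paper's own route: the paper proves Corollary~\ref{cor:importantgeneral} simply by combining Proposition~\ref{prop:SADPvirtualobjective} with Corollary~\ref{cor:obvgeneral}, and your handling of the truncation $\hat{w}_\obj$ and the ``corresponding implicit form'' matches how the paper absorbs that issue into those earlier statements. No gaps.
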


\begin{proof}
Combine Corollary~\ref{cor:obvgeneral} and Proposition~\ref{prop:SADPvirtualobjective}.
\end{proof}

At this point, we would like to just let $\mathcal{A}$ be the algorithm that takes as input a direction $\vec{w}$ and computes the implicit form prescribed by Corollary~\ref{cor:importantgeneral} . Corollary~\ref{cor:importantgeneral} shows that this algorithm satisfies the hypotheses of Theorem~\ref{thm:CaiDW1}, so we would get a weird separation oracle for $\alpha F(\mathcal{F},\mathcal{D},\obj)$. The same situation as in Section~\ref{sec:revenue} arises, so we must use the same approach of~\cite{CaiDW12b} \mattnote{and define $\mathcal{D}'$ to be a uniform distribution over multi-set over polynomially-many profiles sampled from $\mathcal{D}$}. Formally, our algorithm for MDMDP is stated as Algorithm~\ref{alg:solveMDMDPgeneral} below.

\begin{algorithm}[h!]
        \caption{Algorithm to solve MDMDP($\mathcal{F}$,$\mathcal{V}$,$\obj$):}
       \begin{algorithmic}[1]\label{alg:solveMDMDPgeneral}
        \STATE Input: \mattnote{type spaces $T_i$ for $i \in [m]$,}, distributions $\mathcal{D}_i$ over $T_{i}$, an error $\epsilon$, and black-box access to $G$, a (possibly randomized) $\alpha$-approximation algorithm for SADP($\mathcal{F}$,$\mathcal{V}$,$\obj$).
        \STATE Obtain $\mathcal{D}'$ as the uniform distribution over a multi-set of profiles sampled from $\mathcal{D}$ in the manner described in Section~5.2 of~\cite{CaiDW12b}.
        \STATE Modify $G$ to obtain $G'$ as prescribed in Appendix~G.2 of~\cite{CaiDW13}. $G'$ is a deterministic algorithm.
        \STATE Define $\mathcal{A}$ to, on input $\vec{w} \in [-1,1]^{1 + \sum_i (|T_i|^2+|T_i|)}$, compute the implicit form prescribed in Corollary~\ref{cor:importantgeneral} using $G'$ and $\mathcal{D}'$. 
         \STATE Solve the linear program in Figure~\ref{fig:LPMDMDgeneral} using the weird separation oracle obtained from $\mathcal{A}$ using Theorem~\ref{thm:CaiDW1}. Call the output $(\pi_\obj,\vec{\pi},\vec{P})$, and let $\vec{w}_1,\ldots,\vec{w}_l$ be the directions guaranteed by Theorem~\ref{thm:CaiDW1} to be output by the weird separation oracle.
         \STATE Solve the linear system $\vec{\pi}^I = \sum_{j\in[l] }c_j \mathcal{A}(\vec{w}_j)$ to write $\vec{\pi}^{I}$ as a convex combination of $\mathcal{A}(\vec{w}_j)$.
         \STATE Output the following mechanism: First, randomly select a $\vec{w} \in [-1,1]^{1+\sum_i (|T_i|^2+|T_i|)}$ by choosing $\vec{w}_j$ with probability $c_j$. Then, choose the allocation output by $G'$ on the SADP input corresponding to $\vec{w}$.
             \end{algorithmic}
\end{algorithm}

\medskip We now proceed to prove its correctness. The only big remaining step is to relate the bit complexity of the objective, the probability distribution, and the values to the bit complexity of the extreme points of \mattnote{$F(\mathcal{F},\mathcal{D}',\obj)$}. Recall that Corollary~\ref{cor:LPgeneral} only guarantees that the desired LP can be solved in time polynomial in the bit complexity of the extreme points of \mattnote{$F(\mathcal{F},\mathcal{D}',\obj)$}. However, Theorem~\ref{thm:objective} claims an algorithm whose runtime is polynomial in $b$, which upper bounds the bit complexity of $\obj$, $t_i(X)$, and $Pr[t_i]$ for all $i,t_i \in T_i, X \in \mathcal{F}$. It's not hard to imagine that these two bit complexities can't differ by much, and indeed this is the case. Lemma~\ref{lem:lowbitcorners} below states this formally. 

\begin{lemma}\label{lem:lowbitcorners}
If $b$ is an upper bound on the bit complexity of $\obj$, $Pr[t_i]$, and $t_i(X)$, for all $i,t_i \in T_i, X \in \mathcal{F}$, then all extreme points of $F(\mathcal{F},\mathcal{D}',\obj)$ have bit complexity $\poly(b,\sum_i |T_i|)$. 
\end{lemma}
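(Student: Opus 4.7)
The plan is to realize $F(\mathcal{F},\mathcal{D}',\obj)$ as the projection of a polyhedron $\Pi$ of dimension $\poly(b,\sum_i|T_i|)$ whose defining inequalities have bit complexity $\poly(b)$, and then invoke Cramer's rule: vertices of such a polyhedron have coordinate bit complexity polynomial in its dimension and constraint bit complexity, and since every extreme point of a linear projection of a polytope is the image of a vertex of the original polytope, the same bound transfers to $F(\mathcal{F},\mathcal{D}',\obj)$.

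To build $\Pi$, I would exploit that $\mathcal{D}'$ is uniform over a multiset $\mathcal{T}$ of $N=\poly(\sum_i|T_i|)$ profiles, so a mechanism is determined by its behavior on these $N$ profiles. For each $\vec{t}\in\mathcal{T}$ introduce (i) summary variables $r_{\vec{t},i,t_i}\in[0,1]$ representing $\mathbb{E}[t_i(A(\vec{t}))]$; (ii) price variables $p_{\vec{t},i}\in[0,1]$ representing $\mathbb{E}[P_i(\vec{t})]$; and (iii) an auxiliary objective variable $y_{\vec{t}}$ lower-bounded by each of the at most $b$ linear pieces $L_k$ describing the concave $\obj$. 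Per-profile feasibility is the constraint that $(r_{\vec{t}},p_{\vec{t}},y_{\vec{t}})$ lies in the convex hull of its extreme points, which correspond to pure choices $(X\in\mathcal{F},\ p\in\{0,1\}^m)$ and have coordinate bit complexity $\poly(b,m)$: each $t_i(X)$ has bit complexity $b$, each $p_i\in\{0,1\}$, and each $L_k$ evaluated at a Dirac has bit complexity at most $b$ by the paper's definition. The implicit-form variables are then expressed as linear equations in the $r_{\vec{t},i,t_i}$'s, $p_{\vec{t},i}$'s, and $y_{\vec{t}}$'s with coefficients $\Pr_{\mathcal{D}'}[\cdot]$ of bit complexity $O(\log N)$, together with the constraint $0\le\pi_\obj\le\sum_{\vec{t}}\Pr_{\mathcal{D}'}[\vec{t}]\cdot y_{\vec{t}}$.

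Putting this together yields a polyhedron $\Pi$ of dimension $\poly(N,\sum_i|T_i|,m,b)$ whose defining inequalities (including the facet description of each per-profile achievable set, whose facet coefficients have bit complexity $\poly(b,m)$ by polar duality with its low-bit-complexity vertices) all have bit complexity $\poly(b)$, and whose projection onto $(\pi_\obj,\vec{\pi},\vec{P})$ equals $F(\mathcal{F},\mathcal{D}',\obj)$. Cramer's rule then bounds the vertex bit complexity of $\Pi$, and hence the extreme-point bit complexity of its projection, by $\poly(b,\sum_i|T_i|)$.

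The main obstacle is avoiding a variable per element of the potentially exponentially large $\mathcal{F}$ while still accurately capturing the concave $\obj$. The resolution is to track only the per-profile summary statistics $(r_{\vec{t}},p_{\vec{t}})$ together with the auxiliary $y_{\vec{t}}$ rather than the full distribution $A(\vec{t})$, and to enforce feasibility via the facet description of the per-profile achievable polytope in these coordinates (whose facets have bit complexity $\poly(b,m)$ by polar duality). For the concrete objectives in the paper (welfare, revenue, fractional max-min) each $L_k$ depends on $X$ only through the values $t_i(X)$ and on $P$ only through its mean, so the construction simplifies further and no auxiliary per-piece variables beyond $y_{\vec{t}}$ are needed.
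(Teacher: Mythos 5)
There is a genuine gap, and it sits exactly where the real work of the lemma lies: your description of the per-profile feasible set. You assert that the extreme points of the achievable set of triples $(r_{\vec{t}},p_{\vec{t}},y_{\vec{t}})$ are the ``pure choices'' $(X\in\mathcal{F},\,p\in\{0,1\}^m)$ with $y$ given by a linear piece evaluated at a Dirac. This is false precisely because $\obj$ is concave rather than linear: a randomized allocation can achieve a strictly larger objective value than any convex combination of pure-choice points. Concretely, for fractional max-min with two bidders and two outcomes (outcome $A$ worth $1$ to bidder $1$ and $0$ to bidder $2$, outcome $B$ the reverse), every pure choice has $\obj=0$, yet the $50/50$ mixture yields the point $(r_1,r_2,y)=(\tfrac12,\tfrac12,\tfrac12)$, which is an extreme point of the true achievable set and lies outside the convex hull of the pure-choice points. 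Hence the polyhedron $\Pi$ you build projects onto a \emph{strict subset} of $F(\mathcal{F},\mathcal{D}',\obj)$, and the extreme points it misses are exactly the ones the lemma (and the surrounding reduction, which relies on virtual-objective maximizers exploiting concavity) cares about; the polar-duality step inherits the same flaw, since the facet bound is derived from an incorrect vertex set.

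A second, related problem is that the lemma is stated for a general concave $\obj$, whose linear pieces are linear in the full distribution over $\mathcal{F}$ and the prices; they need not factor through the summary statistics $(r_{\vec{t}},p_{\vec{t}})$, so the constraints tying $y_{\vec{t}}$ to the pieces are not even well-defined in your coordinates except for the special objectives you mention at the end. Repairing the argument forces you to confront the per-profile polytope in its native (exponential-dimensional) coordinates: variables $Q(X)$ for $X\in\mathcal{F}$, prices, and an objective variable, with $y\le \vec{c}\cdot(Q,P)$ for each linear piece. The paper's proof does exactly this, observing that besides the constraints $Q(X)\ge 0$ there are only $\poly(b,\sum_i|T_i|)$ other constraints, so at any extreme point all but polynomially many $Q(X)$ vanish and the remaining coordinates solve a small system with bit-$b$ coefficients; the bound on the implicit form then follows by averaging over the polynomially many profiles of $\mathcal{D}'$. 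Your projection/Cramer scaffolding is fine in principle (every extreme point of a projection is the image of a vertex of the pre-image polytope), but to certify the bit complexity of the per-profile vertices --- or of the facets of the projected set --- you would need essentially this counting argument anyway, and it is the step your proposal replaces with an incorrect claim.
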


\begin{prevproof}{Lemma}{lem:lowbitcorners}
By Proposition~\ref{prop:virtualobjective}, we know that all extreme points of\mattnote{ $F(\mathcal{F},\mathcal{D}',\obj)$} correspond to virtual objective maximizers. We also know that virtual objective maximizers simply maximize the virtual objective on every profile. So let's try to understand what the picture looks like for a single profile by figuring out what kinds of allocations can possibly maximize a virtual objective.

For a single profile, recall that the feasible choices combine any randomized allocation in $\Delta(\mathcal{F})$ with any pricing scheme that charges each bidder $i$ some price in $[0,1]$ \mattnote{(recall that this assumption was made w.l.o.g. in the preliminaries of Section~\ref{sec:generalprelim}).} We will represent these constraints as a polytope \mattnote{$F'$ in $|\mathcal{F}| + m + 1$ dimensions using} the following variables: $Q(X)$ for all $X \in \mathcal{F}$ denoting the probability that $X$ is chosen, $P_i$ for all $i \in [m]$ denoting the price paid by bidder $i$, and $V_\obj$, denoting the value of $\obj$. We say that a vector denoting a randomized allocation, price vector, and value of $\obj$ is feasible if it corresponds to an actual distribution over elements of $\mathcal{F}$, charges each bidder a price in $[0,1]$, and correctly computes (or underestimates) the value of $\obj$. Specifically, this corresponds to the following constraints:

\begin{enumerate}
\item $Q(X) \geq 0$ for all $X \in \mathcal{F}$. This guarantees that all probabilities are non-negative. There are $|\mathcal{F}|$ of these constraints.
\item $\sum_{X \in \mathcal{F}} Q(X) = 1$. This guarantees that all probabilities sum to one. There is one of these constraints.
\item $0 \leq P_i \leq 1$ for all $i \in [m]$. This guarantees that all prices are bounded. There are $m$ of these constraints.
\item $0 \leq V_\obj \leq \obj(Q,P)$. This guarantees that the value of $\obj$ is correctly underestimated. There is one of these constraints.
\end{enumerate}


Observe now that all of these constraints are linear except for 4). In fact, because we know that $\obj$ is concave (and therefore $\obj(Q,P) = \min_{\vec{c} \in S} \{\vec{c} \cdot (Q,P)\}$ \mattnote{for some (possibly infinite) set $S$ of linear functions}), we can rewrite 4) as the intersection of several linear constraints:

$$0 \leq V_\obj \leq \vec{c} \cdot (Q,P)\text{ for all $\vec{c} \in S$. There are $|S|$ of these constraints.}$$

We have now explicitly written the halfspaces whose intersection defines $F'$, and therefore all extreme points of $F'$ lie in the intersection of $|\mathcal{F}| + m + 1$ of the corresponding hyperplanes. Observe now that there are two kinds of hyperplanes: those of the form $Q(X) = 0$ (from 1) above), and everything else. The total number of constraints not from 1) above is just \mattnote{$2m + 2 + |S|$}, so the total number of such corresponding hyperplanes is certainly polynomial in $\sum_i |T_i|, |S|$. With this observation, we can bound the bit complexity of extreme points of $F'$. 

Any extreme point is the solution to a system of equations that immediately forces all except for $\poly(\sum_i |T_i|, |S|)$ variables to be $0$, and whose remaining $\poly(\sum_i |T_i|,|S|)$ variables solve a system of $\poly(\sum_i |T_i|,|S|)$ equations whose coefficients all have bit complexity $b$. Therefore, we may conclude\footnote{Recall that we defined the bit complexity $b$ of $\obj$ in a way so that $b \geq |S|$ always.} that all extreme points of $F'$ have coordinates with bit complexity $\poly(b,\sum_i |T_i|)$. Now, it is easy to see that every virtual objective is a linear function over $F'$, and therefore all virtual objective maximizers are extreme points of $F'$. In conclusion, we have shown that on every profile, all virtual objective maximizers correspond to randomized allocations that only assign probabilities and prices of bit complexity $\poly(b,\sum_i |T_i|)$, \mattnote{and whose objective value also has bit complexity $\poly(b,\sum_i |T_i|)$.}

The last step is now to understand how this relates to the bit complexity of implicit forms of mechanisms. We have just shown above that every virtual objective maximizer assigns probabilities and prices of bit complexity $\poly(b,\sum_i |T_i|)$, \mattnote{and obtains an objective value of bit complexity $\poly(b,\sum_i |T_i|)$} on every profile. The expected price paid by bidder $i$ when reporting type $t_i$ is clearly just a convex combination of the prices charged on each profile, where each coefficient corresponds to the probability that the profile is chosen. Each such probability has bit complexity \mattnote{$\poly(b,\sum_i |T_i|)$ (actually it is much smaller because $\mathcal{D}'$ is a uniform distribution)}, and \mattnote{because there are only $\poly(b,\sum_i |T_i|)$ profiles, }the expected price paid by bidder $i$ when reporting $t_i$ in any virtual objective maximizer \mattnote{is the sum of $\poly(b,\sum_i |T_i|)$ terms of bit complexity $\poly(b,\sum_i |T_i|)$. Therefore, the bit complexity of the expected price paid by bidder $i$ when reporting type $t_i$ is $\poly(b,\sum_i |T_i|)$}. Similarly, the expected value of bidder $i$ of type $t'_i$ when reporting type $t_i$ is just a convex combination of the values of $t'_i$ for the randomized allocations awarded on each profile where bidder $i$'s type is $t_i$. As the value of $t'_i$ for any allocation in $\mathcal{F}$ has bit complexity $b$, and for every profile the probabilities assigned to each allocation has bit complexity $\poly(b,\sum_i |T_i|)$, each of these values will also have bit complexity $\poly(b,\sum_i |T_i|)$. Again, the coefficients in the convex combination correspond to the probability that the profile is chosen, which has bit complexity \mattnote{$\poly(b,\sum_i |T_i|)$}, so the expected value of bidder $i$ with type $t'_i$ for reporting $t_i$ in any virtual objective maximizer is also $\poly(b,\sum_i |T_i|)$. Finally, the value of $\pi_\obj$ is just the expected value of $V_\obj$ over all profiles, which is again a convex combination using coefficients of bit complexity \mattnote{$\poly(b,\sum_i |T_i|)$}. So $\pi_\obj$ also has bit complexity $\poly(b,\sum_i |T_i|)$ in  any virtual objective maximizer. We have now shown that every component of the implicit form of any virtual objective maximizer has bit complexity $\poly(b,\sum_i |T_i|)$, and therefore all extreme points of \mattnote{$F(\mathcal{F},\mathcal{D}',\obj)$} have bit complexity $\poly(b,\sum_i |T_i|)$ as desired.

\end{prevproof}

\begin{prevproof}{Theorem}{thm:objective}
\mattnote{The only part of the theorem that is not an obvious consequence of previous lemmas and corollaries} is that the final mechanism output matches the desired implicit form. Recall that each $\vec{\pi}^I_j = \mathcal{A}(\vec{w}_j)$ has a corresponding mechanism $M_j$ with $\vec{\pi}(M_j) = \vec{\pi}_j$, $\vec{P}(M_j) = \vec{P}_j$, and $\pi_\obj(M_j) \geq (\pi_\obj)_j$. The mechanism output is exactly $M = \sum_j c_j M_j$. So the implicit form of this mechanism satisfies $\vec{\pi}(M) = \vec{\pi}$, $\vec{P}(M) = \vec{P}$, and $\pi_\obj(M) \geq \sum_j c_j \pi_\obj(M_j) \geq \pi_\obj$ because $\obj$ is concave. Therefore, the mechanism output is at least as good as the implicit form promises (and perhaps better).

\end{prevproof}

\section{Omitted Proofs from Section~\ref{sec:LPrevenue}}\label{app:LPrevenue}
\begin{prevproof}{Lemma}{lem:convexpolytope}
Every randomized mechanism is a distribution over deterministic mechanisms. Furthermore, the implicit form of a distribution over deterministic mechanisms is just the corresponding convex combination of their implicit forms. So if we let $S$ denote the set of implicit forms of deterministic mechanisms, $F(\mathcal{F},\mathcal{D})$ is contained in the convex hull of $S$. Furthermore, any implicit form that is a convex combination of points in $S$ can be implemented by a mechanism that is the corresponding distribution over the corresponding mechanisms. Therefore, $F(\mathcal{F},\mathcal{D})$ is exactly the convex hull of $S$.
\end{prevproof}

\begin{prevproof}{Observation}{obs:LPsolvesMDMDP}
It is clear that any point output can be interpreted as an implicit form. It is also clear that the feasible region is exactly the set of feasible, BIC, IR implicit forms. It is also clear that the objective function evaluated at a point is just the expected revenue of the corresponding implicit form. Finally, it is clear that every feasible, BIC, IR mechanism has a feasible, BIC, IR implicit form. So the optimal achievable value of the LP's objective function within the feasible region is exactly the expected revenue of the optimal feasible, BIC, IR mechanism.
\end{prevproof}

\begin{prevproof}{Corollary}{cor:LP}
It is clear that the variable count is correct, and that the BIC and IR constraints are linear. By Lemma~\ref{lem:convexpolytope}, $F(\mathcal{F},\mathcal{D})$ is a convex polytope, and therefore the entire feasible region is a convex polytope. It is also clear that the objective function is linear. \mattnote{Finally, it is clear that if $b$ upper bounds the bit complexity of $Pr[t_i]$ and $t_i(X)$ for all $i,t_i \in T_i, X \in \mathcal{F}$, then all extreme points of $F(\mathcal{F},\mathcal{D})$ have bit complexity $\poly(b,\sum_i |T_i|)$. }

That an $\alpha$-approximate solution can be found using $WSO$ is a consequence of Corollary~\ref{cor:CaiDW1}. Denote by $\vec{\pi}_1$ the optimal solution to the LP in Figure~\ref{fig:LPMDMD}, $\vec{\pi}_2$ the optimal solution after replacing $F(\mathcal{F},\mathcal{D})$ with $\alpha F(\mathcal{F},\mathcal{D})$ and $\vec{\pi}_3$ the optimal solution after replacing $F(\mathcal{F},\mathcal{D})$ with $WSO$ for $\alpha F(\mathcal{F},\mathcal{D})$. Denote by $c_1,c_2$, and $c_3$ the expected revenue obtained by each. It is easy to see that $\alpha\cdot\vec{\pi}_1$ is a feasible solution to the LP using $\alpha F(\mathcal{F},\mathcal{D})$: $\alpha\cdot\vec{\pi}_1$ is in $\alpha F(\mathcal{F},\mathcal{D})$ by definition, and the BIC and IR constraints are invariant to scaling. Furthermore, it is clear that the expected revenue of $\alpha\cdot\vec{\pi}_1$ is exactly $\alpha\cdot c_1$. We may therefore conclude that $c_2 \geq \alpha\cdot c_1$. As $WSO$ is a weird separation oracle for $\alpha F(\mathcal{F},\mathcal{D})$, Corollary~\ref{cor:CaiDW1} guarantees that $c_3 \geq c_2 \geq \alpha\cdot c_1$, and that $\vec{\pi}_3$ can be found using the same parameters as a valid separation oracle for $\alpha F(\mathcal{F},\mathcal{D})$. So using Theorem~\ref{thm:ellipsoid} and replacing the running time of $SO$ with the running time of $WSO$, we obtain the desired running time. Finally, by Observation~\ref{obs:LPsolvesMDMDP}, $\vec{\pi}_3$ corresponds to the implicit form of an $\alpha$-approximate solution to MDMDP.
\end{prevproof}

\section{Omitted Details from Section~\ref{sec:MDMDPtoSADP}}\label{app:MDMDPtoSADP}

\begin{prevproof}{Proposition}{prop:virtualwelfare}
\begin{align*}
\vec{w} \cdot \vec{x} =  \sum_{i\in[m]}\sum_{t_i\in T_{i},t'_i \in T_{i}} \pi_i(t_i,t'_i) w_i(t_i,t'_i) = \sum_{i\in[m]}\sum_{t_i\in T_{i},t'_i \in T_{i}} \pi_i(t_i,t'_i) (w_i(t_i,t'_i)/\Pr[t'_i]) \Pr[t'_i]
\end{align*}

Now, consider any mechanism $M$ with interim form $\vec{\pi}$ and let $X_i(t'_i)$ denote the (randomized) interim allocation that bidder $i$ receives when reporting type $t'_i$ to $M$. Observe that $\pi_i(t_i,t'_i)$ is just $t_i(X_i(t'_i))$, and does not depend on the specific choice of $M$ or $X_i(t'_i)$. So after making this substitution, we have:

\begin{equation}\label{eq:dotproduct}
\vec{w} \cdot \vec{\pi} = \sum_{i\in[m],t'_i\in T_{i}} \Pr[t'_i] \sum_{t_i\in T_{i}} t_i(X_i(t'_i)) (w_i(t_i,t'_i)/\Pr[t'_i])
\end{equation}

Now, $\sum_{t_i\in T_{i}} (w_i(t_i,t'_i)/\Pr[t'_i])t_i(\cdot)$ is a function that takes as input an interim allocation $X_i(t'_i)$ and outputs a value $\sum_{t_i\in T_{i}} (w_i(t_i,t'_i)/\Pr[t'_i])t_i(X_i(t'_i))$. We may think of $\sum_{t_i\in T_{i}} (w_i(t_i,t'_i)/\Pr[t'_i])t_i$ as the \emph{virtual type} of bidder $i$ with type $t'_i$, and $\sum_{t_i\in T_{i}} (w_i(t_i,t'_i)/\Pr[t'_i])t_i(X_i(t'_i))$ as the virtual value of that virtual type for the allocation $X_i(t'_i)$. Under this interpretation, $\vec{w} \cdot \vec{x}$ is summing over all bidders $i$ and all types $t'_i$, the expected virtual value obtained by $t'_i$, which is exactly the expected virtual welfare of a mechanism with interim form $\vec{\pi}$ computed with respect to the desired virtual types.
\end{prevproof}

\begin{prevproof}{Corollary}{cor:obv}
By Proposition~\ref{prop:virtualwelfare}, we know that $\vec{x} \cdot \vec{w}$ is exactly the virtual welfare of a mechanism with implicit form $\vec{x}$. A mechanism that obtains a $\alpha$-fraction of the optimal virtual welfare on every profile clearly obtains a $\alpha$-fraction of the optimal expected virtual welfare in expectation.
\end{prevproof}

\begin{prevproof}{Proposition}{prop:SADPvirtualwelfare}
First, it is clear that $(f_1,f_2)$ is a valid input to SADP($\mathcal{F}$,$\mathcal{V}$), as $f_1,f_2 \in \mathcal{V}^*$. Furthermore, because only two functions are input to this instance of SADP, any $X$ that is an $\alpha$-approximation necessarily satisfies:

\begin{align*}
& \left(\sum_i \sum_{t_i| C_i(t_i) > 0} C_i(t_i) t_i(X) - \sum_i \sum_{t_i | C_i(t_i) < 0} - C_i(t_i) t_i(X)\right)\\ 
\geq &\alpha\cdot\max_{X' \in \mathcal{F}} \left\{\left(\sum_i \sum_{t_i| C_i(t_i) > 0} C_i(t_i) t_i(X') - \sum_i \sum_{t_i | C_i(t_i) < 0} - C_i(t_i) t_i(X')\right)\right\}.
\end{align*}
It is easy to see that the above equation can be rearranged to the desired form.
\end{prevproof}

As noted in Section~\ref{sec:MDMDPtoSADP}, at this point, we are almost ready to give our algorithm. We would like to just let $\mathcal{A}$ be the algorithm that takes as input a direction $\vec{w}$ and computes the implicit form prescribed by Corollary~\ref{cor:important}. Corollary~\ref{cor:important} shows that this algorithm satisfies the hypotheses of Theorem~\ref{thm:CaiDW1}, so we would get a weird separation oracle for $\alpha F(\mathcal{F},\mathcal{D})$. We noted in Section~\ref{sec:MDMDPtoSADP} that this could not be done trivially in an efficient manner, and elaborate below on the required techniques to overcome this developed in~\cite{CaiDW12b,CaiDW13}.

First, computing the implicit form prescribed in Corollary~\ref{cor:important} would require enumerating every profile in the support of $\mathcal{D}$ (and there are exponentially many of them). The obvious approach to cope with this would be to just approximate the implicit form by sampling polynomially many profiles from $\mathcal{D}$. Unfortunately again, this approach does not work as the inconsistent error due to sampling causes the geometry of LP solvers to fail. Luckily, however, a similar problem arises and is solved in~\cite{CaiDW12b}. Simply put, the approach is to take polynomially many samples from $\mathcal{D}$ to form a multi-set $S$, then define a new distribution $\mathcal{D}'$ that samples a profile uniformly at random from $S$. $F(\mathcal{F},\mathcal{D}')$ is still well-defined, and one might expect that it should look a lot like $F(\mathcal{F},\mathcal{D})$ if enough samples are taken. {This is indeed the case, as is and shown formally in Section~5.2 of~\cite{CaiDW12b} (to which we refer the reader for the complete details of the approach).}

Second, even after enumerating every profile, we would still have to enumerate the randomness used in the (possibly randomized) SADP solutions output by $G$. {We solve this problem by first observing that we can run polynomially many independent trials of $G$ to obtain an algorithm $G'$ that obtains an $(\alpha-\epsilon/\poly(m))$-approximation with very high probability. Then, we can fix the randomness used ahead of time and take a union bound so that with very high probability, $G'$ obtains an $(\alpha - \epsilon/\poly(m))$-approximation every time it's queried during the algorithm. As the randomness is fixed ahead of time, we may just treat $G'$ as a deterministic algorithm. This process is discussed formally in Appendix~G.2 of~\cite{CaiDW13} (to which we refer the reader for the complete details of the approach).}

After these two modifications, we now have an algorithm that can find an approximately optimal implicit form. That's great, but how can we turn this into an actual mechanism? Luckily, Theorem~\ref{thm:CaiDW1} answers this for us. Whatever implicit form $\vec{\pi}$ is output was deemed feasible by the weird separation oracle. Theorem~\ref{thm:CaiDW1} guarantees then that the weird separation oracle found explicit directions, $\vec{w}_1,\ldots,\vec{w}_l$ such that $\vec{\pi}$ lies within the convex hull of the $\mathcal{A}(\vec{w}_j)$. But recall that $\mathcal{A}(\vec{w}_j)$ is just the implicit form of the mechanism that runs $G'$ on every profile, using as input the virtual types corresponding to $\vec{w}$. This exactly says that, if $\vec{\pi} = \sum_j c_j \mathcal{A}(\vec{w}_j)$, the mechanism that first samples a virtual transformation $\vec{w}$, choosing $\vec{w}_j$ with probability $c_j$, and then runs $G'$ using these virtual types has implicit form exactly $\vec{\pi}$. With this discussion in mind, we are now ready to state formally our reduction from MDMDP to SADP:
\begin{algorithm}[ht]
        \caption{Algorithm to solve MDMDP($\mathcal{F}$,$\mathcal{V}$):}
      
    \begin{algorithmic}[1]\label{alg:solveMDMDP}
        \STATE Input: \mattnote{type spaces $T_i$ for $i \in [m]$,} distributions $\mathcal{D}_i$ over $T_{i}$, an error $\epsilon$, and black-box access to $G$, a (possibly randomized) $\alpha$-approximation algorithm for SADP($\mathcal{F}$,$\mathcal{V}$).
        \STATE {Obtain $\mathcal{D}'$ as the uniform distribution over a multi-set of profiles sampled from $\mathcal{D}$ in the manner described in Section~5.2 of~\cite{CaiDW12b}.}
        \STATE Modify $G$ to obtain $G'$ as prescribed in {Appendix~G.2 of~\cite{CaiDW13}}. $G'$ is a deterministic algorithm.
        \STATE Define $\mathcal{A}$ to, on input $\vec{w} \in [-1,1]^{\sum_i |T_i|^2}$, compute the implicit form prescribed in Corollary~\ref{cor:important} using $G'$ and $\mathcal{D}'$. 
         \STATE Solve the linear program in Figure~\ref{fig:LPMDMD} using the weird separation oracle obtained from $\mathcal{A}$ using Theorem~\ref{thm:CaiDW1}. Call the output $(\vec{\pi},\vec{P})$, and let $\vec{w}_1,\ldots,\vec{w}_l$ be the directions guaranteed by Theorem~\ref{thm:CaiDW1} to be output by the weird separation oracle.
         \STATE Solve the linear system $\vec{\pi} = \sum_{j\in[l] }c_j \mathcal{A}(\vec{w}_j)$ to write $\vec{\pi}$ as a convex combination of $\mathcal{A}(\vec{w}_j)$.
         \STATE Output the following mechanism: First, randomly select a $\vec{w} \in [-1,1]^{\sum_i |T_i|^2}$ by choosing $\vec{w}_j$ with probability $c_j$. Then, choose the allocation output by $G'$ on the SADP input corresponding to $\vec{w}$.
             \end{algorithmic}
\end{algorithm}
\section{Omitted Proofs from Section~\ref{sec:maxmin}}\label{app:maxmin}

\begin{prevproof}{Observation}{obs:concave}
Consider two allocations $X_1$ and $X_2$. Then:
\begin{align*}
FMMF(\vec{t},cX_1 + (1-c)X_2) &= \min_i \{t_i(cX_1+(1-c)X_2)\} \\
&\geq c \cdot \min_i \{t_i(X_1)\} + (1-c) \cdot \min_i \{t_i(X_2)\}\\
&=c\cdot FMMF(\vec{t},X_1) + (1-c)\cdot FMMF(\vec{t},X_2)
\end{align*}
\end{prevproof}

\begin{prevproof}{Observation}{obs:LPsolvesmaxmin}
It is easy to see that the objective function is less than $\obj'(\vec{g},X,f,f')$, because $O$ is less than $FMMF(\vec{g},X)$, and the rest of $\obj'(\vec{g},X,f,f')$ is computed correctly in the objective function. It is also easy to see that the maximum value of the LP is exactly $\max_{X \in \Delta(\mathcal{F})}\{\obj'(\vec{g},X,f,f')\}$. Putting these together proves the observation.
\end{prevproof}

\begin{prevproof}{Corollary}{cor:LPmaxmin}
It is clear that the variable count is correct, and that the feasible region is a convex polytope. It is also clear that the objective function is linear. The remainder of the proof is nearly identical to that of Corollary~\ref{cor:LP} but is included below for completeness.

That an $\alpha$-approximate solution can be found using $WSO$ is a consequence of Corollary~\ref{cor:CaiDW1}. Denote by $\vec{X}_1$ the optimal solution to the LP in Figure~\ref{fig:LPmaxmin}, $\vec{X}_2$ the optimal solution after replacing $\Delta(\mathcal{F})$ with $\alpha \Delta(\mathcal{F})$ and $\vec{X}_3$ the optimal solution after replacing $\Delta(\mathcal{F})$ with $WSO$ for $\alpha \Delta(\mathcal{F})$. Denote by $c_1,c_2$, and $c_3$ the value of the objective function at each. It is easy to see that $\alpha\cdot\vec{X}_1$ is a feasible solution to the LP using $\alpha \Delta(\mathcal{F})$: $\alpha\cdot\vec{X}_1$ is in $\alpha \Delta(\mathcal{F})$ by definition, and the remaining constraints are invariant to scaling. Furthermore, it is clear that the value of the objective function evaluated at $\alpha\cdot\vec{X}_1$ is exactly $\alpha\cdot c_1$. We may therefore conclude that $c_2 \geq \alpha\cdot c_1$. As $WSO$ is a weird separation oracle for $\alpha \Delta(\mathcal{F})$, Corollary~\ref{cor:CaiDW1} guarantees that $c_3 \geq c_2 \geq \alpha\cdot c_1$, and that $\vec{X}_3$ can be found using the same parameters as a valid separation orcale for $\alpha \Delta(\mathcal{F})$. So using Theorem~\ref{thm:ellipsoid} and replacing the running time of $SO$ with the running time of $WSO$, we obtain the desired running time. Finally, by Observation~\ref{obs:LPsolvesmaxmin}, $\vec{X}_3$ corresponds to an $\alpha$-approximate solution to maximizing $\obj'(\vec{g},X,f,f')$ over $\Delta(\mathcal{F})$.
\end{prevproof}

\begin{prevproof}{Observation}{obs:obvious}
We can write the expected weight of a set sampled from $X$ $\sum_{i,j} \Pr[(i,j) \in X]w_{ij}$, which is exactly $\vec{w} \cdot \vec{X}$.
\end{prevproof}

\bibliographystyle{abbrv}
\bibliography{costasbibold}

\end{document}